\documentclass[a4paper,twocolumn,16pt,accepted=2024-03-30]{quantumarticle}
\pdfoutput=1

\usepackage[utf8]{inputenc}
\usepackage[english]{babel}
\usepackage[T1]{fontenc}
\usepackage{amsmath}
\usepackage{hyperref}[breaklinks=true]
\usepackage{tikz}
\usepackage{lipsum}
\usepackage[english]{babel}
\usepackage{graphicx}					
\usepackage[pdftex]{epsfig}
\usepackage{ragged2e}
\usepackage[caption=false]{subfig}
\usepackage{amsmath,amssymb}
\usepackage{dsfont}
\usepackage{siunitx}
\usepackage{mathtools}
\usepackage{amsmath}
\usepackage{amssymb}
\usepackage{amsthm}

\usepackage{algorithm}
\usepackage{algpseudocode}

\algblock{Input}{EndInput}
\algnotext{EndInput}
\algblock{Output}{EndOutput}
\algnotext{EndOutput}
\newcommand{\Desc}[2]{\State \makebox[2em][l]{#1}#2}

\usepackage{algorithm}
\usepackage{algorithmicx}
\usepackage{algpseudocode}

\makeatletter

\newenvironment{abstractalgorithm}[1][htb]{%
    \floatname{algorithm}{Abstract algorithm}
   \begin{algorithm}[#1]%
  }{\end{algorithm}}\newenvironment{quantumalgorithm}[1][htb]{%
    \floatname{algorithm}{Quantum algorithm}
   \begin{algorithm}[#1]%
  }{\end{algorithm}}
\makeatother

\usepackage{eufrak}
\DeclareMathAlphabet{\mathpzc}{OT1}{pzc}{m}{it}


\usepackage{dsfont}
\newcommand{\1}{\mathds{1}}
\newcommand{\id}{\1}
\newcommand{\R}{\mathbb{R}}
\newcommand{\CC}{\mathbb{C}}


\newcommand{\tr}{\mathrm{tr}}

\newcommand{\X}{X}
\newcommand{\Y}{Y}
\newcommand{\Z}{Z}

\newcommand{\J}{\mathcal{J}}

\newcommand{\li}{\mathcal{L}}

\newcommand{\W}{\mathcal{W}}
\renewcommand{\H}{\mathcal{H}}
\newtheorem{theorem}{Theorem}
\newtheorem{lemma}[theorem]{Lemma}
\newtheorem{proposition}[theorem]{Proposition}
\newtheorem{definition}[theorem]{Definition}
 \newtheorem{corollary}[theorem]{Corollary}

\renewcommand{\i}{\ensuremath\mathrm{i}}
\newcommand{\ket}[1]{\left.\left|{#1}\right.\right\rangle}

\newcommand{\bra}[1]{\left.\left\langle{#1}\right.\right|}

\newcommand{\ketbra}[2]{\ket{#1} \!\! \bra{#2}}

\newcommand{\sandwich}[3]
  {\left\langle  #1 \right| #2 \left| #3 \right\rangle}



\newcommand\n{\hat n}

\DeclareMathAlphabet{\mathpzcc}{OT1}{pzc}{m}{it}
\DeclareMathAlphabet{\mathpzc}{T1}{pzc}{m}{it}{\huge}
\DeclareMathAlphabet{\mathpzg}{T1}{pzc}{m}{n}{\huge}

  \newcommand{\p} {\mathpzcc{p}}

  \newcommand{\w} {\mathpzcc{w}}


\def\d{\text{d}}

\renewcommand{\u}[1]{\underline {#1}}
\def\H{\hat H}

\renewcommand\L{\hat L}
\newcommand{\qu}{\hat Q}
\renewcommand{\X}{\hat X}
\renewcommand{\Y}{\hat Y}
\renewcommand{\Z}{\hat Z}
\renewcommand\d[1]{\Delta(#1)}
\renewcommand\o[1]{\sigma(#1)}
\newcommand\etak[1]{\eta^{\mtiny{(#1)}}}
\def\w{\hat W}
\def\h{\hat H}
\def\D{\hat D}
\def\U{\hat{ \mathcal U}}
\def\H{{\hat{ \mathcal H}}}
\def\W{\hat{ \mathcal W}}
\def\dell{{\delta\ell}}
\def\dell{{\mathpzc {S}}}
\def\dell{{s}}
\def\S{\mathbf{s}}

\def\p{\Z}

\def\Hell{{{\H}_\ell}}

\def\Well{{\W_\ell}}

\def\oHell{\o{\H_\ell}}

\def\dHell{\d{\H_\ell}}

\def\Uell{{\U_\ell}}
\def\u{{\hat U}}

\def\Q{\hat Q}
\def\D{\hat D}
\def\Dbf{\mathbf{\hat D}}

\def\wcan{\w^{(\mathrm{can})}}
\def\VDBI{\mathcal T}
\def\UDBI{\mathcal V}
\def\RDBI{\mathcal R}
\def\PDBI{\mathcal P}
\def\ugci{\hat U^{(\mathrm {GCI})}}

\def\a{\hat a}
\def\I{\mathcal I}
\def\delln{{\dell^{\mtiny{(N)}}}}
\def\elln{\ell^{\mtiny{(N)}}}
\def\wn{\w^{\mtiny{(N)}}}
\def\wnm{\w^{\mtiny{(N-1)}}}
\def\un{\u^{\mtiny{(N)}}}
\def\unm{\u^{\mtiny{(N-1)}}}
\def\hn{{\h^{\mtiny{(N)}}}}
\def\hnm{{\h^{\mtiny{(N-1)}}}}

\def\dellnm{\dell^{\mtiny{(N-1)}}}
\def\D{\mathcal D}
\def\Z{\hat Z}
\def\Pr{\mathrm{Pr}}
\newcommand\sntc[2]{\Delta^{#1}(#2)}
\def\snt{\sntc n\ell}

\def\dtn {\mathrm{d}^ns}
\def\dr {\mathrm{d} r}

\def\dd {\mathrm{d}}
\def\bc{\{0,1\}^{\times L}}
\def\zm{\Z_\mu}
\newcommand\zmc[1]{\Z_{\etak {#1}}}
\def\xn{\X_\nu}
\def\zmp{\Z_{\mu'}}
\def\xnp{\X_{\nu'}}
\def\mnbc{{\mu,\nu \in \bc}}

\def\hmn{\overline {H}_{\mu,\nu}}

\def\A{\hat K}
\def\B{\hat J}

\def\linops{{\mathcal L(\CC^{\times D})}}
\def\A{\hat A}
\def\B{\hat B}
\def\C{\hat C}
\def\D{\hat D}
\def\NC{\mathcal N}
\def\J{J}
\def\n{K}
\def\elk{\ell_k}
\def\elkm{{{\ell_{k-1}}}}

\def\v{\hat V}
\def\vgc{\v^{\mtiny{\mathrm{(GC)}}}}

\def\vd{\v^{\mtiny{(\Delta)}}}
\def\elk{\ell_k}
\def\elkm{\ell_{k-1}}
\def\wz{{\w^{\mtiny{(Z)}}}}
\def\wmu{{\w^{\mtiny{(\mu)}}}}
\def\dzz{\Delta^{\mtiny{(Z)}}}
\def\dmu{\overline{\Delta^{\mtiny{(\eta)}}}}

\def\b{B}

\def\J{\hat J}

\def\sgd{$\sigma$-decreasing }
\def\pch{\hat Q}
\newcommand\del[1]{}
\newtheorem*{theorem*}{Theorem}

\newcommand\mtiny[1]{\mbox{\tiny\ensuremath{#1}}}
\newcommand\msmall[1]{\mbox{\small\ensuremath{#1}}}
\usepackage{mathcomp}
\setcounter{secnumdepth}{2} 
\begin{document}

\title{Double-bracket quantum algorithms for diagonalization}
\author{Marek Gluza}
\orcid{0000-0003-2836-9523}
\email{marekludwik.gluza@ntu.edu.sg}
\affiliation{School of Physical and Mathematical Sciences, Nanyang Technological University, 21 Nanyang Link, 637371 Singapore, Republic of Singapore}

\begin{abstract}
This work proposes double-bracket iterations as a framework for obtaining diagonalizing quantum circuits. 
Their implementation on a quantum computer consists of interlacing evolutions generated by the input Hamiltonian with diagonal evolutions which can be chosen variationally.
No qubit overheads or controlled-unitary operations are needed but the method is recursive which makes the circuit depth grow exponentially with the number of recursion steps. 
To make near-term implementations viable, the proposal includes  optimization of diagonal evolution generators and of recursion step durations. Indeed, thanks to this numerical examples show that the expressive power of double-bracket iterations suffices to approximate  eigenstates of relevant quantum models with few recursion steps.
Compared to brute-force optimization of unstructured circuits double-bracket iterations do not suffer from the same trainability limitations.
Moreover, with an implementation cost  lower than required for quantum phase estimation they are more suitable for near-term quantum computing experiments.
More broadly, this work opens a pathway for constructing purposeful quantum algorithms based on so-called double-bracket flows also for tasks different from diagonalization and thus enlarges the quantum computing toolkit geared towards practical physics problems.
\end{abstract}

\maketitle

Studying quantum many-body systems is an area where quantum computing may lead to practical advances outside the scope of what we can compute numerically.
We often gain physics understanding by analyzing eigenstates and eigenvalues of quantum models so quantum algorithms for diagonalization could be key for achieving new insights.

To approximate eigenstates on a quantum computer, a basic idea is to variationally find appropriate parameters for a sequence of quantum gates.
However, the optimization needed for that is made difficult by a phenomenon referred to as barren plateaus, see Ref.~\cite{KishorRevModPhys.94.015004} for review and Refs.~\cite{PhysRevLett.127.120502,stilck2021limitations} for relevant NP hardness results.
Unstructured ansatzae, e.g. circuits composed of CNOT gates and single-qubit rotations, have been studied extensively but there are very few exceptional cases when it has been possible to find parametrizations that generalize to system sizes larger than a handful of qubits. 
Recently, a more algorithmic approach  based on the classical Lanczos algorithm~\cite{lanczos1950iteration}, widely employed for numerical diagonalization, has found  application in a quantum device~\cite{motta2020determining}.
Its innovation consists in implementing imaginary-time evolutions using accessible unitary operations.
Ref.~\cite{kokail2019self} provides another example for experimental preparation of eigenstate approximations for relatively large systems.
Similarly to the approach taken here, evolutions under the input Hamiltonian were composed with simple evolutions to achieve eigenstate approximations. 
As we will see it is possible to use analytical means to  guide such eigenstate preparation as opposed to optimization.

This work will establish that the Głazek-Wilson-Wegner (GWW) flow~\cite{PhysRevD.48.5863,PhysRevD.49.4214,wegner1994flow} can play a conceptual role for quantum computing, namely it provides a feasible solution to the task of compiling diagonalizing quantum circuits.
Applications of GWW flow in condensed-matter physics have been explored using classical computers  and are showcased in the monograph by Kehrein~\cite{kehrein_flow}, see Ref.~\cite{wegner2006flow} for a concise review.
The GWW flow is an example of non-linear differential equations called double-bracket flows~\cite{deift1983ordinary,BROCKETT199179,Chu_iterations} whose mathematics is covered in the monograph by Helmke and Moore~\cite{helmke_moore_optimization}.

Having near-term quantum computing applications in mind,  we will depart from the GWW flow.
Instead, we will consider double-bracket iterations which can recover continuous flows as a special limit but give flexibility to attempt lowering implementation cost.
It appears difficult to characterize the precise efficacy of variational double-bracket iterations analytically but numerical simulations show that just a handful of recursion steps can yield surprisingly good approximations of low-energy eigenstates of the quantum Ising model.

The manuscript is structured as follows.
Sec.~\ref{VDB} \emph{i)} defines general double-bracket iterations, \emph {ii)} expounds on data structures for diagonalization in quantum computing, \emph{iii)} formulates the specific diagonalization double-bracket iteration ansatz, \emph{iv)} proves its diagonalizing monotonicity, \emph{v)} discusses the quantum circuit components needed to implement double-bracket iterations and \emph{vi)} formalizes the quantum algorithm by means of a recursive transpiling algorithm.
More broadly, Sec.~\ref{VDB} demonstrates that in the double-bracket approach it is clear why and how diagonal-dominance iteratively increases.

The recursive character of double-bracket iterations leads to an exponential runtime of the quantum algorithms in the number of iteration steps.
Sec.~\ref{GWWnumsec} explores numerically what can be achieved with a small number of recursion steps and how to reduce their number.
It demonstrates that few steps of a diagonalization double-bracket iteration suffice to achieve relevant state preparations.
Double-bracket iterations can be approximated by quantum circuits using Hamiltonian simulation \cite{Su}.
Simulating evolution under input Hamiltonians is being actively explored in experiments~\cite{martinez2016real,google2020hartree,somhorst2021quantum} and so the proposed quantum algorithm lends itself towards near-term experiments.
In particular, no controlled-unitary operations are needed.

Sec.~\ref{GWWreview} focuses on continuous double-bracket flows, in particular a variant of a double-bracket iteration is proposed which converges to the GWW flow.
The obtained runtime is not efficient but the presented theory should be  relevant  for anyone wishing to explore quantum computing applications of double-bracket flows for tasks other than diagonalization.

Sec.~\ref{context} discusses double-bracket iterations in context of other approaches to constructing quantum algorithms.
In particular quantum dynamic programming~\cite{QDP} can reduce the circuit depth of recursive double-bracket iterations, thus potentially strengthening their role for future quantum computing.
Finally, Sec.~\ref{conclusions} states the specific conclusions established in this work and suggests an outlook of double-bracket flows for scientific quantum computing.

\section{Double-bracket quantum algorithms for diagonalization}
\label{VDB}
This section will propose double-bracket quantum algorithms for diagonalization.
They will be derived from double-bracket iterations.
While similar to discretizations of double-bracket flows, double-bracket iterations allow for more flexibility.
We will explore what happens if  the individual evolution steps have relatively long durations.
Additionally, we will retain a `double-bracket' form of rotation generators which is similar to flows but will consider it as an ansatz for variational purposes.
We will introduce flows only later in Sec.~\ref{GWWreview} because they are less relevant for near-term quantum computing.

\paragraph{Definition of double-bracket iterations.}
Let $\h_0$ be a Hamiltonian and $\D_0,\D_1,\ldots$ be a sequence of hermitian diagonal operators.
We define the respective diagonalization \emph{double-bracket iteration} (DBI) by the recursion starting with $k=0$ and recursion step 
\begin{align}
  \h_{k+1}=   e^{s_k \w_{k}}\h_k e^{-s_k \w_{k}} \ ,
  \label{eq:hk}
\end{align}
where $\w_k$ is given by
\begin{align}
  \w_{{k}} = [\D_k,{\h_{k}}]\ .
  \label{eq:wkp}
\end{align}
Here, the step durations $s_k$ can be chosen variationally for the considered task.
Sec.~\ref{GWWnumsec} will provide numerical examples. Sec.~\ref{GWWreview} will discuss the relation to the less relevant for quantum computing continuous double-bracket flows which can be recovered in the limit of an equidistant and infinitesimal step duration~$s_k$.

When the bracket~\eqref{eq:wkp} involves a diagonal operator then we will say that the DBI is a diagonalization DBI because we will see that such iterations give rise to circuits well-suited for that task.
Note, that non-diagonal (but hermitian) $\D_k$ could be useful for other purposes.

It should be highlighted that for variational purposes, the resulting ansatz will be parametrized by the parametrization of the diagonal operators $\D_k$.
This is different from unstructured parametrizations of circuits because the diagonal evolutions governed by $\D_k$ are `folded' through the brackets~\eqref{eq:wkp}.
As we will see, this form makes it easier to search for the diagonal operators $\D_k$.

Eqs.~\eqref{eq:hk} and \eqref{eq:wkp} allow to understand the `double-bracket' nomenclature.
We will say that
\begin{align}
    \h_k(s) = e^{s \w_k} \h_k(0) e^{-s \w_k}
    \label{DBR}
\end{align}
is a \emph{double-bracket rotation} because it
satisfies a Heisenberg equation involving two, not one, brackets
\begin{align}
    \partial_s\h_k(s) = [  [\D_k,\h_k(0)], \h_k(s)]\ .
    \label{eq:nonlinearHeisenberg}
\end{align}

The monograph by Helmke and Moore~\cite[Ch. 2.3]{helmke_moore_optimization} discusses double-bracket iterations as Lie-bracket recursions.
It shows that if $\D_0=\D_1=\ldots = \hat D^*$ then as long as $s_0=s_1=\ldots=s^*$ are sufficiently short (an explicit bound can be obtained) then the recursion converges to a fixed point $\h_\infty$ with $[\h_\infty, \hat D^*]=0$.

The DBIs considered here will vary the diagonal operators $\D_k$ and recursion step durations $s_k$ so as to increase the diagonalization rate in each step.
Eqs.~\eqref{eq:hk} and \eqref{eq:wkp} allow to change the $\D_0,\D_1,\ldots$ generators, in hope to faster achieve a fixed point $\h_\infty$ which commutes with any diagonal operator, i.e. diagonalize.

We will say that a sequence of double-bracket rotations, or more generally a sequence of approximations to double-bracket rotations, is a DBI.
In Subsec.~\ref{GCIsec}, we will discuss how a DBI can be turned into a unitary circuit.
As in the title of this work, we will refer to a  DBI transpiled into a quantum circuit as a double-bracket quantum algorithm.

\paragraph{Diagonalization in quantum computing.}
A~Hamiltonian $\h_0$ can be considered as the input to the task of finding a diagonalization transformation.
Whenever specific examples will be required, $\h_0$ will be taken to be a Hamiltonian of $L$ qubits 
and the $D=2^L$ dimensional Pauli matrices acting on qubit $i$ will be denoted by $\X_i,\Y_i$ and $\Z_i$. 
Throughout, we choose $\Z_i$ to be diagonal.
Let us now discuss how one can approach diagonalization in quantum computing.

We will speak of \emph{eigenstate-by-eigenstate}  diagonalization if a unitary is applied to a basis vector, obtaining a particular eigenstate or an approximation thereof.
Let us denote the computational basis vectors by $\ket \mu$ with $\msmall{\mu_1,\ldots,\mu_L\in \{0,1\}}$ such that $\Z_i \ket \mu = (-1)^{\mu_i}\ket \mu$.
If a unitary $\u$ is diagonalizing $\h_0$ then if we apply $\u$ to $\ket \mu$, we will obtain an eigenstate.
More specifically, let $\hat H_\infty = \u^\dagger \h_0 \u$ be diagonal then $\ket \mu_\infty = U\ket \mu$ is an eigenstate with  corresponding energy eigenvalue $E_\mu$ because
\begin{align}
  \h_0 \ket \mu_\infty = \u \h_\infty \ket \mu = E_\mu \ket\mu_\infty\ . 
\end{align}
 
A quantum circuit approximation to the diagonalizing transformation $\hat U$  above can be viewed as an eigenstate-by-eigenstate diagonalization quantum algorithm which is global because for all states $\ket\mu$ we use the same circuit.
Brute-force optimization of circuits aimed at preparing individual eigenstate approximations has been studied extensively, see Ref.~\cite{KishorRevModPhys.94.015004} for a review, and could be considered local eigenstate-by-eigenstate diagonalization because the circuit can vary for different target eigenstates.

The aim of this section is to propose that DBIs are well suited  for eigenstate-by-eigenstate diagonalization.
A priori, the argumentation that will be presented will be based on a global cost function.
However, when a diagonalization DBI has not advanced far, it may be viewed as performing local eigenstate-by-eigenstate diagonalization for some selected states $\ket \mu$.
This is because even if the global cost function has not been reduced to zero, it can happen that applying the unitary circuit arising from the few-step DBI will give good eigenstate approximation for some states $\ket \mu$.
In general,  performance of local eigenstate-by-eigenstate diagonalization can be witnessed by measuring the energy fluctuation
\begin{align}
\Xi_k(\mu) = \sqrt{\sandwich \mu {\h_k^2} \mu - \sandwich \mu {\h_k} \mu^2} \ ,
 \label{eq:ef}
\end{align}
which vanishes for eigenstates and can be evaluated through experimentally accessible observables~\cite{kokail2019self}.

As an alternative, one may consider \emph{thermally encoded} diagonalization where one would encode the input Hamiltonian $\h_0$ in a density matrix, e.g. as
\begin{align}
    \hat \rho^{(H_0)} = (\id  + \h_0/ \|\h_0\|)/ \tr(\id  + \h_0/ \|\h_0\|) \ .
\end{align}
Again an approximation to an exact diagonalization transformation $U$ could be considered.
In this case diagonalization effectiveness could be measured by, e.g., evaluating quantum coherence in the computational basis~\cite{coherence_review}.
While thermally encoded diagonalization  may appear appealing, it should be noted  that \emph{i)} all observables would carry a lot of trivial noise due to the close proximity to the maximally mixed state, \emph{ii)} not all of the exponentially many eigenstates are of equal interest and \emph{iii)} certifying success may be difficult because, e.g., quantum coherence is a function of the entire state.
For those reasons this framework will not be considered further in this work but it goes to show that the quantum data structure for diagonalization, in principle, may be chosen in more than one way.

We discussed two ways in which a quantum computer can be set up for the task of diagonalization.
Next, let us discuss two oracles for the quantum computer to query the input $\h_0$.
The matrix dimension $D=2^L$ is expected to be large so, unlike in classical computing, for quantum computation it is more natural to build algorithms querying evolutions $\{e^{-it\h_0}\}_{t\in \mathbb R}$ governed by $\h_0$ rather than one by one inputting  matrix elements of $\h_0$ into a quantum computer.

The \emph{Hamiltonian simulation} oracle  translates classical knowledge of the values of the couplings in the input Hamiltonian $\h_0$ into evolutions $e^{-it\h_0}$.
This should be viewed as being in line with the natural sparsity structure of Hamiltonians which are relevant in physics: The number of couplings will usually scale at most polynomially with the number of qubits $L$, rather than exponentially.
Query access to this evolution oracle can be achieved by Hamiltonian simulation quantum algorithms with an almost linear runtime in the evolution duration~\cite{Su}.

An \emph{oblivious evolution} oracle allows to query the application of unitary evolutions $e^{-it\h_0}$ to a state governed by $\h_0$ for desired durations $t$.
Here one does not need to know the precise couplings in the Hamiltonian, a situation close to the setting of analog quantum simulation where the evolution just `happens'.
If we disregard the knowledge about the couplings in $\h_0$ then the Hamiltonian simulation oracle can be seen as oblivious, simply evolving the state as queried.
For readers familiar with QIBO, this is similar to having the freedom to change the `backend' and being agnostic as to what is the specific implementation at hand~\cite{qiboEfthymiou_2022}.

DBIs reveal how to turn evolutions under the input Hamiltonian $e^{-it\h_0}$ into an approximation of eigenstate-by-eigenstate diagonalization, thus double-bracket quantum algorithms work within the oblivious evolution oracle framework.
The next subsection will  derive analytical conditions for a DBI to tend towards a diagonal fixed point and after that we will discuss how to implement the arising double-bracket quantum algorithms on quantum computers.

\subsection{Diagonalization  double-bracket iterations}

In the convention that all $\Z_i$ operators are diagonal, their products
\begin{align}
\p_\mu=\prod_{j=1}^L \Z_j^{\mu_j}\ , 
\label{eqphaseflips}
\end{align}
with $\msmall{\mu_1,\ldots,\mu_L\in \{0,1\}}$, form a basis of diagonal operators on $L$ qubits.
This means  that any $D=2^L$ dimensional diagonal operator $\D$ can be expressed as the linear combination 
\begin{align}
    \D = D^{-1}\sum_{\mu\in\{0,1\}^{\times L}} \langle  \p_\mu, \D\rangle_\mathrm{HS}\ \p_\mu\ ,
\end{align}
which involves the Hilbert-Schmidt scalar product~\cite{nielsen2010quantum}
\begin{align}
\label{eqHS}
    \langle \A, \B \rangle_\mathrm{HS} = \tr[\A^\dagger \B ]\ .
\end{align}

The hermitian conjugation in the definition of the Hilbert-Schmidt scalar product is important because we will be using anti-hermitian operators $\w = - \w^\dagger$ so that their induced Hilbert-Schmidt norm defined through $\|\A\|_\text{HS}^2 = {\langle { \A, \A} \rangle_{\mtiny{\mathrm{HS}}}}$ is
\begin{align}
    \|\w\|_\text{HS}^2 = -\tr[\w^2 ]\ .
\end{align}
There is no minus for hermitian operators, e.g, $\|\h_0\|_\text{HS}^2 = \tr[\h_0^2 ]$.

Anti-hermitian operators arise naturally when considering brackets, i.e., commutators, of hermitian operators.
For example, we will use brackets of hermitian operators $\A, \B$
\begin{align}
  \w(\A,\B)= [\A, \B]  = \A \B-  \B\A\ .
  \label{eq:wzdef}
\end{align}

Note that $e^{s \w(\A,\B)}$ is unitary for any real duration of the evolution $s$ which can be proven analogously to, e.g., why $e^{-is \h_0}$ is unitary.
In fact $i \h_0$ is anti-hermitian.

Let  $\dzz(\J)$ be a map associating a hermitian and diagonal operator to any hermitian operator $\J$.
We call $\dzz(\cdot)$ a \emph{diagonal association} and it is specifically purposed  to give rise to brackets
\begin{align}
    \wz(\J) = [\dzz(\J),\J]\ 
    \label{wzbracket}
\end{align}
which, when exponentiated, generate unitary evolutions.
We speak of a map because $\dzz(\J)$ could be a result of some optimization procedure applied to $\J$ for every step of a diagonalization DBI and thus $\dzz$ could be used to variationally construct $\D_0, \D_1,\ldots$ on the fly.
Let us next see some examples of diagonal associations.

The dephasing channel for $L$ qubits 
\begin{align}
  \d\J = D^{-1}\sum_{\mu\in\{0,1\}^{\times L}} \p_\mu \J\p_\mu\ 
  \label{eq:pinch}\ 
\end{align}
is a diagonal association; it takes $\J$ and returns an operator with the same diagonal matrix elements and off-diagonal ones equal to zero.
The output of the dephasing channel is again hermitian so the  bracket, which we will call canonical, defined by
\begin{align}
   \wcan(\J) =  \w( \d\J,  \J) = [ \d\J,\J]
   \label{eq:wcan}
\end{align} generates unitary operators because it is anti-hermitian $(\wcan)^\dagger = -\wcan$.

The diagonal association can be chosen to be independent of $\J$. 
For example we can set 
\begin{align}
\label{eq:Zmudiagonassociation}
    \Delta^{(\mu)}(\J) = \p_\mu
\end{align} by sampling a random $\mu$ each time it is queried. 
Again, such an association allows to define a bracket
\begin{align}
    \wmu(\J) = \w( \Z_\mu,  \J) = [ \Z_\mu,\J] \ .
    \label{eqwmu}
\end{align}

For any diagonal association $\dzz(\cdot)$ we define the respective DBI by setting $\D_k=\dzz(\h_k) $.
So to build a DBI, in every step we could dephase as in~\eqref{eq:pinch} or pick a new random operator as in~\eqref{eq:Zmudiagonassociation}.
The diagonal association $\dzz(\J)$ could be extended to be a function of the recursion step number $k$ and be built on the fly to serve the purpose of the iteration. 
For example for each given $\h_k$ one can perform rounds of closed loop optimization varying the parameters of an ansatz for $\dzz(\h_k)$ and aim to obtain $\h_{k+1}$ yielding a low cost function value (e.g., be more diagonal).
For clarity let us summarize this theoretical approach by the following abstract algorithm.

\begin{abstractalgorithm}[H]
\caption{Diagonalization DBI}
\label{algorithmDBI}
	\begin{algorithmic}
	  \Input
  \Desc{$K$:}{\quad total number of recursion steps}
  \Desc{$\S$:}{\quad a vector of $K$ recursion step durations}
  \Desc{$\dzz$:}{\quad diagonal association }
  \Desc{$\h_0$:}{\quad input Hamiltonian}
  \EndInput
  \Output
  \Desc{${\h_{K}}$:}{\quad the result of DBI Eqs.~\eqref{eq:hk} and \eqref{eq:wkp} after $K$ steps}
    \EndOutput	  
  \State
  \textbf{For }$k = 0$ to $K-1$  \\
  \hspace{\algorithmicindent} 
  \textbf{Set} {   $\D_k\leftarrow\dzz(\h_k) $}\\
  \hspace{\algorithmicindent} 
  \textbf{Compute} { $\w_k=[\D_k,\h_k]$ using Eq. \eqref{eq:wkp}}
  \State
\hspace{\algorithmicindent} 
  \textbf{Set } { $\h_{k+1} \leftarrow e^{s_k \w_k} \h_k e^{-s_k\w_k}$} as in Eq.~\eqref{eq:hk}
 \\
   \textbf{Return } { $\h_{K}$} \end{algorithmic}
  
\end{abstractalgorithm}

In the remainder of this section, we will discuss  \emph{i)} why the above examples of diagonal associations \eqref{eq:pinch} and \eqref{eq:Zmudiagonassociation} have diagonalizing tendencies, \emph{ii)} how to implement Eqs.~\eqref{eq:hk} and \eqref{eq:wkp} on a quantum computer, and \emph{iii)} what will be the runtime, i.e. implementation cost to run a diagonalization double-bracket quantum algorithm.

\subsection{Conditions for diagonalization using double-bracket iterations}
Let us next derive a monotonicity relation for diagonalization DBIs.
It is related to a similar relation known for the GWW flow: This predecessing result can be thought of as a special case of the discussion below, relevant in the continuous limit of the DBI iteration with the canonical bracket \eqref{eq:wcan}, i.e. taking $s_k$ infinitesimally small.

The DBI monotonicity relation will be derived by studying what happens in a single recursion step. Let $\J$ be a hermitian matrix and let
\begin{align}
    \J_s^{\mtiny{(Z)}} = e^{s\wz(\J)}\J e^{-s\wz(\J)}
    \label{eqJevols}
\end{align} 
be its double-bracket rotation for time $s$ by a bracket \eqref{wzbracket}  which involves a diagonal association $\dzz(\J)$.

To analyze the effect of this double-bracket rotation, it is useful to define
\begin{align}
    \o {\J} \equiv \J-\d{\J}\ ,
\end{align}
which is the off-diagonal restriction of $\J$.
Here  $\d{\J}$ is the dephasing channel which is in general different from $\dzz(\J)$.
We will see that $\d{\J}$ will appear nonetheless in the analysis and thus the diagonal association via the dephasing channel gives rise to a bracket that we will call canonical~\cite{kehrein_flow}.

We will use the squared Hilbert-Schmidt norm as the measure of the diagonalization progress and define
\begin{align}
    f(s) = \| \o{\J_s^{\mtiny{(Z)}}}\|_\mathrm{HS}^2\ .
\end{align}
With this definition we have
\begin{align}
f'(0) &= {\partial_s\| \o{\J_s^{\mtiny{(Z)}}}\|_\mathrm{HS}^2}_{\vert s = 0}\\
&= -2 \langle\wz(\J),\wcan(\J)\rangle_\mathrm{HS}\ ,
\label{wzdecreasef}
\end{align}
a result which will be proven just below.
This formula uncovers a large degree of flexibility in devising diagonalization DBIs because we merely need that
\begin{align}
\cos(\theta) =  \frac{\langle\wz(\J),\wcan(\J)\rangle_\mathrm{HS} }{\|\wz(\J)\|_\mathrm{HS}\,\|\wcan(\J)\|_\mathrm{HS}} >0 \ ,
\label{csangle}
\end{align}
i.e, the Cauchy-Schwarz angle $\theta$ needs to be restricted to $[0,\pi/2)$.
We thus see that in the geometry induced by the Hilbert-Schmidt scalar product, the canonical bracket is special in that it is a choice of double-bracket rotation  direction which leads to a step of diagonalization
\begin{align}
f'(0) =& -2 \|\wcan(\J)\|_\mathrm{HS}^2 <0\ 
\label{GWWDBI}
\end{align}
as long as $\wcan(\J)\neq 0$.

It may be handy to say that as long as flow step durations $s_k$ are small enough then the dephasing channel (which is related to the continuous GWW flow) gives rise to unconditionally diagonalizing DBIs.
On the other hand, according to~\eqref{csangle} general diagonal associations give rise to conditionally diagonalizing DBIs, where in each step one may need to check whether the sign should be flipped.
If $\wz$ points away from $\wcan$ then redefining the sign of the diagonal operator $\dzz(\J) \mapsto -\dzz(\J)$ will give rise to a bracket which will have a diagonalizing effect.
The reader should note  that this is generic as opposed to an unconstrained ansatz optimized by brute-force.

In a diagonalization DBI we will have the special case $\J = \h_k$.
Once the evolution \eqref{eqJevols} will have progressed far enough to some time $s_k$ then  the decrease of the magnitude of the off-diagonal terms will turn around into an increase.
To proceed further with diagonalization, we set   $\h_{k+1}= \J_{s_k}^{\mtiny{(Z)}}$ and proceed to a rotation of $\h_{k+1}$ with a new bracket.
This assignment of $s_k$ is locally optimal in $k$ but for a DBI with $K>1$ total steps it may not be globally optimal; it is the greedy strategy for optimizing DBI step durations $\S=(s_1,\ldots,s_K)$.
This is a viable strategy but other may be possible and the following definition allows to further conceptualize the task at hand.
\begin{definition}[$\sigma$-decreasing]
 A  sequence of unitaries $\{\qu_k\}_{k=1,\ldots,K}$  of length $K$
 is \sgd for $\h_0$ if for $k=2,\ldots, K$ 
\begin{align}
  \| \o{\qu_k^\dagger\h_0\qu_k}\|_\mathrm{HS}-\| \o{\qu_{k-1}^\dagger \h_0\qu_{k-1}}\|_\mathrm{HS} < 0\ .
  \label{sgddef}
\end{align}
\end{definition}
This definition captures the operational goals of a quantum device because as long as an evolution is effective in lowering the off-diagonal matrix elements then the precision of implementing the DBI step is a secondary matter and  `errors' may potentially be tolerable.
We next derive the monotonicity relation~\eqref{wzdecreasef} which can aid theoretically in experimental explorations.

\begin{lemma}[\sgd double-bracket rotations]
\label{sgdlemma}
We have
\begin{align}\label{DBImonotonicityapp}
	\| \o{\J_s^{\mtiny{(Z)}}}\|_\mathrm{HS}^2 - \| \o{\J}\|_\mathrm{HS}^2 
 =& -2s \langle \wz(\J),\wcan(\J)\rangle_\mathrm{HS}\nonumber\\&+ O(s^2)\ .
\end{align}
\end{lemma}
As a corollary to this statement, the monotonicity relation~\eqref{wzdecreasef} follows by dividing both sides by $s>0$ and taking the limit $s\rightarrow 0$.
\begin{proof}
Using the integral form of the remainder in the Taylor expansion we have
\begin{align}
    \J_s^{\mtiny{(Z)}} = \J + s[\wz(\J),\J] + \mathcal R_s\ ,
\end{align}
where 
\begin{align}
    \mathcal R_s = \int_0^s \mathrm d r(s-r)[[\wz(\J),[\wz(\J), \J_r]]
\end{align}
and
\begin{align}
    \|\mathcal R_s \|_\text{HS}\le 16 s^2 \|\dzz(\J)\|^2_\text{HS}\,\|\J\|^3_\text{HS} = O(s^2)\ .
    \label{remainderNorm}
\end{align}
We  use that the Hilbert-Schmidt norm is induced by the respective scalar product and insert the Taylor expansion obtaining
\begin{align}\label{DBImonotonicityapp2}
	\| \o{\J_s^{\mtiny{(Z)}}}\|_\mathrm{HS}^2 -\| \o{\J}\|_\mathrm{HS}^2  =&2s \langle \sigma([\wz(\J),\J]),\sigma(\J)\rangle_\mathrm{HS}\nonumber\\
&+ O(s^2)\ .
\end{align}
Here for higher-order terms we made analogous bounds as in~\eqref{remainderNorm} and collected them into $O(s^2)$ with an updated constant.
Next we will use that 
\begin{align}
    \left\langle \Delta\left([\wz(\J),\J]\right),\sigma(\J)\right\rangle_\mathrm{HS} = 0
\end{align}
and 
\begin{align}
    \tr\left([\wz(\J),\J] \sigma(\J)]\right) = 
    \tr\left([\J,\sigma(\J)] \wz(\J)]\right)  \ ,
\end{align}
which involves basic linear algebra discussed in the Appendix~\ref{app:discretization}.
Thus
\begin{align}
    \langle \sigma([\wz(\J),\J]),\sigma(\J)\rangle_\mathrm{HS} = - \langle \wz(\J),\wcan(\J)\rangle_\mathrm{HS}\ ,
\end{align}
where the minus sign appeared due to the Hilbert-Schmidt scalar product definition~\eqref{eqHS}.
\end{proof}
This results shows that if the variational bracket is chosen to be the canonical bracket then for small enough duration of the double-bracket rotation we will obtain a $\sigma$-decrease. 
\begin{corollary}
[Unconditional monotonicity of GWW DBI]
For all steps $k$ in the abstract algorithm~\ref{algorithmDBI} we have
\begin{align}
\| \o{\h_k}\|_\mathrm{HS}^2-\| \o{\h_{k-1}}\|_\mathrm{HS}^2 = -2s_k \|\wcan_k\|_\mathrm{HS}^2 +O(s_k^2)\ .
\label{eq:monotnicity_app}
\end{align}
\end{corollary}

For general variational brackets that arise from some diagonal association we can summarize the possibility of obtaining a \sgd sequence for $\h_0$ as follows.
\begin{corollary}[Generic diagonalizing DBIs]
Let  $\dzz(\cdot)$ be a diagonal association.
For any input Hamiltonian $\h_0$ and $\epsilon>0$, there exists a \sgd DBI with an association $\overline \dzz$ satisfying
    \begin{align}
        \|\overline\dzz(\h_k) - \dzz(\h_k)\| \le \epsilon
        \label{epscond}\ 
    \end{align}
    or
        \begin{align}
        \|\overline\dzz(\h_k) + \dzz(\h_k)\| \le \epsilon
        \label{epscond}\ .
    \end{align}
\end{corollary}
This means that once we choose to rotate with brackets involving diagonal operators, it is the \emph{i)} double-bracket and \emph{ii)} recursive structures that lead to diagonalizing properties.
The choice of the diagonal association influences the performance of the corresponding DBI, however.
\begin{proof}
We proceed inductively, constructing the modified association on the fly.
If for some $k$ the original bracket vanishes $[\dzz(\h_k),\h_k]=0$ then we add a diagonal operator $\A_k$ such that $\overline\dzz(\h_k) = \varepsilon_k\dzz(\h_k)+\A_k$ gives rise to a non-zero bracket which either for $\varepsilon_k = +1$ or for $\varepsilon_k=-1$ will be collinear with the canonical bracket.
Taking $s_k$ sufficiently small ensures a $\sigma$-decrease in that step.
This procedure can be repeated for all steps.
Finally, we note that $\|\A_k\|$ can be small enough so that \eqref{epscond} holds.
\end{proof}
Note, that this proof formalizes the claim that the DBIs proposed have generically \sgd properties. 
If $[\dzz(\h_k),\h_k]\approx0$ then instead of sticking to that diagonal operator and modifying it subtly it may be better to choose a completely different one.
In other words, the corollary is concerned with the fact that diagonal associations generically lead to diagonalizing DBIs but it does not make a statement about the rate of diagonalization.
The diagonalization rate should best be analyzed for the given input Hamiltonian $\h_0$ because different sequences of $\D_0,\D_1,\ldots$ will perform differently for different settings.
The next subsection discusses a special case when $\D_0,\D_1,\ldots$  are all equal and so convergence characteristics can be studied analytically.

\subsection{Brockett-Helmke-Mahony-Moore double-bracket iterations}
Moore, Mahony and Helmke~\cite{moore1994numerical} proposed Lie-bracket recursions, see also the monograph by Helmke and Moore~\cite[Ch. 2.3]{helmke_moore_optimization}, which are equivalent to DBIs proposed here but came to the attention of the author only in last stages of finalizing this manuscript.
Eqs.~\eqref{eq:hk} and \eqref{eq:wkp} anticipate that it may be useful to use different diagonal operator $\D_k$ in each recursion step.
However, Brockett~\cite{brockett1991dynamical, brockett1991dynamical2} proposed a double-bracket flow involving a single diagonal operator $\D^*$ and Moore, Mahony and Helmke~\cite{moore1994numerical} have considered DBIs where each double-bracket rotation involves the same diagonal operator $\D^*$.
Thus it appears that Lie-bracket recursions arose by considering numerical diagonalization schemes, similar to this work which considered initially the GWW flow but concluded that generalizing small-step discretizations to large step durations is worthwhile, now also for quantum computing.
In this subsection we will summarize how the mathematical analysis of the Brockett-Helmke-Mahony-Moore (BHMM) DBI~\cite{moore1994numerical} provides insights about the performance of quantum computing applications of DBIs.

Firstly,  all fixed points $\h_\infty$ of the BHMM DBI have the property $[\h_\infty,\D^*]=0$.
Secondly, assuming that $\D^*$ has a non-degenerate spectrum allows to show that $\h_\infty$ is diagonal.
Thirdly, the only stable equilibrium point is $\h_\infty$ which has the same sorting of eigenvalues as $\D^*$.
Finally, because we are not switching $\D_k$ in every recursion step, the BHMM DBI can be seen analytically to converge.
In particular, close to the equilibrium points the convergence is exponential.

This eventual exponential convergence rests on reaching the basin of asymptotic attraction.
In the quantum case we are expecting exponentially large matrices and as we will see in the next subsection double-bracket quantum algorithms have an exponential implementation cost in the number of recursion steps.
What this means is that in quantum computation implementing the initial non-asymptotic steps is key as otherwise the exponential stability of the diagonal fixed point cannot be salvaged.
The variational formula~\eqref{DBImonotonicityapp} is a first step to systematically understand how to vary and select $\D_k$ in each recursion step.

Furthermore, the analysis of the BHMM DBI highlights the need to avoid spectral degeneracies~\cite{moore1994numerical}.
In near-term quantum computing, $\D_k$ should be simple enough to implement so for example the classical Ising model $\D_k=\sum_{i=1}^L B_i\Z_i+\sum_{i=1}^L J_{i,j}\Z_iZ_j$.
Here the `magnetic' fields and qubit couplings $B_i, J_{i,j}$ are the parameters of the restricted variational ansatz and as a rule of thumb should be chosen such as to avoid spectral degeneracies.

Finally, Moore, Mahony and Helmke~\cite{moore1994numerical} provide an idea to compute the scheduling of DBIs.
The most basic formula is $s^*=1/(4 \|\h_0\|_\text{HS} \cdot\|\D^*\|_\text{HS})$ which can be generalized to a step-dependent scheduling~\cite{moore1994numerical} expressed in norms involving $\h_k$.
This is a quantitative formula for `small-enough' step duration obtained by bounds based on Taylor expansion similar to Eq.~\eqref{DBImonotonicityapp}.
For near-term quantum computing, it should be noted that Hilbert-Schmidt norms of innocuous operators can be exponentially large  and thus this scheduling may be unnecessarily  too pessimistic.

Having said that, together with this double-bracket rotation duration  the BHMM DBI provides an example of an unconditionally diagonalizing DBI and thus facilitates obtaining a feasible solution to the  task of compiling a diagonalization unitary.
Ref.~\cite[Thm.~2.3]{smith1993geometric} provides one more, related, perspective which derives conditions for exponential convergence of iterations related to DBIs.
Next, let us look at the details how to implement DBIs on quantum computers.
\subsection{Double-bracket rotations via group commutators}
\label{GWWQA}
A DBI is a sequence of double-bracket rotations.
We will now discuss how to implement individual double-bracket rotations  on a quantum computer.
A DBI is a priori only an analytical ansatz for a diagonalization iteration but we will also see that it is possible to implement iterated double-bracket rotations, thus arriving at double-bracket quantum algorithms.

The group commutator  prominently features in the Solovay-Kitaev quantum compiling scheme~\cite{dawson2006solovay} and is defined by the formula
\begin{align}
    \vgc(\A,\B) = e^{i\A}e^{i\B}e^{-i\A}e^{-i\B} \ ,
    \label{eq:GCAB}
\end{align}
which involves two hermitian operators $\A,\B$.
It yields an evolution step approximately governed by the commutator~\cite{HigherOrderGCPhysRevResearch.4.013191}
\begin{align}
   e^{-[\A,\B]} =\vgc(\A,\B) + \hat E^{\text{(GC)}} \ ,
    \label{eq:GCapprox}
\end{align}
where  $\|\hat E^{\text{(GC)}}\| \le \|[\A,[\A,\B]]\| +\|[\B,[\B,\A]]\|$ in any  unitarily invariant norm, e.g., the Hilbert-Schmidt norm.
This constant captures the non-cummutative character of the operators involved and a self-contained proof using the technique of local-error bounds~\cite{PhysRevA.90.022305,Su,PhysRevX.11.011020} is provided in Appendix~\ref{app:discretization}.

Using the group commutator formula, we see that if we have oracle access to evolution $\{e^{-it\J}\}$ under an evolution generator $\J$ and to evolutions $\{e^{-it\D}\}$ under its   diagonal association $\D=\dzz(\J)$, then we obtain an approximation to the double-bracket rotation with $\w= [\D,\J]$ 
\begin{align}
  e^{-s\w} =& e^{i\sqrt s \D}e^{i\sqrt s\J}e^{-i\sqrt s \D}e^{-i\sqrt s\J}+ \hat E^{\text{(GC)}} \ .
\label{eq:groupcommutator}
\end{align}
The approximation error can be bounded as
$\|\hat E^{\text{(GC)}}\| \le ( \|[\D,\wz(\J)]\|+\|[\J,\wz(\J)]\|) s^{3/2}$.
Interestingly, this local error bound of the group commutator admits a further upper bound which is proportional to the norm of the corresponding generating bracket.
If we refine this $R$ times and scale the evolution time $r= \sqrt{s/R}$, then we can converge onto the exact double-bracket rotation according to
\begin{align}
  e^{-s\w} =& \left(e^{ir \D}e^{ir\J}e^{-ir\D}e^{-ir\J}\right)^R+  O\left(s^{3/2}R^{-1/2}\right) \ .
\label{eq:groupcommutatorRefined}
\end{align}

The group commutator should be viewed as a means of approximately implementing a DBI step using simpler evolution operations.
It is important to recognize that the group commutator formula \eqref{eq:GCAB} constitutes an elemental component for building more complex quantum algorithms.
It involves forward and backward evolution under two different generators.
If the evolution time is short enough, it yields an evolution step governed by the commutator of these two operators.
Equivalently, we may rephrase this in language of queries~\cite{kothari,Low2019hamiltonian}: 
In order to \emph{query} the forward propagation under the commutator of two operators we must \emph{query} their forward and backward evolution interlaced as in Eq.~\eqref{eq:GCAB}.

Note that the group commutator relies on the capability to invert the evolution.
The signs and ordering of these evolution steps decide which sign the resulting commutator in the generator will have.
We will use in the following that the redefinition $\A = -\J$ and $\B=\D$ yields using Eq.~\eqref{eq:GCapprox}
\begin{align}
  e^{-s\w} =& e^{-i\sqrt s \J}e^{i\sqrt s\D}e^{i\sqrt s \J}e^{-i\sqrt s\D}+ \hat E^{\text{(GC)}} \ 
\label{eq:groupcommutatorReordered}
\end{align}
because again $\w= [\A,\B] = [\D,\J]$.
In other words, the same approximation is obtained, but using a 
different order of queries.

\subsection{Group commutator iterations}

This sets the scene for formulating the \emph{group commutator iteration} (GCI) which is meant as the routine that is going to be easiest to implement on a quantum computer.
It arises by prioritizing what really matters in a DBI.

Firstly, the diagonal operators $\D_0,\D_1,\ldots$ of the inline formulation of DBI around Eqs.~\eqref{eq:hk} and \eqref{eq:wkp} should be simple to implement.
This may not be the case for diagonal associations appearing in abstract algorithm~\ref{algorithmDBI} because those may include the dephasing channel.
Secondly,
DBI is formulated using evolutions under exact brackets.
This is useful for analytical derivations but for implementations one should favor hardware practicality.
Finally, the full group commutator is not needed because there is additional structure allowing to reduce the number of queries to the evolution oracles of the input Hamiltonian $\h_0$.

More specifically, firstly we will assume that for each recursion step we have a classical prescription $d_\mu$  for the diagonal operator $\D_k = \sum_{\mu \in \{0,1\}^{\times L}}d_\mu \Z_\mu$.
When this is a concise description, e.g. it could be the local magnetic field Hamiltonian or the classical Ising model, then this evolution can be efficiently implemented using Hamiltonian simulation or Clifford operations~\cite{nielsen2010quantum}.
Secondly, we will not aim to perform exactly a double-bracket rotation but only approximate it using the group commutator in each recursion step.
So, let us set for $k=0,1\ldots$ 
\begin{align}
    \u_k = \vgc(-\sqrt{s_k}\h_k,\sqrt{s_k}\D_k)
\end{align}
and define
\begin{align}
    \h_{k+1} =\u_k^\dagger \h_k \u_k\ .
\end{align}
Finally, we  use the ordering~\eqref{eq:groupcommutatorReordered} to define a reduced group commutator sequence
\begin{align}
    \ugci(\D,\J) = e^{i\D}e^{i\J}e^{-i\D}
    \label{ugci}
\end{align}
and obtain for $\u_k'=\ugci(\sqrt{s_k}\D_k,\sqrt{s_k}\h_k)$ using commutativity
\begin{align}
    ({\u_k'})^\dagger \h_k \u_k'={\u_k}^\dagger \h_k \u_k =\h_{k+1}\ .
\end{align}
This can be summarized as follows.
\begin{abstractalgorithm}[H]
	\caption{Group commutator iteration~(GCI)}
 \label{aagci}
	\begin{algorithmic}
	  \Input
  \Desc{$K$}{total number of recursion steps}
  \Desc{$\S$}{a vector of $K$ recursion step durations}
  \Desc{$\Dbf$}{a vector of $K$ hermitian and diagonal operators}
  \Desc{$\h_0$}{input Hamiltonian}
  \EndInput
  \Output
  \Desc{${\h_{K}}$}{output Hamiltonian}
    \EndOutput	  
  \State
  \textbf{For }$k = 0$ to $K-1$  \\
  \hspace{\algorithmicindent} 
  \textbf{Set} { $\u_k\leftarrow e^{i\sqrt{s_k}\D_k}e^{i\sqrt{s_k}\h_k}e^{-i\sqrt{s_k}\D_k}$}\\
\hspace{\algorithmicindent} 
      \textbf{Set } {$\h_{k+1} \leftarrow\u_k^\dagger \h_k \u_k$}
 \\
   \textbf{Return } { $\h_{K}$}
\end{algorithmic}  
\end{abstractalgorithm}

A GCI can be modified to converge onto a DBI by investing sufficient runtime overhead as in Eq.~\eqref{eq:groupcommutatorRefined}.
However, the notion of \sgd stresses that for diagonalization it need not be our goal per se to implement a particular DBI but rather we can keep ourselves satisfied with a different iteration, as long as it is $\sigma$-decreasing.
There may be a gain in diagonalization efficacy by implementing a DBI more closely but it may not be worth it, at least in near-term, and GCIs should be considered first for hardware applications.
Thus the abstract algorithm serves to stress the view that GCIs  in their own right are  worthwhile for near-term applications.

Above we defined DBIs as having recursion steps that involve an exact double-bracket evolution while GCIs strictly speaking only approximate such an evolution.
However, it should cause no confusion to conceptually think of GCIs as essentially DBIs because every recursion step of a GCI approximates a double-bracket rotation~\eqref{eq:GCapprox}.
If we use the same sequence of diagonal operators $\D_0,\D_1,\ldots$ then the arising  diagonalization DBI and GCI are  naturally corresponding. 
In other words, for conceptualization one should think in terms of DBIs while for applications one should consider GCIs.

\subsection{GCI transpiler for quantum computing}
\label{GCIsec}

In step $k$ of a DBI we need to make an evolution step governed by $\w_k = [\D_k,\h_k]$.
We can approximate the double-bracket rotation of $\w_k$ using the group commutator but for that we need oracle access to  evolutions under the $k$-th iterated Hamiltonian $\{e^{it\h_k}\}$.
Unless we have it, then  a DBI, or GCI, despite being formulated using unitary operations, is not directly deployable on a quantum device. 

For a GCI to be deployable we need to construct the oracle access to $\{e^{it\h_k}\}$ using the basic oblivious evolution oracles of input Hamiltonian $\{e^{it\h_0}\}$ and of the diagonal operators $\{e^{it\D_k}\}$.
Doing this translation is the role of a transpiler which is tasked with listing all queries to the oblivious evolution oracles in the order in which they are composed in the GCI.
The basic evolutions $\{e^{it\h_0}\}$ and $\{e^{it\D_k}\}$ can be further transpiled into primitive gates of a given device using standard quantum compiling methods.

The key idea here is frame shifting.
Let $\v_k$ be the complete GCI unitary such that 
\begin{align}
    \h_k = \v_k^\dagger \h_0 \v_k\ 
\end{align}
then we will use unitarity to express
\begin{align}
  e^{it \h_k} = \v_k^\dagger e^{it \h_0}\v_k\ . 
  \label{frameshifting}
\end{align}
Thus we can shift the frame using $\v_k$ and query the input evolution under the input Hamiltonian with the desired time $t$.
This means that $\v_k$ composed with the group commutator recursion step together with \eqref{frameshifting} yields $\v_{k+1}$
\begin{align}
    \v_{k+1} &= \v_k \ugci(\sqrt{s_k}\D_k,\sqrt{s_k}\h_k)\label{untranspiledstep}\\
     &= \v_ke^{i\sqrt{s_k}\D_k}\v_k^\dagger e^{i\sqrt{s_k}\h_0}\v_k e^{-i\sqrt{s_k}\D_k}\ .
     \label{transpilingstep}
\end{align}
The transpiler will output the circuit $\UDBI_K$ of $\v_K$ for the final step $K$ if it recursively keeps track of the list of queries needed to implement $\v_1,\v_2,\ldots$.
If $\v_k$ in Eq.~\eqref{transpilingstep} is already transpiled into queries to evolution oracles then we would call that form `unfolded' as opposed to \eqref{untranspiledstep} which is still implicit.
The following transpiling algorithm formalizes this idea.

\begin{abstractalgorithm}[H]
	\caption{$\VDBI^\text{(GCI)}$: GCI circuit transpiler}
	\begin{algorithmic}
	  \Input
  \Desc{$K$:}{total number of recursion steps}
  \Desc{$\S$:}{a vector of at least $K$ recursion step durations}
\Desc{$\{e^{-it\D_k}\}$:}{\quad\quad\quad at least $K$ oblivious evolution oracles with time $t\in\mathbb R$ and diagonal generators $\D_k$ labeled by $k=0,1,\ldots$
  }
\Desc{$\{e^{-it\h_0}\}$:}{\quad\quad\quad oblivious oracle access to evolution under $\h_0$ for time $t\in\mathbb R$}
  \EndInput
  \Output
  \Desc{${\UDBI_K}$:}{ ordered list of queries to the input evolution oracle $\{e^{-it\h_0}\}$ or diagonal evolution oracles $\{e^{-it\D_k}\}$ whose ordered composition yields the circuit of unitary $\v_K$ yielding $\h_K =\v_K^\dagger \h_0\v_K$ satisfying $K$ GCI recursion steps }
    \EndOutput	
\\\hrulefill\\
  \State
  \textbf{If }$K = 0$  \\  \hspace{\algorithmicindent} 
  \textbf{Return } $\id$
  \State
  \textbf{Get} GCI query list for $K-1$ steps\\
  \quad\quad$ \UDBI_{K-1}\leftarrow \VDBI^\text{(GCI)}(K-1, \S, \{e^{-it\D_k}\}, \{e^{-it\h_0}\})$\\
    \textbf{Set} frame-shifted evolution query list as in Eq.~\eqref{frameshifting} \\
  \quad\quad$ \PDBI_{K-1} \leftarrow \left( \UDBI_{K-1}^\dagger, e^{i\sqrt{s_k}\h_0} ,\UDBI_{K-1}\right)$\\
 \textbf{Set} group commutator query list following Eq.~\eqref{ugci}\\
  \quad\quad$ \RDBI_{K}\leftarrow \left( e^{i\sqrt{s_k}\D_{K-1}}, \PDBI_{K-1}, e^{-i\sqrt{s_k}\D_{K-1}}\right)$\\
 \textbf{Compose} query lists $\RDBI_{K}$ and $\UDBI_{K-1}$ as in Eq.~\eqref{transpilingstep}  \\
  \quad\quad $\UDBI_K \leftarrow \left(\UDBI_{K-1},\RDBI_K\right)$    \\  
\textbf{Return} $\UDBI_K$
  \end{algorithmic}
  
\label{algorithmtranspiler}
\end{abstractalgorithm}

Below the quantum algorithm environment describes how to use the above abstract algorithms of a GCI and its transpiling to run GCI on a quantum computer.
For this it is meaningful to make the folowing consideration: if
$\h_K = \v_K^\dagger \h_0 \v_K$ is the result of $K$ steps of GCI then $\ket {\psi_K} = \v_K \ket {\psi_0}$ is such that 
\begin{align}
    \bra{{\psi_K} }\h_0\ket {\psi_K} = \bra {\psi_0} \h_K \ket {\psi_0}
\end{align}
and so it approximates an eigenstate as long as the iteration progressed sufficiently far.
It is important to note that we are considering $\v_K$ and so the output of the transpiler above needs to be reversed because it was working in the convention of iterating the Hamiltonian and not the state.
For this we will consider the list of queries to evolution oracles outputed by the GCI transpiler $\UDBI_K$, we will reverse that list ${reverse}(\UDBI_K)$ and apply to the state.
Finally, we assume that when combining lists, such as in the assignment of $\PDBI_{K-1}$ above, the result is one single list which is automatically `flattened' by iterating through all elements of the lists in the appropriate order and creating a new list.

\begin{quantumalgorithm}[H]
\caption{Diagonalization GCI}
	\begin{algorithmic}
 	  \Input
  \Desc{$K$:}{total number of recursion steps}
  \Desc{$\S$:}{a vector of at least $K$ recursion step durations}
\Desc{$\{e^{-it\D_k}\}$:}{\quad\quad\quad at least $K$ oblivious evolution oracles with time $t\in\mathbb R$ and diagonal generators $\D_k$ labeled by $k=0,1,\ldots$
  }
\Desc{$\{e^{-it\h_0}\}$:}{\quad\quad\quad oblivious oracle access to evolution under $\h_0$ for time $t\in\mathbb R$; its queries are applied to the registers of the input state $\ket {\psi_0}$ }  
  \EndInput
  \Output
  \Desc{$\ket {\psi_K}$ }{  ~   such that if $\h_K = \u_K^\dagger \h_0 \u_K$ is the result of $K$ steps of GCI then $\ket {\psi_K} = \u_K\ket {\psi_0}$ }
    \EndOutput	
    \\\hrulefill\\
  \State
  \textbf{For } $\hat T$ in ${reverse}\left(\VDBI^\text{(GCI)}(K, \S, \{e^{-it\D_k}\}, \{e^{-it\h_0}\})\right)$\\
  \hspace{\algorithmicindent} 
  \textbf{Set $\ket {\psi} \leftarrow \hat T \ket {\psi}$}\\

\textbf{Return} $\ket{\psi_K}\leftarrow \ket{\psi}$
  \end{algorithmic}	\label{quantumalgorithmGCI}
\end{quantumalgorithm}

Note that this quantum algorithm aligns with  three design characteristics and the presence of each of them is beneficial.
It is an iteration which in this specific case has the advantage that its outputs can be studied along the way.
Assuming it is \sgd then its fixed point is useful because it yields diagonalization and as the iteration progresses an increasingly better approximation is obtained.
Finally, the operations employed are natural, namely Hamiltonian  and diagonal evolutions  suffice to implement the procedure.

\paragraph{Runtime as query complexity of the oblivious evolution oracle.}

We will assume that the evolution with the operator $e^{is_k \D_k}$ can be implemented using Clifford operations easily and so will not include it in the runtime analysis.
Instead we will count the number of queries to the oblivious evolution oracle of the input Hamiltonian needed to prepare $\v_k$ and will denote it by
$\NC(k)$.
It satisfies the recursion
\begin{align}
    \NC(k+1) =3\NC(k)+1\ .
\end{align}
The solution of this recursion, with $\NC(1) =1$ as the starting point, gives the final result of queries to the evolution oracle to perform $K$ GCI recursion steps
\begin{align}
    \NC(K) \le (3+o(1))^K\ ,
\end{align}
so an exponential scaling in the number of steps.
Ref.~\cite{QDP} proposed a way to reducing this runtime to polynomial at the cost of appropriately large qubit overheads.

\subsection{Dynamical diagonal association and double-bracket recursions}
In this section we shall explore a class of DBIs with `dynamical' diagonal associations.
They will be dynamical in the sense that without classical knowledge of an evolution generator $\J$, it will be possible to transform the oracle access to the evolutions $e^{-is\J}$ into oracle access to the evolution under the associated diagonal generator $e^{-is \dzz(\J)}$.

One example of a diagonal association map which can be implemented `dynamically' on a quantum computer is the dephasing channel.
It is also known as pinching and amounts to the restriction of an operator to its diagonal.
The dephasing channel is a mixed-unitary quantum channel for any $D$ \cite[Eq.(4.113)]{Watrous} but the most important case is for $L$ qubits, so with $D=2^L$.
Then 
\begin{align}
  e^{-is \d\J} = \prod_{\mu\in\{0,1\}^{\times L}} \p_\mu e^{-is\J/D}\p_\mu + \hat E^{(\Delta)}\ ,
  \label{eq:unitarymixing}
\end{align}
where irrespective of the order in this product  $\|\hat E^{(\Delta)}\|\le 8s^2 \|\J\|^2$, see appendix~\ref{app:quantumalgorithm}.
This means that it is possible to perform the evolution under $\d\J$ for time  $\dell$ by interlacing the evolution under $\J$  in steps $s/D$  with phase flips $\p_\mu$ defined in Eq.~\eqref{eqphaseflips}.
These phase flips can be implemented using Clifford operations~\cite{nielsen2010quantum}.

Again, let us rephrase this as an elemental quantum algorithmic component in the language of queries:
We obtain \emph{query} access to evolution governed by the diagonal association given by the pinched generator by repeatedly \emph{querying} evolution under the input generator and conjugating it by phase flips.
This query access is `dynamical' because it obliviously turns Hamiltonian evolution to evolution under the pinched generator.
Note the difference to the above not `dynamical' case of classical prescribed diagonal operators because then the diagonal association depends only on classical knowledge that informs the choice of $\D_0,\D_1,\ldots$.

Finally, we can further generalize this to say that the diagonal association should admit an approximation by a product formula.
More specifically, we want there to exist a finite sequence of $I+1$ unitaries $\hat Q_i$ such that
\begin{align}
  e^{-is \dzz(\J)} =& \hat Q_0 e^{-is_1\J}\hat Q_1 e^{-is_2\J}\hat Q_2\ldots e^{-is_I\J}\hat Q_{I}\nonumber\\&+ \hat E^{(\dzz)} 
      \label{eq:productassociation}
  \end{align}
  with the error term satisfying $\|E^{(\dzz)}\| = O(s^p)$ with $p>0$.
The dephasing channel is an example of such a diagonal association with $\hat Q_i$'s given by appropriate phase flips and $I=2^L$, $s_1=s_2=\ldots=s_I = s/D$  so that  $p=2$.
Unitary-mixing channels in general admit a similar approximation by a product formula.

The product formula approximation  facilitates covariance under shifting of the frame, a property needed for performing multiple recursion steps on a quantum device.
Let $\hat L$ be a unitary which implements a shifting of the frame and consider any generator $\J$ then we have
\begin{align}
      e^{-is \dzz(\L\J\L^\dagger)} =& \hat Q_0 \L e^{-is\J/D}\L^\dagger \hat Q_1 \L e^{-is\J/D}\L^\dagger \hat Q_2\ldots\nonumber \\
      &\ldots \L e^{-is\J/D}\L^\dagger \hat Q_{I+1}+ \L \left( \hat E^{(\dzz)}\right) \L^\dagger
      \label{eq:covariant}
  \end{align}
  and so we automatically get a prescription for the implementation of the diagonal association with the same error.

Covariance under frame shifting allows to progress with a recursion.
If $\v_1=\vgc(\sqrt{s}\dzz(\h_0),\sqrt{s}\h_0)$ is the first recursion step, then we set $\L = \v_1$ and the frame-shifting covariance formula \eqref{eq:covariant} can be used in the group commutator to 
implement the next recursion step
\begin{align}
 \v_2=\vgc(\sqrt{s}\dzz(\v_1^\dagger\h_0\v_1),\sqrt{s}\v_1^\dagger\h_0\v_1)   \ .
\end{align}
Repeating this `unfolding' results in the hardware prescription for the GCI which includes a `dynamical' diagonal association that can be approximated by a product formula~\eqref{eq:productassociation}.

\paragraph{Query complexity for dynamical diagonal associations.}

The dephasing channel is an instructive example of a `dynamical' diagonal association covariant under shifting of the frame because it is involved in the canonical bracket.
According to the monotonicity relation~\eqref{GWWDBI} a double-bracket rotation involving the canonical bracket is \sgd for sufficiently small rotation duration $s$.
This is unconditional, as opposed to taking an arbitary diagonal operator $\D_k$ which generates a diagonalizing double-bracket rotation conditioned on choosing the sign correctly (either $+\D_k$ or $-\D_k$ will yield a variational bracket collinear with the canonical bracket).

However, while a variational operator $\D_k$ may be easy to implement, e.g. by Clifford operations, for `dynamical' diagonal associations we have an approximation by a product formula \eqref{eq:productassociation} through $I$ queries to the evolution oracle of the input Hamiltonian.
Thus the queries to $e^{\pm i\sqrt{s_{k-1}} \dzz(\h_{k-1})}$ dominate and so
\begin{align}
    \NC_I(K) = O(I)^K\ .
\end{align}

When the diagonal association map is given by the dephasing channel we have the GWW DBI and $I = 2^L$ where $L$ is the number of qubits so the runtime of $K$ GWW DBI steps is exponential in both $K$ and $L$.
Indeed, in the worst-case, input matrices can be dense and the full unitary-mixing representation of the dephasing channel is needed to pinch them. 
As will be discussed below conjugation by a polynomial number of phase flips suffices for sparse evolution generators and then   $\NC_I(K) = (\text{poly}(L))^K$.
\begin{figure*}
  \includegraphics[width=\linewidth]{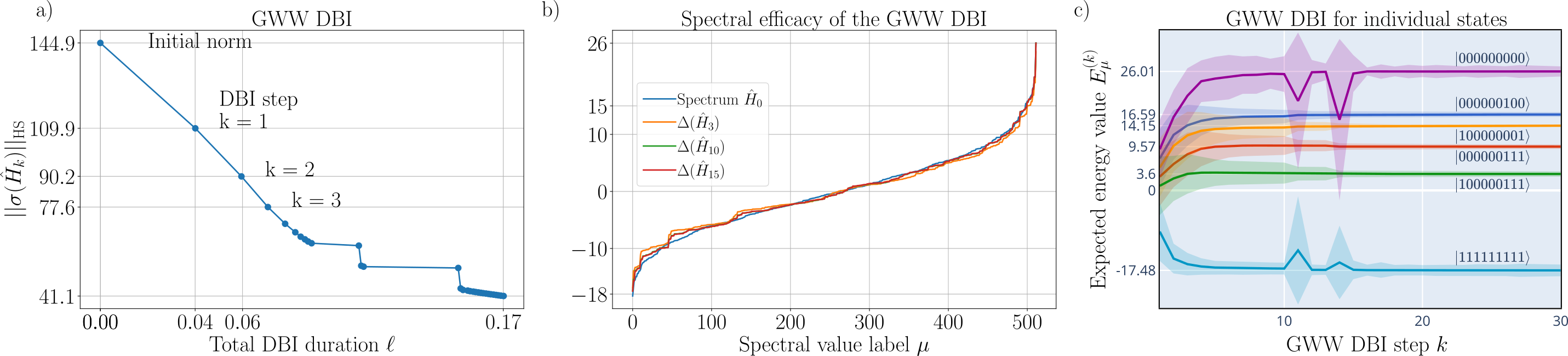}
  
  \caption{GWW DBI aiming to approximate eigenstates of  the transverse and longitudinal-field Ising model $\h_\text{TLFIM}$ of $L=9$ qubits.
  {a)} During the DBI, the decrease of the Hilbert-Schmidt norm of the off-diagonal restriction progresses in several stages.
  Optimization of the DBI step durations $s_k$ results in some steps yielding larger overall decreases than others which can be explained by lifting of degeneracies.
  b) The energy spectrum of the Hamiltonian   overlaid by the sorted entries of diagonal restrictions $\d{\h_{k}}$ after $k=3,15$ and $30$ steps.
  It hints that later DBI stages affect the states in the middle of the spectrum where the energy gaps tend to be small.
  In particular  approximate degeneracies which show up as steps of $\d{\h_{3}}$ become lifted through further recursion steps, as evidenced by the softened progression of $\d{\h_{15}}$.
{c)} Expected values of the energy $E_\mu^{(k)}=\sandwich \mu {\h_k}\mu$ computed for a selection of computational basis states $\ket\mu$ which is equivalent to applying the  DBI cicuit to that state and evaluating the expected value of the input Hamiltonian.
The two fully polarized states were selected because they tend towards states at the edges of the spectrum.
Additionally the selection includes four other computational basis states with small energy fluctuations after $k=15$ DBI steps.
Expected values of the energy $E_\mu^{(k)}$ change mostly during the first few  DBI steps and then stabilize to a constant level.
The colored areas around $E_\mu^{(k)}$ show the measurable energy fluctuation $\Xi_k(\mu)$ which become small for sufficient number of DBI steps.
Just $k=5$ steps yield a surprisingly good approximation to a very low-energy eigenstate and thus seems  appealing for potential explorations of the GWW DBI in a quantum device.
After $k=10$ recursion steps the maximally polarized states show big variations both in expected energy value and fluctuation which should be linked with coupling to other states as these are being reorganized.
These variations stem from GWW DBI being a global diagonalization method aiming to decrease  the  norm of the off-diagonal restriction $\|\o{\h_k}\|_\text{HS}$ so most states lower their energy fluctuation but some may temporarily increase.
}
\label{figeigsGWW}
\end{figure*}

\section{Efficacy of diagonalization DBIs}
\label{GWWnumsec}

The abstract algorithm of the GCI circuit transpiler highlights an important aspect, namely GCIs result in recursive circuits.
Thus as a recursive GCI progresses, each additional recursion step becomes more  costly than the previous ones.
In order to get general understanding of the expressive power of DBIs, in this section we turn our attention to DBIs and abstract from intricacies of group commutator approximations involved in GCIs.
More specifically, the aim is to address what eigenstate approximations can be achieved with only  a few recursion steps.

\subsection{Numerical examples of DBIs}
\paragraph{Numerical example: Preparing eigenstates.}
We will begin the assessment of the efficacy of diagonalizing DBIs by exploring the GWW DBI which is a DBI where $\D_k = \Delta(\h_k)$, i.e., at every step we use the diagonal association given by the the dephasing channel~\eqref{eq:pinch}.
We then use that to compute the canonical bracket~\eqref{eq:wcan} at every step and proceed with the DBI recursion~\eqref{eq:hk} and~\eqref{eq:wkp}.

This will provide intuition about how many recursion steps may appear to be needed in proof-of-principle explorations.
The analysis  includes optimization of recursion step durations by a greedy protocol: For every recursion step we find the step duration which results in the maximal reduction of the Hilbert-Schmidt norm of the off-diagonal restriction of the respective $\h_k$.

Let us consider the GWW DBI applied for the transverse and longitudinal-field Ising model
\begin{align}
  \h_\text{TLFIM} = J_X\sum_{j=1}^{L-1} \X_j \X_{j+1} + \sum_{j=1}^{L} \left(\Z_j + \X_j\right)\ .
\end{align}
For the numerical calculations we will set $J_X=2$ and $L=9$.
Regarding this choice of the model, quantum simulations are often explored by means of the simpler transverse-field Ising model which does not feature the longitudinal on-site field $\X_j$ aligned with the Ising interaction $\X_j\X_{j+1} $ and thus involves two rather than three local Hamiltonian terms.
However, then the model is exactly solvable by a mapping to free fermions~\cite{tFI}, while in presence of the longitudinal field such solution is not directly available.
In other words, considering the logintudinal field $\X_j$ is not a big experimental complication and may make the obtained results more representative of generic interacting spin systems. 
\begin{figure*}
  \includegraphics[trim = 9cm 0 9cm 0, clip,width=\linewidth]{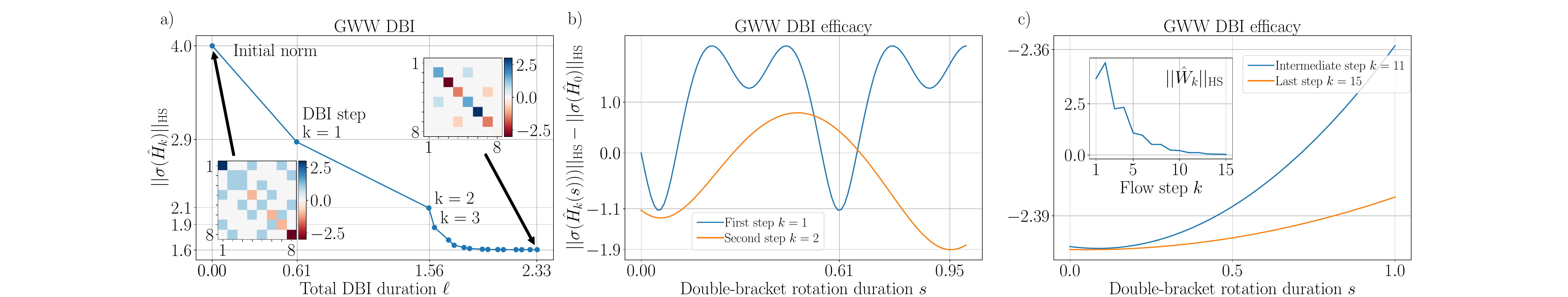}
  \caption{Example of GWW DBI applied to the transverse-field Ising model on $L=3$ sites.
 { a)} In each step $k=1,\ldots, 15$  the canonical bracket \eqref{eq:wcan} generates the evolution and the flow step duration is chosen to be giving the maximal decrease of the norm of the off-diagonal restriction $\|\o{\h_k}\|_\text{HS}$. After about 5 steps of a rapid decline the norm decrease begins to saturate and essentially ceases completely. This is due to the convergence to the block-diagonal form of the Hamiltonian indicated in the upper-right inset.
 { b)} For small  DBI rotation durations the norm decreases linearly in accordance with the monotonicity relation~\eqref{GWWDBI}. Evolution beyond the first minimum can yield another deeper local minimum, e.g., for $k=2$ the second minimum has been chosen over the first.
 If the DBI rotation duration is chosen wrongly, the norm of the off-diagonal restriction may increase, e.g., for $k=1$ most $s$ do not yield a diagonalizing step.  c) The monotonicity relation~\eqref{GWWDBI} implies that there is at least one negative minimum but as the iteration progresses it becomes shallow. Additionally, as shown in the inset, the norms of the canonical brackets decay and so the diagonalization rate slows down because the respective double-bracket rotations havae a weakened influence. }
  \label{fig:illustration}
\end{figure*}

 Fig.~\ref{figeigsGWW}{a)} shows the monotonic decrease of the Hilbert-Schmidt norm of the off-diagonal restriction $\|\o{\h_k}\|_\text{HS}$ as the  GWW DBI progresses.
 The duration of flow steps has been optimized to mitigate against the recursive character of the proposed implementation which is encoded on the horizontal axis.
Fig.~\ref{figeigsGWW}{b)} shows that the diagonal restrictions $\d{\h_{15}}$ and $\d{\h_{30}}$ after $k=15$ and $30$ steps reproduce the energy spectrum of the Hamiltonian obtained by exact diagonalization.
More broadly, this suggests that the GWW DBI is an appealing pre-conditioning method, i.e., unitary transformation increasing the diagonal dominance of an input Hamiltonian.
This can be useful for further quantum information processing because other protocols may have an improved runtime then, e.g. runtime of ground-state preparation using quantum phase estimation depends on the initial overlap with that ground-state~\cite{lin2020near}.

Evaluating expected values of the flowed Hamiltonian $E_\mu^{(k)}=\sandwich \mu {\h_k}\mu$ for a choice of computational basis states $\ket\mu$ is a basic way of extracting physical information from a quantum device.
In an experiment, $E_\mu^{(k)}$ can be estimated in the Schr\"odinger picture by evolving an initial state forward $\ket\mu \mapsto \v_k \ket \mu$ and estimating the expectation value of the input Hamiltonian $\h_0$.
Fig.~\ref{figeigsGWW}{c)} shows examples of initial states $\ket \mu$ whose expected values of energy stabilize to a constant level after relatively few GWW DBI steps.

This investigation revealed the maximally polarized state $\ket{ 11\ldots1}$ to be the candidate of choice for potential explorations using a quantum device.
This is because it tends towards a low-energy state and just $k=5$ steps result in a narrow energy fluctuation.
Thus based on this state a near-term device could probe the low-energy physics of the model.

Finally, the later stages of the iteration result in splitting narrowly positioned energy levels which has to happen in order to diagonalize the quantum many-body system.
Such splitting may be detectable experimentally when the energy levels split beyond the respective range of energy fluctuations but ultimately this effect may be sensitive to implementation imperfections and thus hard to observe in a near-term quantum device.

In summary, Fig.~\ref{figeigsGWW} presents numerical results instructing about possible avenues of exploring diagonalization DBIs in a quantum device.
We find that  even if the norm of the off-diagonal restriction may appear large, selected states may be close to eigenstates.
This means that  for that limited number of recursion steps the DBI achieves local rather than global eigenstate-by-eigenstate diagonalization.
In particular, the maximally polarized state $\ket{ 11\ldots1}$ is an example of a state that should be considered for  exploring DBIs experimentally because it results in very low expected value of energy and reaches a narrow fluctuation range within a small number of steps.

\paragraph{Numerical example: Step duration optimization landscape.}
Fig.~\ref{fig:illustration} presents a minimal example of the GWW DBI for the transverse-field Ising model
\begin{align}
  \h_\text{TFIM} = J_X\sum_{j=1}^{L-1}\X_j \X_{j+1} +\sum_{j=1}^{L} \Z_j\ 
\end{align}
on $L=3$ sites and Ising coupling $J_X=1$.
For this small system we can directly inspect a color-coded visualization of the Hamiltonian matrix, see Fig.~\ref{fig:illustration}a).
The initial few steps lead to a rapid decrease of the norm of the off-diagonal terms, however, further down the line even taking very many steps does not allow to fully diagonalize the problem and the iteration results in block-diagonalization.
In such special cases, it is possible to choose a different diagonal association and pursue diagonalization further.

The upper-right inset shows that the Hamiltonian matrix has been put into a block-diagonal form and the $1\times 1$ blocks signify eigenstates.
This can be better understood by considering the dependence of the off-diagonal norm on the flow step duration $s$.
For the initial two steps, Fig.~\ref{fig:illustration}b), we find a rapid decrease of the norms of the off-diagonal restrictions.
However, for the later steps, Fig.~\ref{fig:illustration}c), the diagonalizing iteration loses efficiency and saturates which is due to increasingly more shallow minima of the off-diagonal norm in dependence on the flow step duration.
This is a distinctive feature of the method: The iteration aims to decrease the canonical bracket which may not result in full diagonalization in presence of spectral degeneracies~\cite{BROCKETT199179,kehrein_flow}.
Full diagonalization can be obtained by a variational search for a diagonal association different from dephasing.
\begin{figure}
  \includegraphics[trim = .2cm 0.5cm 1.5cm 1cm, clip,width=\linewidth]{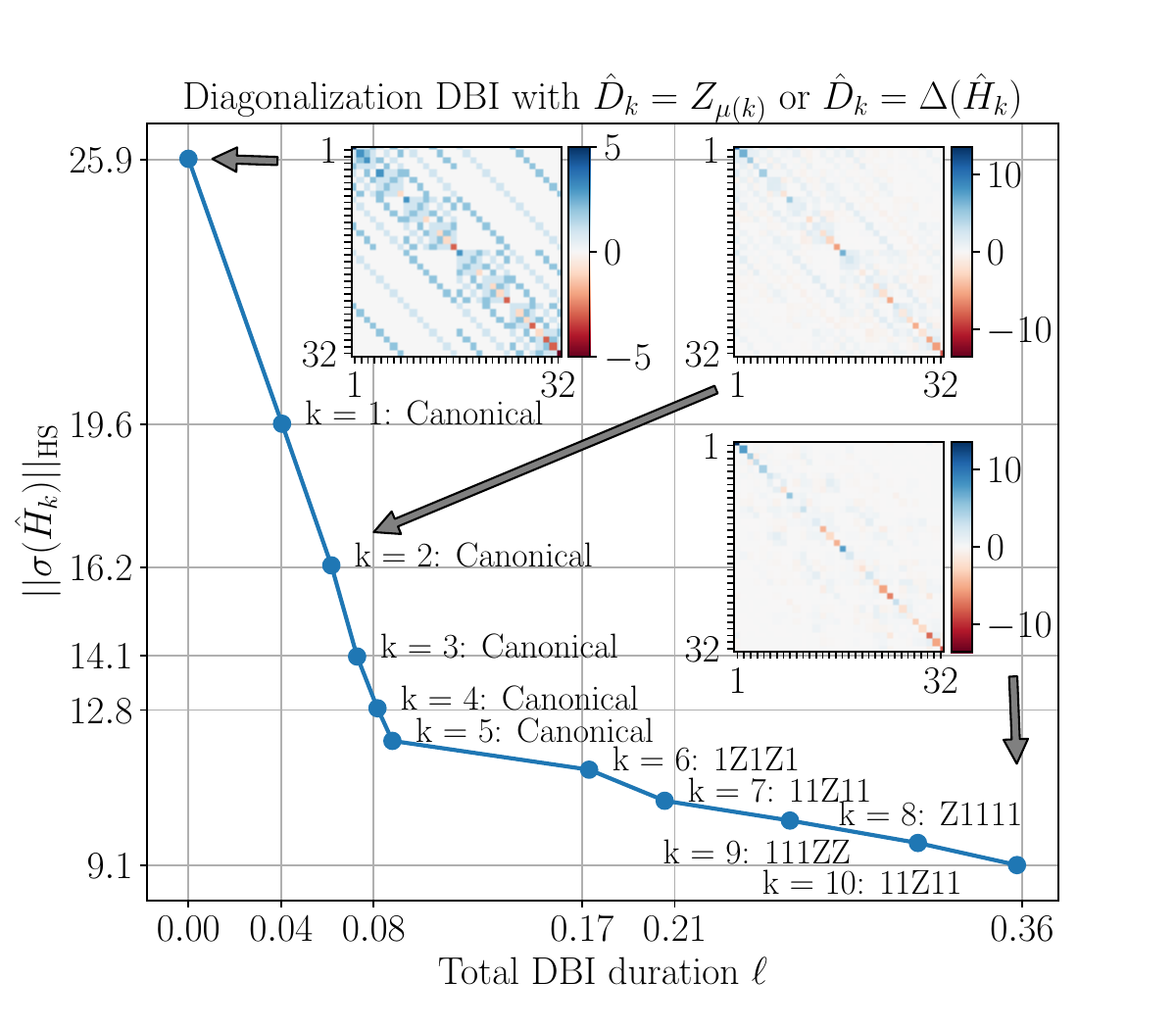}
  \caption{Illustration of variational DBI choosing either the canonical bracket with $\D_k = \d{\h_k}$ or $\D_k = \p_{\mu}$, whichever yields a larger decrease of $\|\o{\h_k}\|_\mathrm{HS}$ applied for TLFIM on $L=5$ sites and $J_X=2$.
  This example shows that by going beyond the dephasing channel which gives rise to the GWW DBI, it is  possible to proceed with the diagonalization because for $k\ge 6$ the product operators yield better diagonalizing rotations.}
  \label{figL5}
\end{figure}

\paragraph{Numerical example: Variational $\D_k$ for DBI.}
Fig.~\ref{figL5} shows the numerical results for flowing the Hamiltonian after optimizing the flow step duration for all generators in a selection that included the canonical bracket and $\wmu$ for all possible choices of phase flip operators $\Z_\mu$ in Eq.~\eqref{eqwmu}.
The insets show the full matrices of the initial and flowed Hamiltonians, demonstrating gradually increasing diagonal-dominance of the flowed Hamiltonians.
Due to rapidly growing Hilbert space dimension, the visualization of the full Hamiltonian matrix becomes impractical for increasing system sizes so a state-selective analysis of the type shown in Fig.~\ref{figeigsGWW} should be preferred but appendix~\ref{app:numerics} presents an analogous plot for $L=7$ qubits.

\begin{figure*}
  \includegraphics[trim = 0 0 0 0, clip,width=\linewidth]{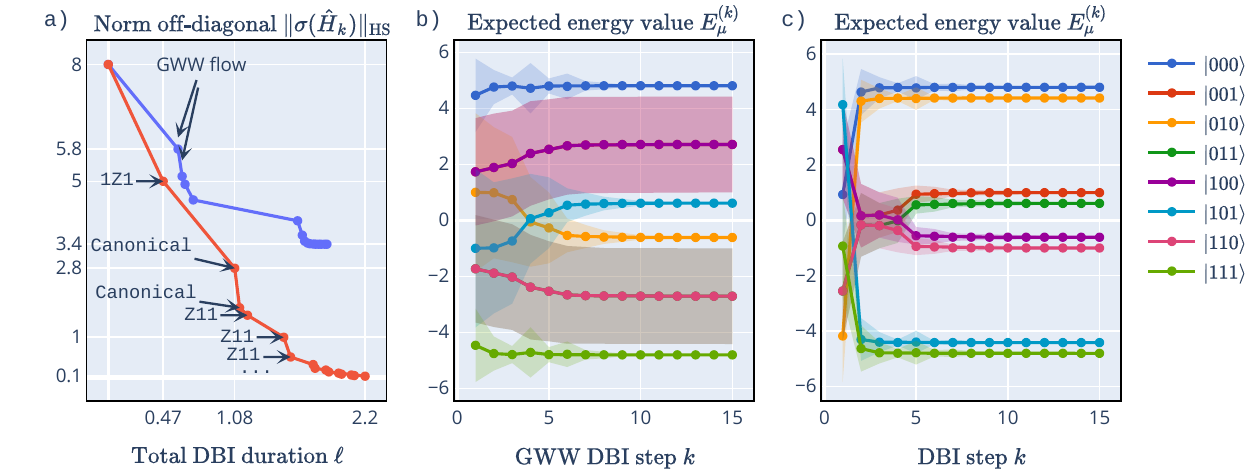}
  \caption{Comparison of variational and GWW DBIs. 
  a) The Hilbert-Schmidt norm of the off-diagonal restriction decreases more rapidly for the variational DBI.
  The appearance of the variational bracket $\w^{(010)}$ has a significant impact, namely to lift degeneracy.
  b)  Each line is the expected energy value during the progression of the GWW DBI evaluated for all $2^L=8$ computational basis initial states.
The large energy fluctuation of the states with intermediate expected energy value can be explained by convergence towards a superposition of two degenerate states.
c) The degeneracy of the initial computational basis states becomes fully lifted after $k=4$ flow steps and the expected energy fluctuation becomes negligible in the course of the DBI.
  }
  \label{figL3diag}
\end{figure*}

\paragraph{Numerical example: Full diagonalization.}
Fig.~\ref{figL3diag} shows a comparison between  the GWW DBI and a DBI with variationally selected brackets.
The decrease of the Hilbert-Schmidt norm of the off-diagonal restriction is  faster for the DBI exploring more types of double-bracket rotations.
Additionally, after only $k=4$ steps the degeneracy of the initial computational basis states is lifted.
Thus, the variational DBI  lifts all degeneracies and suppresses the off-diagonal interaction terms faster.
It appears that the canonical bracket has the role of \sgd and $\wmu$ brackets lifts degeneracy.
It should be noted that for the extremal fully polarized states the GWW flow converged faster than in the variational diagonalization DBI.

\paragraph{Numerical example: 2d Anderson model.}
The Anderson model is a minimal toy-model for studying disorder-induced breakdown of transport.
Using the notation 
\textless$x,y$\textgreater  ~to denote all nearest-neighbor positions on the square $L \times L$ lattice $[L]^{\times 2}$, the Anderson Hamiltonian reads
\begin{align}
    \h_\text{AI} = \sum_{<x,y>}\a_x^\dagger a_y + \sum_{x\in[L]^{\times 2}}B_x\a_x^\dagger \a_x\ ,
\end{align}
where $B_x\in [0,1.5]$ are independent and identically distributed according to the uniform distribution.
Here we consider $\a$ to be the annihilation operator of either a fermionic system where
$\a_x \a_y^\dagger + a_y^\dagger a_x = \delta_{x,y}$ or of a bosonic system when $\a_x \a_y^\dagger - a_y^\dagger a_x = \delta_{x,y}$ for all $x,y\in [L]^{\times 2}$.
This Hamiltonian can be written as $\h(h) = \sum_{x,y} h_{x,y} \a^\dagger_x \a_y$ for an appropriate matrix $\h=\h^\dagger$ collecting all couplings, e.g., for $\h_\text{AI}$ we have $h_{x,y}=1$ for nearest-neighbor sites, $h_{x,x}= B_x$ and zero otherwise.

\begin{figure*}
    \centering    \includegraphics[width=\linewidth]{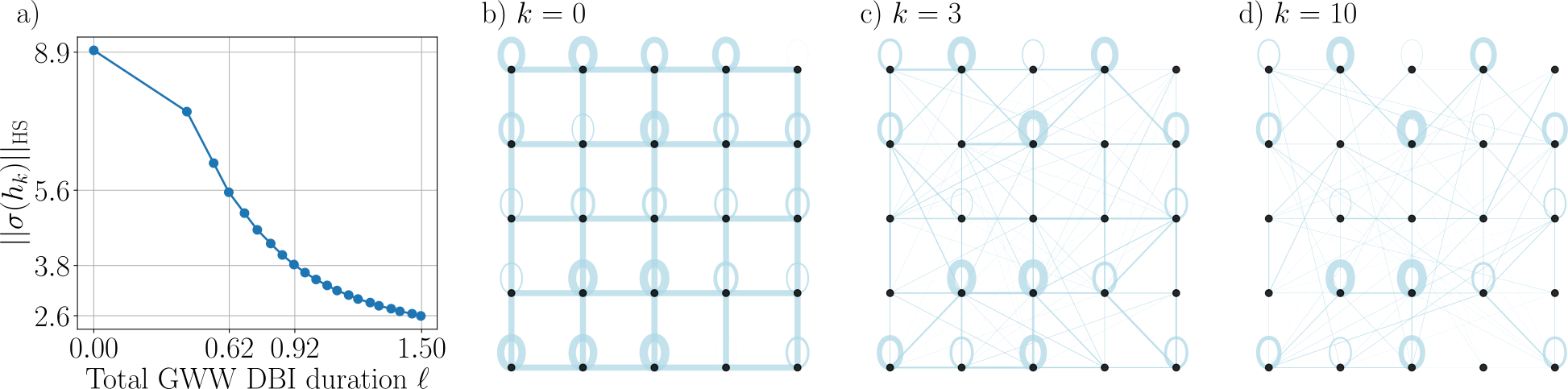}
    \caption{Results of the GWW DBI for the 2d Anderson model. a) Decrease of the Hilbert-Schmidt norm of the off-diagonal restriction $\|\o{h_k}\|_\text{HS}$. 
    b) A representation of the initial coupling matrix $h_0$ of the input model $\h_0=\h(h_0)$ as a 2d square graph with edge widths encoding the coupling strengths. 
    According to the 2d geometry, there are only nearest-neighbor couplings and additionally we consider random uniform disorder which is encoded in the width of the self-links.
    c) Just a handful of DBI steps suffices to nearly decouple the neighbors.
    d) As the DBI progresses, distant sites become coupled but the strength is negligible and the system Hamiltonian becomes nearly diagonal, with sites of the lattice being nearly decoupled.
    The width of the self-links encodes the approximation to the single-particle eigenvalues.
    }
    \label{figAI}
\end{figure*}

For both fermions and bosons, these coupling matrices generate the Gaussian representation of the unitary group and in particular we have the Lie-algebra homomorphism
\begin{align}
    \wcan(\h(h)) &= [\d{\h(h)},\h(h)]\\
    &=\h([\d {h},h]) = \h(\wcan(h))\ .
\end{align}
Thus, we can consider a Gaussian DBI in the single-particle space with
\def\miniwcan{ w}
\begin{align}
    \miniwcan(h) = [\d h,h]\ 
\end{align}
such that
\begin{align}
    h_{k+1} = e^{s_k\miniwcan_k} h_k e^{-s_k\miniwcan_k}
\end{align}
where $\miniwcan_k=[\d{h_k},h_k] $ and $h_0$ is the initial coupling matrix.
If we solve the Gaussian DBI then we obtain the full Hamiltonians acting on the Hilbert space by setting $\h_k = \h(h_k)$.

Fig.~\ref{figAI} shows the results of performing a GWW DBI with optimized recursive step durations $s_k$ for a randomly sampled instance of the Anderson Hamiltonian.
Panel a) shows the decrease of the norm of off-diagonal restrictions and panels b,c,d) show graph representations of the coupling matrices during the iteration.
This routine can in principle be performed for much larger systems (even $100 \times 100$ on a regular laptop) but visualization would become difficult due to clutter.
For the modest $5 \times 5$ lattice size, we see all elements of the diagnalization via a DBI: Initially the coupling between nearest-neighbor sites are substantial and successive double-bracket rotations lower these coupling strength.
In the process further off-diagonal couplings become generated but carry almost negligble strength.
\subsection{Heuristic strategies for lowering implementation cost}

\begin{figure}
  \includegraphics[trim = 0cm 0 0cm 0, clip,width=.99\linewidth]{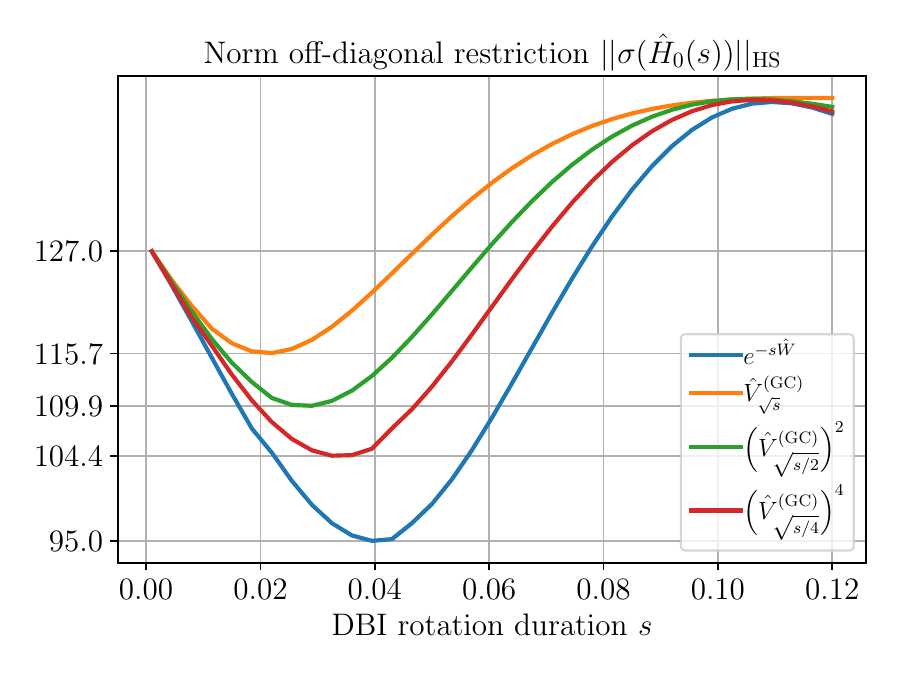}
  \caption{Repetitions of a group-commutator rotation step allow to increase $\sigma$-decrease. In most common applications of the group commutator formula~\eqref{eq:groupcommutator} a small flow step duration~$s$ is considered because then an evolution governed by a commutator is approximated. However, heuristic group commutator steps for sizeable durations $s$ are examples of \sgd unitaries. The $\sigma$-decrease is largest in the limit of repeating many times the group commutator parametrized by a short flow step duration~$s$.}
  \label{figmoresteps}
\end{figure}

\paragraph{Fewer recursive repetitions: Repeating GCI steps.}
The numerical examples presented results for DBIs but for hardware implementation, the GCI quantum algorithm would be used.
Given, $\D_k$ one may optimize $s_k$ to give the highest $\sigma$-decrease from a single group commutator rotation~\eqref{eq:GCAB}.
However, it turns out that it may be more useful to repeat twice the same group commutator formula to target more closely a double-bracket rotation of the corresponding DBI.

Fig.~\ref{figmoresteps} illustrates the  $\sigma$-decrease obtained by targeting approximations of a double-bracket rotation so $\bigl(\vgc_{\sqrt{s/m}}\bigr)^m\approx e^{-s\w}$.
The model is the TLFIM as in Fig.~\ref{figeigsGWW}  on $L=9$ sites with $J_X=2$.
The results show that the GWW double-bracket rotation gives a larger $\sigma$-decrease than rotations by the group commutator.

This means that in a GCI, instead of performing a single group commutator step to obtain $\h_{k+1}$ and proceed onwards, one should rather obtain $\h_{k+1}$ by several repetitions of the group commutator involving $\D_k$ and $\h_k$.
This should be more efficient because prematurely switching the generator $K$ times will necessitate $\NC(K)=(3+o(1))^K$ queries to the evolution under the input Hamiltonian while repeating the same GCI step $K$ times would scale linearly $\overline \NC(K)=O(K)$
For this reason, in practice, this heuristic strategy should be receive preferential consideration.

\paragraph{Efficient pinching: Making use of locality.}

The pinching scheme based on Eq.~\eqref{eq:unitarymixing} uses exponentially many phase flips.
This may be needed in the worst-case, i.e., for dense matrices~\cite{childs2010limitations} but is not practically applicable. 
This number can be reduced if $\o{\h_k}$ is sparse, i.e., a sum of a small number of Pauli operators.
Appendix~\ref{app:sparsity} shows that restricting the product in Eq.~\eqref{eq:unitarymixing} to a smaller number of phase flips selected uniformly at random allows to  with high probability accurately approximate the evolution under a pinched generator.
Local input Hamiltonians are sparse and for early stages of DBIs Lieb-Robinson bounds apply and thus efficient pinching heuristics can be employed.
Note, that Hastings conjuctured that Lieb-Robinson bounds do not hold for all double-bracket flows~\cite{hastings2022lieb} so sparsity may be reduced very quickly also in DBIs.

More specifically, the App.~\ref{app:sparsity} provides a bound on how many phase flips suffice to achieve a desired approximation.
This depends quadratically on the sparsity $S$ of the given Hamiltonian.
For initial Hamiltonians we have $S=2L$ for $\h_\text{TFIM}$ and $S=3L$ for $\h_\text{TLFIM}$.
In both these cases the sparsity is linear in the number of qubits $S = aL$, with $a>0$,  and the bound in  App.~\ref{app:sparsity} ensures an accurate pinching approximation using $R = a^2 L^{2} J^2 \epsilon^{-2}\log(2aL/\delta)$ phase flips, where $J$ is the largest coupling strength, $\epsilon$ is the required pinching precision and $1-\delta$ is the success probability.

In general the sparsity is the number of non-negligible off-diagonal interaction terms and cannot be computed a priori for evolved Hamiltonians beyond using arguments based on Lieb-Robinson bounds.
Fig.~\ref{figeigsGWW}a) shows an example where surprisingly short DBI steps have been locally optimal.
By comparison, if one would consider a quench evolution of duration of the ordered as considered in the DBI, then the Lieb-Robinson cone would be much smaller than the total system size.

Regarding runtime, we may additionally consider the intuition provided by the notion of $\sigma$-decreasing: Accurate pinching is only required for approximating the GWW DBI but other generators may also decrease the off-diagonal terms.
One may choose to explore heuristics wherein pinching uses a number $R$ of phase flips independent of the encountered sparsity.
If $R$ depends polynomially on the system size, the  runtime would  still be exponential in the number of DBI steps $K$ due to the recursion but if $K$ is kept constant it would be efficient in the system size.
Finally, on can  experimentally check if $R$ independent of the system size is enough to 
obtain a \sgd DBI.

\paragraph{Classical-quantum hybrids: Precompiling.} 
GWW flow has received a wealth of attention by quantum many-body physics theorists~\cite{kehrein_flow,wegner2006flow}.
These existing classical algorithms can aid the quantum computation wherein the first steps of the GWW DBI could be computed numerically and the classical prescription of the canonical brackets would be implemented directly by Hamiltonian simulation.
More specifically, a typical approximation is to restrict $\h_k$ to be a quantum Ising model with couplings depending on the recursion step $k$. 
Each of such models then can give rise to a respective bracket $\w_k$ which can be used for a double-bracket rotation.

For example, quantum computed dephasing is not needed to run the first step of the GWW DBI for Ising models because we know $\d{\h_\text{TFIM}}$.
Generalizing this to precompiling, which replaces the DBI by direct implementation of $s^{-s_k\w_k}$, is efficient in terms of the sparsity of the DBI Hamiltonians $\h_k$ and, if viable, it should be preferred over running the full scheme which includes dephasing.
On the other hand, once the assistance of the classical computations will reach its limits the actual quantum computation can be used to extend the diagonalizing iteration further.

Note, that for systems with sizes considered in the numerical examples presented in this work essentially the entire GWW DBI can be precompiled.
For $L=9$  the decomposition of any $\h_k$ or $\w_k$ involves at most $2^{2L} \approx  2.6\times 10^5$ different Pauli operators.
This is the worst case upper bound on the number of interaction terms that may be necessary to precompile steps of the classically prescribed GWW DBI circuit.
Of course, for first DBI steps that number is much smaller and it may be a very useful heuristic to perform Hamiltonian simulation for first few steps of the GWW DBI
\begin{align}
  \v^\text{(pGWW)} = e^{-s_1\w_1}\ldots e^{-s_N\w_N}\ .
  \label{eq:precompiling}
\end{align}
When precompiling, one should directly include the heuristic presented above and locally optimize \sgd flow step durations $s_1,\ldots, s_N$.
Inspecting the precompiled GWW DBI circuit \eqref{eq:precompiling} makes apparent its practicality in near-term: It suffices to just simply use Hamiltonian simulation for Hamiltonians $-i\w_k$ as generators.
Doing this accurately would achieve the same results as presented, e.g., in Fig.~\ref{figeigsGWW} and thus GWW DBI allows to compile approximate diagonalization circuits.

Precompiling will eventually cease being practical  but further quantum computation starting from the shifted frame
\begin{align}
  \h_0 \rightarrow \J_0 = \v^{\text{(pGWW)}\dagger}\h_0 {\v}^\text{(pGWW)}
\end{align}
may be possible.
For some fixed number of extra DBI steps a \sgd DBI may proceed via quantum computation: Starting from the shifted frame a variational assignment of the DBI step duration such that $\J_s = \vgc_s(\J)^\dagger \J \vgc_s(\J)$ is maximimally $\sigma$-decreasing can be considered.

\paragraph{Comments on the necessary number of recursion steps.}

The GWW flow acting on an arbitrary product state will yield an approximation to an eigenstate of $\h_0$ and for local Hamiltonians we know certain properties of  generic eigenstates.
For example their bi-partitions are expected to have extensive entanglement cost  which suggests that correlations should be allowed to span substantial portions of the system~\cite{HUANG2019594}.
If the GWW flow as it establishes correlations will be subject to a Lieb-Robinson bound~\cite{hastings2022lieb,lieb1972finite,YoungReviewLR} then taking the total number of flow steps in proportionality to the system size $K\sim L$ may be necessary.

When exploring qualitative rather than quantitative questions about quantum matter, taking $K$ independent of the system size $L$ may be possible.
The eigenstate thermalization hypothesis, see Refs.~\cite{kuwahara,QECETH,alhambra} for proven instances, suggests that the reductions of eigenstates agree with a thermal density matrix.
These, in turn, obey a so-called area law for the mutual information~\cite{PhysRevLett.100.070502} which is related to decaying correlations, i.e., in physically relevant cases choosing $N$ independent of the system size $L$ may be considered as long as it is sufficiently large to match the required range of correlations.
Many-body localized systems have eigenstates which are lowly entangled and indeed applications of double-bracket flow methods have been reported~\cite{MBLPRLPhysRevLett.119.075701}.
Regular quantum statistical systems exhibit a correlation length which decays with the temperature.

In summary this discussion is qualitative and for practical implementations one can instead make a case-by-case analysis like in Fig.~\ref{figeigsGWW}.
Having said that qualitatively entanglement properties of eigenstates can provide a guideline to identify systems where DBIs will perform particularly well.
Many-body localized spin~\cite{MBLPRLPhysRevLett.119.075701} or fermionic~\cite{ScipostMBL10.21468/SciPostPhys.14.5.125} systems seem particularly appealing for potential explorations.

\section{Głazek-Wilson-Wegner flow}
\label{GWWreview}
The continuous Głazek-Wilson-Wegner (GWW) flow is a smooth family of unitary operators $\U_\ell$ parametrized by the total flow duration  $\ell\in[0,\infty)$.
Given $\U_\ell$, we define the GWW flowed Hamiltonian 
\begin{align}
  \H_\ell = \U_\ell^\dagger \h_0 \U_\ell\ ,
  \label{eq:hell}
\end{align}
where $\h_0$ is the input Hamiltonian.
As we will discuss below, generically the flowed Hamiltonian $\H_\ell$ is increasingly more diagonal as $\ell$ increases and so the GWW flow can be used to obtain approximations to a unitary diagonalization transformation.

The GWW flow unitaries $\Uell$ are defined using the requirement that the flowed Hamiltonian $\Hell$ solves the GWW flow equation~\cite{PhysRevD.48.5863,PhysRevD.49.4214,wegner1994flow,wegner2006flow,kehrein_flow}
\begin{align}
  \partial_\ell \Hell = [ \Well, \Hell]\ ,
  \label{eq:derivativehell}
\end{align}
where
\begin{align}
  \Well=[\dHell,\oHell]\ 
  \label{eq:well}
\end{align}
is the commutator of the diagonal $\dHell$ and off-diagonal $\oHell$ restrictions of the flowed Hamiltonian.
The choice \eqref{eq:well}  will be referred to again as the canonical bracket.

The GWW flow allows to sequentially diagonalize Hamiltonians which follows from the fact that the squared Hilbert-Schmidt norm of the off-diagonal restriction is non-increasing~\cite{kehrein_flow,wegner2006flow}
\begin{align}
  \partial_\ell \|\o{\Hell}\|_\text{HS}^2 = -2\,\|\Well\|_\text{HS}^2 \le 0\  .
  \label{eq:wegnerrate}
\end{align}
This equation states that the magnitude of the off-diagonal restriction of flowed Hamiltonians generically decreases and the negative rate is equal to twice the Hilbert-Schmidt norm of the canonical bracket.
Appendix~\ref{app:existence} provides a detailed discussion and self-contained derivation of this key monotonicity feature. 
Eq.~\eqref{eq:wegnerrate} can be also derived similarly to~Eq.~\eqref{GWWDBI}  because we may consider $\J = \Hell$ and take the limit of infinitesimally short double-bracket rotation $s\rightarrow 0$ and use that $\wcan(\J) = \Well$.
The only difference is that in this case we would be performing the Taylor expansion for the GWW flow.

The monotonicity relation \eqref{eq:wegnerrate} is most crucial because it shows why the flow is diagonalizing: The magnitude of off-diagonal terms in the flowed Hamiltonian can stop changing only if the Hilbert-Schmidt norm of $\Well$ vanishes.
In turn this implies $\Well=0$ which by definition~\eqref{eq:well} happens only if the diagonal restriction commutes with the off-diagonal restriction.
This is true if one of these restrictions vanishes and by~\eqref{eq:wegnerrate} it is the off-diagonal restriction that will be flowed away.
Alternatively, on some occasions the two restrictions can both be non-trivial but then they must be commuting so simultaneously diagonalizable.
Thus, in generic situations the fixed point of the GWW flow will be a diagonal form of $\h$ but it is also possible that the flow will reach only a block-diagonal form due to, e.g., spectral degeneracies~\cite{deift1983ordinary,helmke_moore_optimization}.

\subsection{GWW flow limit of the GWW DBI}
\label{GWWQA}

In this subsection we present claims proved in the appendix, namely with only small overhead the GWW GCI can be made to converge onto the GWW flow too.
We next describe the overhead needed to setup a converging GWW GCI.

We begin the formulation of the modified GWW GCI  by fixing the desired flow duration~$\ell$ and the total number of recursive steps $K$.
Based on these we set the durations of the recursive steps to be equidistant and take $s= \ell/K$.

As above, we will use the approximation to the evolution under the pinched generator denoted by
\begin{align}
  \vd_{\dell}(\J) = \prod_{\mu\in\{0,1\}^{\times L}} \p_\mu e^{-i\tfrac\dell D\J}\p_\mu\ 
  \label{pinching_component}
\end{align}
and define the modified group commutator step by
\def\vgww{\hat U^{\mtiny{\mathrm{(GWW)}}}}
\begin{align}
   \vgww_s(\J) = \left( \vd_{\mtiny{r}}(\J )^\dagger\right)^R e^{i\sqrt {s} \J}\left(\vd_{\mtiny{r}}(\J )\right)^R e^{-i\sqrt{s} \J}\ ,
  \label{eq:gcmain}
\end{align}
where $r= \sqrt{s}/R$ and $R$ is an integer which sets the refinement scale needed to converge towards the GWW flow.

We next define the  modified GWW GCI unitaries $\v_k$ after $k$ steps by 
\begin{align}
  \v_k = \v_{k-1} \vgww_s(\J_k) \ ,
  \label{uk22main}
\end{align}
and we set
  $\v_{0} = \id$ and $\J_0=\h_0$.
The iterated Hamiltonian is given by
\begin{align}
  \J_{k}=  \v_{{k}}^\dagger\J_0 \u_{{k}}\ 
  \label{hk22main}\ .
\end{align}

This modified GWW GCI  converges to the GWW flow $\lim_{K\rightarrow \infty} \v_K=\Uell$ which is captured by the following proposition.
\begin{proposition}[Quantum algorithm for GWW flow]
\label{proplimitGWW}
For fixed total flow duration $\ell\in \mathbb [0,\infty)$ and $K$ equidistant flow steps, the refined GWW GCI quantum algorithm converges to the continuous GWW flow according to
\begin{align}
 \| \U_{\ell} - \v_K\|  
  \le& O\left( \sqrt{\ell} e ^{16 \|\h_0\|^2 \ell} K^{-1/2}\right)\ .
 \end{align}
\end{proposition}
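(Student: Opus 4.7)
The plan is to split $\|\U_\ell - \v_N\|$ via the triangle inequality by inserting the \emph{exact} discretised GWW unitary between $\U_\ell$ and $\v_N$. Let $\un := e^{-s\w_N}\cdots e^{-s\w_1}$ be the product of the exact flow steps generated by the canonical brackets $\w_k = [\d{\h_{k-1}}, \o{\h_{k-1}}]$ of Eq.~\eqref{eq:wkp}, with $\h_k$ defined by the discrete iteration \eqref{eq:hk}. Then
\begin{align*}
  \|\U_\ell - \v_N\| \;\le\; \|\U_\ell - \un\| + \|\un - \v_N\|,
\end{align*}
the first summand being the ODE-discretisation error and the second the quantum-implementation error, and I would bound each separately and combine.

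For the first term I would invoke the convergence theorem already sketched in appendix~\ref{app:discretization}. The map $\H \mapsto [\d{\H}, \o \H]$ is operator-norm Lipschitz with constant of order $\|\h\|^2$, since $\|\Hell\| = \|\h\|$ is a flow invariant, so the ODE $\partial_\ell \U_\ell = -\U_\ell \Well$ falls under the standard Picard--Lindel\"{o}f/Gronwall framework. The Euler-type approximation $\un$ with step $s = \ell/N$ then satisfies $\|\U_\ell - \un\| = O(e^{c\|\h\|^2\ell}\, \ell/N)$, which is sub-leading compared to the claimed $\sqrt{\ell/N}$ rate and thus disappears into the final bound.

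For the second term I would bound the per-step deviation $\epsilon_k := \|\vgc_s(\J_{k-1}) - e^{-s[\d{\J_{k-1}}, \o{\J_{k-1}}]}\|$. Expanding Eq.~\eqref{eq:gcmain}, two error sources appear: (i) the ideal group-commutator deviation from Eq.~\eqref{eq:groupcommutator} with constituent evolution time $\sqrt{2s}$, contributing $O(\|\h\|^3 s^{3/2})$; and (ii) the Trotterised pinching $\vd_{\sqrt{2s}}$ of Eq.~\eqref{eq:unitarymixing}, whose per-insertion error is of naive order $O(\|\h\|^2 s)$. The decisive observation---and the main technical obstacle---is that the two pinching insertions enter $\vgc_s = \vd^\dagger Q \vd Q'$ in conjugate positions, so writing $\vd_{\sqrt{2s}} = e^{-i\sqrt{2s}\d{\J_{k-1}}} + \Delta$ with $\Delta = -isB + O(s^{3/2})$ for a Hermitian $B$ (the leading Trotter error being anti-Hermitian), the two $\Delta$-insertions combine as $P^\dagger Q\Delta Q' + \Delta^\dagger Q P Q'$ and cancel at order $s$ once $P, Q, Q'$ are replaced by identity; the residual is $O(\|\h\|^3 s^{3/2})$, so altogether $\epsilon_k \le C\|\h\|^3 s^{3/2}$.

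Finally I would compose the $N$ steps. Because the quantum algorithm uses the drifted frame $\J_{k-1} = \v_{k-1}^\dagger \h \v_{k-1}$ rather than the exact $\h_{k-1}$, the generator actually implemented at step $k$ differs from $\w_k$; bounding this discrepancy with the same Lipschitz constant yields a recursion $\xi_k \le (1 + c'\|\h\|^2 s)\xi_{k-1} + \epsilon_k$ for $\xi_k := \|\u^{(k)} - \v_k\|$, whose iterate is
\begin{align*}
  \xi_N \;\le\; e^{c'\|\h\|^2\ell}\sum_{k=1}^N \epsilon_k \;=\; O\!\left(e^{c'\|\h\|^2\ell}\, \ell\,\|\h\|^3\sqrt{\ell/N}\right),
\end{align*}
which is absorbed into the stated $O(e^{16\|\h\|^2\ell}\sqrt{\ell/N})$ after folding the $\|\h\|^3\ell$ factor into the prefactor. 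The hard part is securing the pinching cancellation in the previous paragraph: without it the per-step pinching errors sum to a non-vanishing $O(\ell\|\h\|^2)$ and the rate cannot be driven to zero in $N$; the Gronwall compounding is then routine.
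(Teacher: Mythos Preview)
Your approach is structurally identical to the paper's: the same triangle split through the exact discretisation $\u_N$, the same Gronwall-type recursion for $\|\U_\ell - \u_N\|$ (the paper's Proposition~\ref{propdisc}), and the same Lipschitz-plus-recursion compounding for $\|\u_N - \v_N\|$ (the paper's Proposition~\ref{propdiscapp}), solved via the discrete Gronwall lemma (Lemma~\ref{recursionlemma}).

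The one substantive difference is your treatment of the pinching error inside the group commutator. The paper does \emph{not} carry out the cancellation you describe: its Proposition~\ref{gcerror} simply telescopes $\vgc_s$ through the ideal group commutator and picks up two separate $\|\hat E^{(\Delta)}\|$ contributions, which it then records as $16s^2\|\J\|^2$. As you correctly note, the pinching in $\vgc_s$ is invoked at time $r=\sqrt{2s}$, so by the paper's own Proposition~\ref{pec} one has $\|\hat E^{(\Delta)}\|\le 8r^2\|\J\|^2 = 16s\,\|\J\|^2$, which is $O(s)$ per step and would sum to a non-vanishing $O(\ell\|\h\|^2)$ over $N$ steps. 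The paper's stated $O(s^2)$ at this point appears to be a slip. Your cancellation argument --- that the two pinching insertions enter as $\Delta$ and $\Delta^\dagger$ sandwiched between unitaries that are $I+O(\sqrt s)$, with $\Delta$ anti-Hermitian to leading order since both $\vd_r$ and $e^{-ir\d\J}$ are unitary --- is correct and is precisely what is needed to push the per-step pinching contribution down to $O(s^{3/2})$ and thereby secure the $\sqrt{\ell/N}$ rate. So your proof follows the paper's route but actually closes a gap in it.
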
 
The proof is in appendix~\ref{app:quantumalgorithm}.
The proposition shows conceptually that despite the GWW flow being non-linear, it can be implemented in the quantum computing framework.
Indeed, in this work the GWW flow plays the role of providing the foundation for the GWW DBI by setting $\D_k = \Delta(\h_k)$ which for sufficiently short durations $s_k$ is promised to reduce the norm of the off-diagonal restrictions.

Quantitatively,  the obtained convergence rate precludes meaningful practical applications.
The inverse square root dependence on the number of recursion steps stems from the scaling of the error term of the group commutator approximation~\eqref{eq:groupcommutator}.
This can be improved by considering higher-order product formulas~\cite{HigherOrderGCPhysRevResearch.4.013191} but the exponential dependence on $\ell$ may be a characteristic feature of recursive discretizations because Prop.~\ref{proplimitGWW} in app.~\ref{app:quantumalgorithm} proves the convergence of the GWW GCI to the GWW DBI and a similar scaling appears  there too.
Thus it appears that continuous double-bracket flows are useful conceptually and DBIs are a better suited notion for experimental implementations.

\section{Relation to prior work and open questions}
\label{context}

\subsection{Towards quantum hardware applications}
\paragraph{Regarding variational quantum algorithms.}
Similar to the design principle of regular variational approaches, running double-bracket quantum algorithms relies directly on primitive gates~\cite{KishorRevModPhys.94.015004}.
Being algorithmic, a diagonalization DBI may sometimes appear less efficient than a circuit ansatz optimized by brute-force.
This is expected especially for  miniature systems of a handful qubits because a less restrictive ansatz allows to `overfit' and achieve an input-specific compression which is unlikely to generalize to larger systems.
Similarly, if a device is very noisy and only miniature circuits involving just a handful of layers of entangling gates are available then brute-force optimization may produce better results than the more algorithmic DBI approach~\cite{larose2019variational,Zeng_2021,commeau2020variational,cirstoiu2020variational,gibbs2022long}.

For larger quantum computations (larger number of qubits and larger circuit depths) diagonalization DBIs will perform better than circuits optimized by brute-force.
Finding the right circuit is computationally hard~\cite{PhysRevLett.127.120502,stilck2021limitations} and in practice barren plateaus are prohibitive~\cite{KishorRevModPhys.94.015004}.
In contrast, a DBI can be set up easily and can provide a feasible solution to the quantum compiling task of approximating an eigenstate.
If thousands of layers of entangling gates are available then DBIs will fare better as soon a brute-force optimization will become unpractical for the mere reason of being able to use the quantum hardware meaningfully.

An unstructured circuit ansatz optimized by brute-force can be used as preconditioning for the more sophisticated DBI approach which can take over for a second stage of state preparation.
It is open whether DBIs, together with short-depth variational circuits, will suffice for advancing material science without resorting to the even more advanced quantum algorithms based on phase estimation.
See Ref.~\cite{riemannianflowPhysRevA.107.062421} for a recent work on optimizing variational circuits using Riemannian gradient flow methods related to double-bracket flows.

\paragraph{Regarding quantum phase-estimation algorithms.}
For the central task of obtaining physical properties of quantum many-body systems through quantum computation it will be useful to prepare ground and thermal states~\cite{KnillOptimal,PoulinWocjan,temme2011quantum,ge2019faster,lin2020near,gilyen2019quantum,motta2020determining,tan2020quantum,dong2022ground,PRXQuantum.3.010318,epperly2021theory}.
With few exceptions, see, e.g., Ref.~\cite{motta2020determining}, most explicit quantum algorithms rely crucially on controlled-unitary operations~\cite{kitaev1995quantum,lin2022lecture} and then tackle the resulting probabilistic component by amplifying the success probability in phase estimation~\cite{brassard2002quantum,PoulinWocjan,kothari,parrish2019quantum,stair2020multireference}.

Without preconditioning, quantum algorithms based on phase estimation need an exponential runtime for ground state preparation~\cite{ge2019faster,lin2020near}.
For  double-bracket flow it is difficult to predict when a target accuracy will be reached.
For similar reasons, there is no runtime guarantee for variational  double-bracket iterations and understanding it better is a key theoretical challange to advance this approach further.
For proof-of-principle applications and benchmarking for material science applications, DBIs can be effective just after a few recursion steps.
This may be indicative of an exponential convergence  which appears for the double-bracket flow of the QR algorithm~\cite{deift1983ordinary, helmke_moore_optimization}.

\paragraph{Regarding DBI parametrization.}
The variational brackets studied here have been simple, namely products of Pauli-$Z$ operators.
Is it more advantageous to use linear combinations of Pauli-$Z$ operators?
For both TFIM and TLFIM $\d{\h_0}=\sum_{i=1}^L \Z_i$ is in form of such linear combination and in Fig.~\ref{figL3diag} is outperformed by a single product operator.

The canonical bracket loses efficiency in presence of degeneracies in $\h_0$ and in general lack of degeneracies allows to ensure to eventually obtain diagonalization~\cite{BROCKETT199179,helmke_moore_optimization,deift1983ordinary}.
Are spectral gaps important when aiming for fast heuristic approximations?

Another aspect of interest is choosing the computational basis state which will offer the most insight about the physical properties of the system.
What needs to happen during the iteration for the fully polarized state $\ket{11\ldots1}$ to not flow towards the ground state?

\subsection{Towards new quantum algorithms}

\paragraph{Regarding other double-bracket flows.}
There appear to be further possibilities for creating novel quantum algorithms by building on the various continuous flows developed in the field of dynamical systems.
This includes the early work in Ref.~\cite{deift1983ordinary} which discusses continuous flow equations implementing the QR decomposition and proposes  a continuous version of the Golub-Kahan algorithm for computing the singular value decomposition~\cite{golub1965calculating}.
In Ref.~\cite{BROCKETT199179} double-bracket flows for diagonalization, linear programming and sorting have been formulated, see also Ref.~\cite{BROCKETT1989761} for matching optimization and for simulating any finite automaton~\cite{brockett2005smooth}.
Thus, there appears to be a vast literature on analog classical computing which explores how non-linear differential equations can encode various computational tasks~\cite{Chu_iterations,helmke_moore_optimization,bloch1985completely,BLOCH1985103,bloch1990steepest,bloch1992completely,smith1993geometric}.
In this work the focus has been solely on the task of preparing eigenstates of quantum many-body systems but quantizing these ideas can add to the body of explicit quantum algorithms that we know of.

\paragraph{Regarding analog quantum simulation.}
It is very likely that when  repurposing  other dynamical equations for quantum computing one will face challenges similar to those discussed in Sec.~\ref{GWWreview}.
However, we may ask if a fully analog implementation of a double-bracket flow is viable? 
Here we resorted to a discretization which raises the question if recursive circuits can be avoided in the digital approach?
Positive answers to these questions would likely affect an entire family of, possibly quantum, algorithms.
Ref.~\cite{QDP} takes steps towards using input quantum states as quantum instructions for double-bracket quantum algorithms but the input system size needed to grow exponentially with the duration of evolution.

\paragraph{Regarding dephasing as a quantum algorithmic component.}
The key physical property needed to embed the GWW flow into the quantum computing framework has been irreversible dephasing.
The approach taken here parallels dynamical decoupling~\cite{ezzell2022dynamical}.
The presented implementation proposal uses that pinching is a mixed-unitary channel but the preparation complexity has been exponential in the system size if no structure of the problem is promised.
Arguably this is natural because of the size of the input matrix but it would be useful to avoid this preparation inefficiency.
The appendix~\ref{app:sparsity} explores the case of pinching sparse Hamiltonians, however, if the recent conjecture regarding Lieb-Robinson bounds of double-bracket flows is true~\cite{hastings2022lieb} then the flowed Hamiltonian could quickly become dense.
It should be noted that the Brockett double-bracket flow~\cite{brockett1991dynamical, brockett1991dynamical2} does not feature dephasing and yet is also diagonalizing~\cite{helmke_moore_optimization}.

\paragraph{Regarding quantum algorithms using mixed-unitary channels.}
It occurs worthwhile to consider the question: What use can we make of mixed-unitary maps which involve polynomial rather than exponential number of unitary conjugations?
Here, pinching was a costly circuit component which broadly speaking had the role of implementing memory loss.
Can polynomial mixed-unitary maps give rise to controllable irreversibility and are they useful for creating purposeful quantum algorithms?

Note, that recent experiments have shown that circuits of the type involved in the pinching component \eqref{pinching_component} have an expressive power which suffices to prepare eigenstates experimentally~\cite{kokail2019self}.
One could speculate if the role of the local gates assigned in that study by variational optimization has also been to dephase couplings.

It seems that recursive flows admit a discretization instability.
Proposition~\ref{proplimitGWW} states convergence of the proposed quantum algorithm towards the continuous GWW flow in the limit of vanishing flow step durations.
However, the constants involved in the overall bound depend exponentially rather than polynomially on the total flow duration.
An abstraction of this is lemma~\ref{recursionlemma} in appendix~\ref{app:discretization} which suggests that this exponential dependence stems from the deviations of the generator because it depends on the accuracy of the previous steps.
If this exponential dependence is a true instability then the more adaptive \sgd setting should be considered in practice.

\section{Conclusions and outlook}
\label{conclusions}

Double-bracket iterations (DBIs) reveal in what order to compose evolutions under an input Hamiltonian $\h_0$ with diagonal evolutions in order to obtain  approximations of its eigenstates.
The approach has an analytical character and  offers a way to find a feasible solution to a large scale quantum compiling task which is untractable when relying exclusively on brute-force optimization~\cite{PhysRevLett.127.120502,stilck2021limitations}.

The essential ingredient is to perform a double-bracket rotation, namely if $\D_0$ is a diagonal operator then for short enough $s_0$ either  $\exp(-s_0[\D_0,\h_0])$ or $\exp(s_0[\D_0,\h_0])$ will transform $\h_0$ into $\h_1$ which will have a decreased magnitude of off-diagonal terms as expressed by the Hilbert-Schmidt norm. 
The abstract algorithm~\ref{algorithmDBI} summarizes the recursive iteration of this double-bracket rotation ansatz by feeding in another diagonal operator $\D_1$ to transform $\h_1$ into $\h_2$ which again is more diagonal etc.
Thus we obtain  an ansatz which \emph{i)} is variational ($\hat D_k$ and $s_k$ can be optimized) and \emph{ii)} has implementation requirements similar to unstructured approaches of using readily available gates and composing them such as to prepare eigenstate approximations.

In contrast to other variational approaches, though, for double-bracket iterations obtaining an additional diagonalizing step is as simple as picking a new diagonal operator and learning whether to use the induced bracket for forward or backward evolution.
The knowledge how to perform effective double-bracket rotations can be learned using quantum hardware which then can be used to progressively study further the system of interest with a next recursive step.
This is different from quantum phase estimation which relies on random outcomes of measurements.
In terms of implementation requirements, the presented double-bracket approach can be viewed as being positioned between unstructured variational approaches and those based on phase estimation.

The group commutator iteration (GCI) algorithm~\ref{aagci} formalizes the main proposal of this work, namely to implement experimentally iterations  which approximate double-bracket rotations by the corresponding group commutator formula.
Algorithm~\ref{algorithmtranspiler} transpiles GCIs into a sequence of evolutions governed by the input Hamiltonian and diagonal operators.
The involved evolutions can be implemented by Hamiltonian simulation or Clifford operations and thus double-bracket quantum algorithms consist of operations natural for quantum computing.
The proposed quantum algorithm~\ref{quantumalgorithmGCI} is as simple as applying the evolutions prescribed by the transpiler algorithm ~\ref{algorithmtranspiler}.

The key ingredient of the GCI algorithm~\ref{aagci} are evolutions governed by the input Hamiltonian $\h_0$.
The presented unfolding approach leads to exponentially many queries to the input evolution in the number of recursion steps.
However, Ref.~\cite{QDP} shows that in certain cases the circuit depth can be reduced to polynomial via quantum dynamical programming.
This comes at the cost of additional overheads and thus the number of recursion steps must be kept low.

Fig.~\ref{figeigsGWW} addresses the question what eigenstate approximations can be achieved with few recursive DBI steps.
The fully polarized state $\ket{11\ldots 1}$ stands out as the candidate of choice for experimental studies because it can be expected to tend to a state representative of the low-energy physics of the model even for larger system sizes.
Additionally Fig.~\ref{figeigsGWW} shows that while the overall norm of the off-diagonal restriction of the iterated Hamiltonian may remain substantially non-zero after a dozen of DBI steps, some computational basis states do tend towards eigenstates.
This is evidenced by experimentally measurable energy fluctuations.

This work arose based on existing theory of double-bracket flows and Sec.~\ref{GWWreview} supplements a demonstration that the Głazek-Wilson-Wegner (GWW) flow can be implemented in the quantum computing framework.
For this it is necessary to approximate evolutions governed by the restriction to the diagonal $\d{\h_k}$ of iterated evolution generators $\h_k$.
Generator dephasing is irreversible  and approximating it is costly when implemented by unitary transformations.
However, as has also been demonstrated, interventions to reduce the runtime are possible.

While double-bracket quantum algorithms are oblivious in the sense that classical information about the input is not necessary, they are not blind as the number of queries to $e^{-it\h_0}$ may suffice to learn $\h_0$.
An interesting physical behavior ensues in that the Hamiltonian `guides' its own diagonalization.
Double-bracket quantum algorithms use a `quantum' oracle because instead of querying `classical' data, e.g., a value of a matrix element, we query an application of a quantum evolution operation to the quantum registers.

A quantum device capable of double-bracket flows for systems with sizes larger than are practical on a laptop computer will likely resort to the proposed heuristics {\it i)} extending durations of double-bracket rotations {\it ii)}, efficient pinching or diagonal associations and {\it iii)} precompiling.
The first two will need to be used for large systems but for small numbers of qubits precompiling, proposed in Eq.~\eqref{eq:precompiling}, can include them directly.
Thus for proof-of-principle implementations of the double-bracket flow quantum algorithm precompiling reduces the experimental demands  to merely accurate Hamiltonian simulation.

\paragraph{Acknowledgments.}
The author is grateful to Jeongrak Son for sharing his insights and spotting necessary mathematical corrections.
Valuable conversations with Raul García-Patron, Stefan Kehrein, Nelly Ng and Steven Thomson and helpful comments on the manuscript by  Ashwinie Ghanesh, Kay Giang, Janusz Gluza, Siong Thye Goh, Zo\"e Holmes, Johannes Kn\"orzer, Akash Kundu and Samantha Xiaoyue Li are gratefully acknowledged.
 
This work has been supported by the start-up grant of the Nanyang Assistant Professorship of the Nanyang Technological University in Singapore which was awarded to Nelly Ng and additionally MG has received support by the Presidential Postdoctoral Fellowship of Nanyang Technological University in Singapore.

\bibliographystyle{quantum}


\onecolumn\newpage
\appendix
\section*{Appendix}
This appendix contains five sections and presents details of calculations that extend the discussion from the main text.
The overall aim has been to provide a self-contained exposition which includes technical details.

This is especially true in the first section~\ref{app:discretization}, which focuses on operator relations and inequalities which are basic for working with double-bracket quantum algorithms.
It starts with elementary norm upper bounds, then derives a second-order upper bound on the group commutator and finally presents solution to an abstract recursion in lemma \ref{recursionlemma} which impacts bounds on the convergence of flow discretizations.
The proofs of these results are technical.

Section~\ref{app:quantumalgorithm} discusses details of the elementary circuit components of the proposed quantum algorithm of the double-bracket flow.
Firstly, the subsection~\ref{discretizedGWW} proves that the discretized GWW flow defined in Eqs.~\eqref{eq:hk} and~\eqref{eq:wkp} converges to the continuous GWW flow.
This is a precursor to understanding the quantum algorithm for the GWW flow whose steps have additional deviations besides those stemming from discretization.
We find that the recursive character of the discretized GWW flow results in an exponential sensitivity on the length of the flow which is different from discretizations of evolutions with an independently prescribed generator where the error bound has a polynomial dependence on the total duration.
Unless this characteristic is an artifact of the proof technique, a quantum device cannot approximate the continuous GWW flow in polynomial runtime.
Next, subsection~\ref{Errortermsapp} proves error bounds on the elemental components of the proposed quantum algorithm in relation to the discretization of the GWW flow.
Subsection~\ref{appconvergence} provides a proof of the proposition~\ref{proplimitGWW} stated in the main text regarding the limit of the quantum algorithm to the continuous GWW flow.

Section~\ref{app:sparsity} resorts to large deviation bounds to prove that a random selection of phase flips leads to accurate pinching wherein the success probability depends on the number of phase flips versus the sparsity of the Hamiltonian.

Section~\ref{app:numerics} presents additional results of numerical simulations.

Section~\ref{app:existence} discusses details about the continuous GWW flow.
While this reproduces the known facts also reviewed elsewhere~\cite{wegner2006flow,kehrein_flow}, a mathematically inclined reader may find it easier to understand because certain subtleties are highlighted rather than suppressed.
In particular, a self-contained constructive proof of the existence of the GWW flow will be provided.
The motivation for considering that problem is summarized in subsectin~\ref{app:existencecontext}.
Subsection~\ref{app:monotonicity} derives the key monotonicity property \eqref{eq:wegnerrate} of the GWW flow on which rest statements about its diagonalizing properties.
Finally, subsection~\ref{appunitaryapprox} shows that DBI discretizations converge to the continuous flow.

\section{Helpful technical calculations}
\label{app:helpfultechnical}
\label{app:discretization}
As in the main text, the Hilbert-Schmidt norm of any two linear operators $\A,\B\in \linops$ induced by the Hilbert-Schmidt scalar product $\langle \A, \B \rangle_\text{HS} = \tr[ \A^\dagger \B]$ will be denoted by $\| \A \|_\text{HS}=\sqrt{\langle \A,\A \rangle_\text{HS}} $. 
We will derive certain results for an unitarily invariant norms which will be denoted without a subscript as $\|\A\|$.
The operator norm (maximum singular value) is the most relevant example other than the Hilbert-Schmidt norm which is unitarily invariant too.

\begin{lemma}[Commutator formulas]
For any $\A,\B,\C,\D\in \linops$ we have that the difference of commutators is a sum of commutators of differences
\begin{align}
  [\A,\B]-[\C,\D]=[\A-\C,\B]+[\C,\B-\D]\ .
  \label{commdiff}
\end{align}
Moreover the trace induces cyclicity as follows
\begin{align}
  \tr\left(\A[\B,\C]\right)=\tr\left(\B[\C,\A]\right) 
  \label{eqcyclicity}
\end{align}
which in particular yields
\begin{align}
  \tr\left(\A[\B,\A]\right)=0 \ .
  \label{comspecial}
\end{align}

\end{lemma}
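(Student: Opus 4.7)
The plan is to verify each of the three identities by direct expansion, since they are algebraic facts about associative multiplication and the trace, not requiring any property of the underlying operators beyond being elements of $\linops$.

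For the first identity \eqref{commdiff}, I would simply expand the commutators on both sides. The left-hand side is $\A\B-\B\A-\C\D+\D\C$. On the right-hand side, the natural strategy is the ``add and subtract'' trick: use $[\A-\C,\B]=\A\B-\C\B-\B\A+\B\C$ and $[\C,\B-\D]=\C\B-\C\D-\B\C+\D\C$, and observe that the cross terms $\mp\C\B$ and $\pm\B\C$ cancel. What remains is precisely $\A\B-\B\A-\C\D+\D\C$, matching the left-hand side.

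For the cyclicity identity \eqref{eqcyclicity}, the approach is to expand $\tr(\A[\B,\C])=\tr(\A\B\C)-\tr(\A\C\B)$ and then apply the standard cyclic property of the trace term by term. Writing $\tr(\A\B\C)=\tr(\B\C\A)$ and $\tr(\A\C\B)=\tr(\B\A\C)$ lets one refactor the expression as $\tr(\B\C\A-\B\A\C)=\tr(\B[\C,\A])$. The third identity \eqref{comspecial} is then an immediate corollary: instantiate \eqref{eqcyclicity} with $\C=\A$, which reduces the right-hand side to $\tr(\B[\A,\A])=\tr(\B\cdot 0)=0$.

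There is no substantial obstacle; all three facts are standard consequences of associativity and trace cyclicity. The only point requiring a modicum of care is making the bookkeeping in the expansion of \eqref{commdiff} transparent, so the reader sees that the $\C\B$ and $\B\C$ contributions cancel pairwise. The whole argument should take only a few lines and can be presented as a single computation for each part.
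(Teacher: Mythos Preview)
Your proposal is correct and matches the paper's proof essentially line for line: the paper handles \eqref{commdiff} by inserting $-[\C,\B]+[\C,\B]$ and invoking bilinearity (your ``add and subtract'' trick), proves \eqref{eqcyclicity} by the same two applications of trace cyclicity you describe, and deduces \eqref{comspecial} by specializing to $\C=\A$.
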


\begin{proof}
A direct use of linearity of commutators gives
\begin{align}
  [\A,\B]-[\C,\B]+[\C,\B]-[\C,\D]=[\A-\C,\B]+[\C,\B-\D]\ .
\end{align}
We obtain the cyclicity property by using the cyclicity of the trace in two directions
\begin{align}
  \tr\left(\A[\B,\C]\right)&=
  \tr\left(\B\C\A\right) - \tr\left(\B\A\C\right)\\&=
  \tr\left(\B[\C,\A]\right) \ .
\end{align}
We obtain \eqref{comspecial} by using \eqref{eqcyclicity} which gives $\tr\left(\B[\A,\A]\right)=0$.

\end{proof}

\begin{lemma}[Bounding canonical brackets]\label{bndcom}

Let $\A,\B\in \linops$. 
We have the upper bounds on the bracket
\begin{align}
\|[\d{\A},\o{\B}]\| \le 4\|\A\|\, \|\B\|
\label{bndcom1}
\end{align}
and on the difference of brackets
\begin{align}
\|[\d{\A},\o{\A}]-[\d{\B},\o{\B}]\| \le 8 \max\{ \|\A\|,\|\B\|\}\, \|\A-\B\| \ .
\label{bndcom2}
\end{align}\end{lemma}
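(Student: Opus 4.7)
The plan is to derive both bounds from two elementary ingredients: submultiplicativity of the operator norm on commutators, $\|[\X,\Y]\|\le 2\|\X\|\,\|\Y\|$, and the fact that pinching is a contraction in operator norm, $\|\d{\cdot}\|\le\|\cdot\|$. The latter holds because $\d$ is a mixed-unitary channel (this is precisely the representation~\eqref{eq:pinch} used later in the paper with phase flips $\p_\mu$), so by convexity and unitary invariance of the operator norm the pinched matrix is bounded by the original. From this one immediately also gets $\|\o\A\|=\|\A-\d\A\|\le\|\A\|+\|\d\A\|\le 2\|\A\|$.

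For~\eqref{bndcom1} I would simply chain these facts:
\begin{align}
\|[\d\A,\o\B]\|\le 2\|\d\A\|\,\|\o\B\|\le 2\|\A\|\cdot 2\|\B\|=4\|\A\|\,\|\B\|.
\end{align}
This is a single line once the contraction property of pinching and the triangle inequality for the off-diagonal restriction are in place.

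For~\eqref{bndcom2} the natural move is to invoke the commutator-difference identity~\eqref{commdiff} from the previous lemma with $\C=\d\B$ and $\D=\o\B$, together with linearity of $\d$ and $\o$:
\begin{align}
[\d\A,\o\A]-[\d\B,\o\B]=[\d(\A-\B),\o\A]+[\d\B,\o(\A-\B)].
\end{align}
Applying~\eqref{bndcom1} to each commutator on the right, one gets an upper bound of $4\|\A-\B\|\,\|\A\|+4\|\B\|\,\|\A-\B\|=4\|\A-\B\|(\|\A\|+\|\B\|)$, which is majorised by $8\max\{\|\A\|,\|\B\|\}\,\|\A-\B\|$ as claimed.

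I do not expect any real obstacle here: the only non-trivial ingredient is the contractivity $\|\d\A\|\le\|\A\|$, and this is a standard fact about unitarily invariant norms applied to mixed-unitary channels, already implicit in the paper's use of~\eqref{eq:pinch}. If a more pedestrian route is preferred, one could instead note directly that in the computational basis $\d\A$ has the diagonal entries of $\A$ and zeros elsewhere, so $\|\d\A\|=\max_\mu|\bra\mu\A\ket\mu|\le\|\A\|$. The rest is purely a rearrangement of the triangle inequality and submultiplicativity.
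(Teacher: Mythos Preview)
Your proposal is correct and follows essentially the same route as the paper: both use the commutator bound $\|[\X,\Y]\|\le 2\|\X\|\,\|\Y\|$, the pinching inequality $\|\d\A\|\le\|\A\|$, and the identity~\eqref{commdiff} to split the difference of brackets. The only cosmetic difference is that for the second term $\|[\d\B,\o(\A-\B)]\|$ the paper further writes $\o(\A-\B)=(\A-\B)-\d(\A-\B)$ and uses that diagonal matrices commute, whereas you apply~\eqref{bndcom1} directly; both routes yield the stated constant~$8$.
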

\begin{proof}
We have 
\begin{align}
\|[\d{\A},\o{\B}\| \le 2\|\d{\A}\|\,\|\o{\B}\|\le 4\|\A\|\, \|\B\|\ ,
\end{align}
where we used that via the pinching inequality in general $\|\d\A\|\le \|\A\|$~\cite[Eq.~(IV.51) on p. 97]{bhatia} which together with the triangle inequality implies that $\|\o \B\|\le 2 \|\B\|$.

Notice, which is useful generally, that $\o\cdot$ is linear namely 
\begin{align}
  \o\A-\o\B=\A-\d\A-\B+\d\B = \A-\B -\d{\A-\B} = \o{\A-\B}\ .
\end{align}
Using \eqref{commdiff} and \eqref{bndcom1} we obtain
\begin{align}
\|[\d{\A},\o{\A}]-[\d{\B},\o{\B}]\|  
\le&\|[\d{\A-\B},\o{\A}]\|+\|[\d{\B},\o{\A-\B}]\|\\
\le& 4\|\A-\B\|\, \|\A\|+\|[\d{\B},{\A-\B}]\|+\|[\d{\B},\d{\A-\B}]\|\\
\le&   8 \max\{ \|\A\|,\|\B\|\}\, \|\A-\B\|\, \ .
\end{align}
\end{proof}
\begin{lemma}[Diagonal-off-diagonal orthogonality]\label{orthogonality}
Let $\A,\B\in \linops$. 
We have the Hilbert-Schmidt orthogonality of the diagonal and off-diagonal restrictions
\begin{align}
  \tr\left(\o \A\d\B\right)=0\ .
  \label{eqorthogonality}
\end{align}

\end{lemma}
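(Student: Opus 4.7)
The plan is to expand the trace in the computational basis, where the meaning of $\d\cdot$ becomes transparent. In that basis $\d\B$ is supported only on diagonal entries $(\d\B)_{ii} = \B_{ii}$, while by construction $\o\A = \A - \d\A$ has vanishing diagonal entries $(\o\A)_{ii} = 0$. Writing
\begin{align}
  \tr(\o\A \, \d\B) = \sum_{i} (\o\A \, \d\B)_{ii} = \sum_{i,j} (\o\A)_{ij}(\d\B)_{ji}
\end{align}
and using that $(\d\B)_{ji} = \B_{jj}\delta_{ij}$, the sum collapses to $\sum_i (\o\A)_{ii}\B_{ii} = 0$.

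A slightly more structural way to present the same fact, which the paper may prefer given the Hilbert-Schmidt language used throughout, is to note that $\d$ is an orthogonal projection with respect to $\langle\argdot,\argdot\rangle_\mathrm{HS}$: it is self-adjoint ($\langle\A,\d\B\rangle_\mathrm{HS} = \langle\d\A,\B\rangle_\mathrm{HS}$, immediate from the basis computation above) and idempotent ($\d\circ\d = \d$). Hence
\begin{align}
  \tr(\o\A \, \d\B) = \langle \o\A, \d\B\rangle_\mathrm{HS} = \langle \d(\o\A), \B\rangle_\mathrm{HS},
\end{align}
and $\d(\o\A) = \d\A - \d(\d\A) = \d\A - \d\A = 0$, so the inner product vanishes.

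There is no real obstacle: the only thing to be careful about is being explicit about what basis pinching is defined in (the paper has already fixed this to the computational basis $\ket\mu$ in Sec.~\ref{GWWreview}), so that ``diagonal'' and ``off-diagonal'' are meaningful as orthogonal subspaces of $\linops$ under the Hilbert-Schmidt inner product. Given that, either of the two paragraphs above suffices; I would present the basis-expansion proof as the primary argument because it is the shortest, and mention the projection viewpoint in one line since it will be reused when deriving the monotonicity relation~\eqref{eq:wegnerrate}.
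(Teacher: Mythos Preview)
Your proposal is correct and takes essentially the same approach as the paper: both expand in the fixed basis and use that the diagonal entries of $\o\A$ vanish while $\d\B$ contributes only on the diagonal. The paper phrases it slightly differently by first showing the product $\o\A\,\d\B$ is itself off-diagonal and then taking the trace, but this is a cosmetic difference; your direct trace computation and the additional projection remark are both sound.
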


\begin{proof}
Let us consider a basis $\{\ket k\}_{k=1,\ldots,D}$ and by direct computation we find that an off-diagonal operator $\sigma(\A) = \sum_{k,k'=1}^D (1-\delta_{k,k'}) \sandwich k \A {k'}$ multiplied by a diagonal operator $\d \B=\sum_{k,k'=1}^D  \delta_{k,k'} \sandwich k\B{k'}$ remains off-diagonal
\begin{align}
  \o \A\d\B &= \sum_{k,k'=1}^D\sum_{j=1}^D (1-\delta_{k,k'}) \delta_{k',j} \sandwich k \A {k'}\sandwich j \B{j} \ketbra{k}{j} 
  \\&= \sum_{k,j=1}^D (1-\delta_{k,j})  \sandwich k \A {j}\sandwich j \B{j} \ketbra{k}{j} \ .
\end{align}
Then $\tr{\ketbra{k}{j}} = \delta_{k,j}$ and $(1-\delta_{k,j}) \delta_{k,j}=0$ imply
\begin{align}
  \tr\left(\o \A\d\B\right)= \sum_{k,j=1}^D (1-\delta_{k,j}) \delta_{k,j} \sandwich k \A {j}\sandwich j \B{j}  =0 \ .
\end{align}
\end{proof}

The relations \eqref{eqorthogonality} and  \eqref{eqcyclicity} turn out to be quite crucial.
Let $\A,\B,\C,\D\in \linops$ be hermitian then we have 
\begin{align}
  \tr\left(\o\A \o{[[\d\B,\o\C],\D]}\right)
  &  = \tr\left(\o\A [[\d\B,\o\C],\D]\right)\\
  &  = \tr\left([\d\B,\o\C][\D,\o\A]\right) \ .
  \label{orthoabstract}
\end{align}
Applications for this purely linear-algebraic relation can be found by rephrasing the relation~\eqref{orthoabstract} using the Hilbert-Schmidt scalar product
\begin{align}
  \left\langle \o\A^\dagger, \o{[[\d\B,\o\C],\D]}\right\rangle_\text{HS}
  &  = \left\langle [\d\B,\o\C]^\dagger, [\D,\o\A]\right\rangle _\text{HS}\ .
\end{align}
This means that certain projections of off-diagonal restrictions of brackets can be evaluated as Hilbert-Schmidt overlaps of reshuffled brackets.
Or, put differently, the nested bracket simplifies to the first order bracket under the trace.
This is not directly valid for higher-order nesting.

In particular setting all the involved operators to be the Hamiltonian $\h$ we find
\begin{align}
  \left\langle \o\h, \o{[[\d\h,\o\h],\h]}\right\rangle_\text{HS}
  &  = -\left\langle [\d\h,\o\h], [\d\h,\o\h]\right\rangle _\text{HS}\\
  & = - \|[\d\h,\o\h]\|_\text{HS}^2\ .
\end{align}

This formula comes into play when analyzing short flow steps generated by the canonical bracket.
If we consider a variational bracket with some diagonal operator $\B=\dzz(\h)$ then we would find that the canonical bracket comes into play nonetheless
\begin{align}
  \left\langle \o\h, \o{[[\dzz(\h),\o\h],\h]}\right\rangle_\text{HS}
  &  = -\left\langle [\dzz(\h),\o\h], [\d\h,\o\h]\right\rangle _\text{HS}\ .
\end{align}
This discussion stresses the mathematical structures that lead to the appearance of the canonical bracket when considering such classess of double-bracket flows.
Lemma~\ref{sgdlemma} in the main text summarizes the application to.
 double-bracket quantum algorithms.
 
\begin{lemma}[Group commutator]
Let $\A=-\A^\dagger,\B=-\B^\dagger\in \linops$ be anti-hermitian and let $\|\cdot\|$ be a unitarily invariant norm.
We have
\begin{align}
  \|e^{\A} e^{\B} e^{-\A}e^{-\B} -e^{[\A,\B]}\|\le \| [\A,[\A,\B]]\| + \| [\B,[\B,\A]]\|\ .
\end{align}
\label{groupcom}
\end{lemma}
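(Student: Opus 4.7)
The plan is to use the integral form of Taylor's theorem applied simultaneously to $F(t) := e^{t\A}e^{t\B}e^{-t\A}e^{-t\B}$ and to $G(t) := e^{t^2[\A,\B]/2}$, and then to compare them at $t=1$. I would first verify by direct expansion of the exponential series that $F(0) = G(0) = I$, $F'(0) = G'(0) = 0$, and then compute $F''(0)$ and $G''(0)$. Because the leading Taylor contributions of $F$ and $G$ agree (up to scaling conventions that absorb the factor $\tfrac12$), the difference can be written as
\begin{align*}
F(1) - G(1) \;=\; \int_0^1 (1-t)\, [F''(t) - G''(t)] \, dt,
\end{align*}
and the task reduces to bounding the integrand by the advertised combination of nested commutators.

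Next I would compute $F''(t)$ using the Leibniz rule on the four-factor product $e^{t\A}e^{t\B}e^{-t\A}e^{-t\B}$. A naive count of the resulting terms suggests only a bound of order $\|\A\|^2 + \|\A\|\|\B\| + \|\B\|^2$, which is too weak. The key algebraic observation is that each operator commutes with its own exponential, so that after collecting terms one can repeatedly invoke the Hadamard identity $e^{t\A}\B e^{-t\A} = \B + t[\A,\B] + \tfrac{t^2}{2}[\A,[\A,\B]] + \cdots$ to replace conjugations by $e^{\pm t\A}$ and $e^{\pm t\B}$ with genuine nested commutators. The anti-Hermiticity hypothesis ensures all the exponential factors are unitary, so applying the submultiplicativity and unitary-invariance of the operator norm leaves only $\|[\A,[\A,\B]]\|$ and $\|[\B,[\B,\A]]\|$ as the controlling quantities. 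The computation of $G''(t)$ is straightforward, because $[\A,\B]$ commutes with $e^{t^2[\A,\B]/2}$, and contributes terms of the same type or higher order.

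The main obstacle, which the paper flags by calling this a ``second-order ansatz,'' is precisely this bookkeeping. Written out verbatim, $F''(t)$ is a linear combination of roughly a dozen products of $\A,\B$ interlaced with the four unitary factors; the first-order dependence on $\A$ and $\B$ alone would dominate any individual term, and the improved commutator scaling only emerges after identifying the cancellations hidden in the specific interlaced ordering $e^{t\A}e^{t\B}e^{-t\A}e^{-t\B}$. Once this reorganisation is carried out, integrating against $(1-t)$ over $[0,1]$ (which contributes a factor $\tfrac12$) and applying the triangle inequality yields the stated bound
\begin{align*}
\|F(1) - G(1)\| \;\le\; \|[\A,[\A,\B]]\| + \|[\B,[\B,\A]]\|.
\end{align*}
This is exactly the same strategy used throughout the Trotter-error literature based on commutator-scaling estimates, but applied here to the group-commutator pattern rather than a Lie--Trotter product, which is why the leading order scales as $s^2$ (rather than $s$) in the paper's main-text application with $\A = is\d\h$ and $\B = is\h$.
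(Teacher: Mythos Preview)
Your plan has a concrete failure point at the very step you flag as routine: the second derivatives of your two curves do \emph{not} match. Expanding $F(t)=e^{t\A}e^{t\B}e^{-t\A}e^{-t\B}$ to second order gives $F(t)=I+t^2[\A,\B]+O(t^3)$, so $F''(0)=2[\A,\B]$, whereas $G(t)=e^{t^2[\A,\B]/2}$ has $G''(0)=[\A,\B]$. Hence $F''(0)-G''(0)=[\A,\B]$, and your integral remainder $\int_0^1(1-t)\bigl(F''(t)-G''(t)\bigr)\,dt$ already has an integrand of order $\|[\A,\B]\|$ at $t=0$, where all conjugating unitaries are trivial and no Hadamard reorganisation can reduce it to the nested commutators in the claimed bound. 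There is nothing to absorb here by ``scaling conventions'': the mismatch is genuine and in fact reveals a slip in the lemma as printed---the group commutator approximates $e^{[\A,\B]}$, not $e^{\tfrac12[\A,\B]}$.

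The paper's own argument is structurally different and worth knowing. Rather than expanding two curves and subtracting, it builds a single interpolating unitary $Q_x=e^{x\B_{\A}}\,e^{-x\B}\,e^{\frac{1-x^2}{2}[\A,\B]}$ (with $\B_{\A}=e^{\A}\B e^{-\A}$) running from $Q_0=e^{\tfrac12[\A,\B]}$ to $Q_1=e^{\A}e^{\B}e^{-\A}e^{-\B}$, and bounds $\|Q_1-Q_0\|\le\int_0^1\|\partial_xQ_x\|\,dx$. The ``second-order ansatz'' lives in the third factor: it is chosen so that, after pulling the outer unitaries through, the derivative collapses to the single expression $\|\B_{\A}-\B-x\,e^{-x\B}[\A,\B]e^{x\B}\|$, in which the first-order pieces are meant to cancel internally rather than across two separate second derivatives. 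With the corrected target $e^{[\A,\B]}$ one takes $Q_x=e^{x\B_{\A}}e^{-x\B}e^{(1-x)[\A,\B]}$; its derivative splits as $\|\B_{\A}-\B-[\A,\B]\|+\|[\A,\B]-e^{-x\B}[\A,\B]e^{x\B}\|$, the first piece being the quadratic Lagrange remainder of $s\mapsto e^{s\A}\B e^{-s\A}$ and the second bounded by $x\,\|[\B,[\B,\A]]\|$, which integrates to give the stated inequality directly. Your two-curve approach, even after correcting $G$, would instead require uniform control of $\|F''(t)-G''(t)\|$ (or $\|F'''(t)-G'''(t)\|$) over $t\in[0,1]$, and those individual derivatives contain powers of $[\A,\B]$ that do not obviously cancel between $F$ and $G$.
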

Let us mention that we will typically use this bound for $\A= i \sqrt s \J$ and $\B=i \sqrt s \J'$ for some real evolution duration $s\in \mathbb R$ and hermitian $\J,\J'$.
We then get that the scaling of the deviation of the exact bracket-evolution is given by
\begin{align}
  \|e^{i\sqrt s\J} e^{i\sqrt s\J'} e^{-\sqrt s\J}e^{-i\sqrt s\J'} -e^{-s[\J,\J']}\|\le 4 s^3\left(\max(\|\J\|,\|\J'\|)\right)^3  = O(s^{3/2})\ .
\end{align}
Due to a cancellation, this is a reduced power (and thus higher error for $s\ll 1$) compared to, for example, the first order composition
\begin{align}
  \|e^{i s\J} e^{i s\J'} -e^{is(\J+\J')}\|\le s^2\|[\J,\J']\| \le 2s^2\left(\max(\|\J\|,\|\J'\|)\right)^2   = O(s^{2})\ ,
\end{align}
which thus converges at a faster rate.
\begin{proof}
Define
\begin{align}
  Q_x = e^{x\B_{\A}}e^{-x\B}e^{(1-x)[\A,\B]}\ ,
\end{align}
where
\begin{align}
 \B_{\A}= e^{\A}\B e^{-\A} \ .
\end{align}
It will be useful to consider smooth rescaling $\A\mapsto{x\A}$ by some $x\in \mathbb R$ to give rise to a smooth matrix-valued function $\B_{x\A}$  for which the integral form of the remainder in the Taylor expansion theorem gives
\begin{align}
 \B_{\A}= \B + [\A,\B] + \int_0^1 \dd x (1-x)e^{x\A}[\A,[\A, \B]] e^{-x\A} \ .
 \label{taylorBA}
\end{align}
In the following we will use that  by unitary invariance of the norm we get the bound for the remainder
\begin{align}
 \| \B_{\A}- \B + [\A,\B]\| \le \|[\A,[\A, \B]] \| \ .
\end{align}
We first check that
\begin{align}
  \Q_1 = e^{\B_{\A}}e^{-\B}=e^{\A} e^{\B} e^{-\A}e^{-\B}
\end{align}
and
\begin{align}
  \Q_0 = e^{[\A,\B]}\ .
\end{align}
By an explicit calculation we may check that the derivative of $\Q_x$ reads
\begin{align}
  \partial_x\Q_x = e^{x\B_{\A}}(\B_{\A}-\B-e^{-x\B}[ \A,\B]e^{x\B})e^{-x\B}e^{(1-x)[\A,\B]} \ .
\end{align}
By the fundamental theorem of calculus, triangle inequality and unitary invariance of the operator norm we obtain
\begin{align}
  \|\Q_1-\Q_0\| 
  &\le \int_0^1\dd x \|\B_{\A}-\B-[ e^{-x\B}\A e^{x\B},\B]\|\ .
\end{align}
Next, we perform a telescoping step pivoting around $[\A,\B]$ obtaining
\begin{align}
  \|\Q_1-\Q_0\| 
  &\le \int_0^1\dd x \|\B_{\A}-\B-[ \A,\B]\|+\int_0^1\dd x \|[\A,\B]-[e^{-x\B}\A e^{x\B},\B]\| \\
  &\le  \|[\A,[ \A,\B]]\|+\int_0^1\dd x \|[\A-e^{-x\B}\A e^{x\B},\B]\| 
\end{align}
We next consider the linear $n=1$ Taylor expansion  for  $ \A_{x\B} = e^{-x\B}\A e^{x\B} = \A + \int_0^x \dd y [\B,e^{-y\B}\A e^{y\B}]$ again with the remainder in integral form and 
arrive at
\begin{align}
  \|\Q_1-\Q_0\| 
    &\le  \|[\A,[ \A,\B]]\|+\int_0^1\dd x \int_0^x \dd y \|e^{-y\B}[[\B,\A] ,\B]e^{y\B}\| \\
    &\le  \|[\A,[ \A,\B]\|+ \|[\B,[\B,\A]]\| \ .
  \end{align}

\end{proof}

As an alternative, we may consider using more standard bounds together with the Taylor expanasion formula~\eqref{taylorBA}.
Our strategy will be that
\begin{align}
    e^{\B_{\A}}e^{-\B}\approx e^{\B_{\A}-\B}= e^{{[\A,\B] +\hat R}}\approx e^{[\A,\B]}e^{\hat R}\approx e^{[\A,\B]}\ .
\end{align}
We will use that 
\begin{align}
    \|e^{\hat R}  - \id\| \le \|\hat R\|\ ,
\end{align}
which, we may mention, can be proven for any two anti-hermitian $\A,\B$ by considering $\Q(x) = e^{x\A}e^{(1-x)\B}$ with $\partial_x\Q(x) = e^{x\A}(\A-\B)e^{(1-x)\B}$ such that by using unitary invariance
\begin{align}
\|e^{\A}  - e^{\B}\| \le \int_{0}^1 \dd x \| \partial_x \Q(x) \| \le \|\A-\B\|\ .
\end{align}

Similarly, for any anti-hermitian $\A,\B$ we get the bound
\begin{align}
    \|e^{\A}e^{\B} - e^{\A+\B}\| \le \|[\A,\B]\|\ ,\label{boundeasyGC}
\end{align}
which we prove analogously by defining
$\Q(x) = e^{(1-x)\A}e^{(1-x)\B}e^{x(\A+\B)}$ so that 
\begin{align}
\partial_x\Q(x) = -e^{(1-x)\A}e^{(1-x)\B}\left(e^{-(1-x)\B}\A e^{(1-x)\B} +\B -(\A+\B)\right)e^{x(\A+\B)}
\end{align}
such that using 
\begin{align}
e^{-(1-x)\B}\A e^{(1-x)\B}= \A +  \int_0^x \dd y e^{-(1-y)\B}[\B, \A] e^{(1-y)\B}
\end{align}
we get a simplification of the bound~\eqref{boundeasyGC} obtaining
\begin{align}
\|e^{\A}e^{\B}-e^{\A+\B}\| \le \int_{0}^1 \dd x \| \partial_x \Q(x) \| \le \|[\A,\B]\|\ .
\end{align}

To proceed with the bound on the group commutator approximation, for notational purposes we denote
\begin{align}
 \hat R =  \int_0^1 \dd x (1-x) e^{x\A}[\A,[\A, \B]] e^{-x\A} \ .
\end{align}
By using telescoping and triangle inequality we find
    \begin{align}
  \|e^{\A} e^{\B} e^{-\A}e^{-\B} -e^{[\A,\B]}\|
  &\le \|e^{\B_{\A}}e^{-\B} -e^{{[\A,\B] +\hat R}}\|+\|e^{[\A,\B] +\hat R} -e^{[\A,\B]}\|\\
  &\le \|[\B_{\A},\B]\|+\|e^{[\A,\B] +\hat R} -e^{[\A,\B]}e^{\hat R}\|+\|e^{[\A,\B]}\left( e^{\hat R} -\id\right)\|\\
  &\le \|[\B_{\A},\B]\|+\|[[\A,\B] ,\hat R]\|+\| \hat R\| \\
  &\le \|[[\B,\A],\B]\|+\|[\hat R,\B]\|+\|[[\A,\B] ,\hat R]\|+\| \hat R\| \\
  &\le \|[[\B,\A],\B]\|+(1+2\|\A\|\,\|\B\|+2\|\B\|)\|[\A,[\A,\B]]\| \ .
\end{align}
Thus, if both $\A$ and $\B$ contribute $O(s)$ in the bounds then this derivation yields also a third-order bound $O(s^3)$.

\def\a{a}
\def\b{b}

\begin{lemma}[Recursion for recursive discretization]
\label{recursionlemma}
Let $\a,\b\in \R$. 
The sequence $\{\alpha_k\}_{k\in\{0,1\ldots\}}$
with entries
\begin{align}
  \alpha_k = \frac{(1+\a)^k-1}{\a}\b\ .
  \label{eqrecursionresult}
\end{align}
solves the recursion
\begin{align}
\alpha_{k+1} =&  (1+\a)\alpha_k+ \b\ 
\label{recursionsequencebound}
\end{align}
for $k\ge1$ and $\alpha_0 = 0$.
\end{lemma}
\begin{proof}
The ansatz is based on the observation that
\begin{align}
   \sum_{j=0}^{k-1} (1+\a)^{j}=\frac{(1+\a)^k-1}{\a}\ 
\end{align}
and let us show by induction that it is the solution to the recursion.
For $k=0$ we have $\alpha_0=0$ as should be.
Let us assume the relation \eqref{recursionsequencebound} holds for some $k\ge 0$ and show that hence it must hold for $k+1$.
The right-hand side of the recurrence for $\alpha_{k+1}$ reads
\begin{align}
 (1+\a)\alpha_k+ \b\ &= (1+\a)\frac{(1+\a)^k-1}{\a}\b +\b \\
  &= \frac{(1+\a)^{k+1}-1-\a}{\a}\b +\b\\
  &= \frac{(1+\a)^{k+1}-1}{\a}\b = \alpha_{k+1} \ . 
\end{align}
\end{proof}

\section{GWW flow as a quantum algorithm}
\label{app:quantumalgorithm}

\subsection{GWW flow as a GWW DBI limit}
\label{discretizedGWW}

We will consider the GWW flow unitary $\U_\ell$ for some $\ell\in \R$ and compare it to the unitary obtained after  $\n$ steps of the GWW DBI.
More specifically, we will set the recursion step duration to be $\dell = \ell /\n$ and it will be useful to compare to the GWW flow at flow parameters $\{\ell_k = k \dell\}_{k\in\{0, 1,\ldots \n\}}$.

As in the main text, we start with $k=0$ and define
\begin{align}
  \u_{0} = \id\ 
  \label{u0}
\end{align}
and also to generate the first recursion step approximation we define the bracket
\begin{align}
  \w_{0} = [\d{\h_0}, \o{\h_0}]\ .
\end{align}
For any $k\in\{1,\ldots \n\}$ we define
\begin{align}
  \u_k = \u_{k-1}e^{-\dell \w_{k-1}} \ .
  \label{uk}
\end{align}
This allows us to define the iterated Hamiltonian approximation
\begin{align}
  \h_{k}=  \u_{{k}}^\dagger\h_0 \u_{{k}}\ 
  \label{hk}
\end{align}
and with this the approximation to the canonical bracket to generate the next step becomes
\begin{align}
  \w_{{k+1}} = [\d{\h_{k}},\o{\h_{k}}]\ .
  \label{wkp}
\end{align}

\begin{proposition}
\label{propdisc}
  We have
  \begin{align}
 \| \U_{\ell} - \u_K\|  \le&  16   \|\h_0\|^2  \ell e^{20 \|\h_0\|^2\ell}  K^{-1}\ .
\end{align}
\end{proposition}

\begin{proof}
It will be helpful to freely use certain properties of $\Uell$ or more specifically of $\Well$.

By explicit computation we have
\begin{align}
  \U_{\elk} 
  &= \U_{\elkm} +s\,(\partial_\ell\U_{\ell})_{\vert \ell = \elkm}+\int_0^s\dr (s-r)\,(\partial_\ell^2\U_{\ell})_{\vert \ell = \elkm+r}\\ 
  &= \U_{\elkm} -s\,\U_{\elkm}\W_{\elkm}+\int_0^s\dr (s-r)\,\left(\U_{\elkm+r}\W_{\elkm+r}^2+\U_{\elkm+r}\partial_{\elkm+r}\W_{\elkm+r}\right)\ .
\end{align}
Using the GWW equations we have
\begin{align}
    \partial_{\ell}\Well 
    &=  \d{\partial_\ell \Hell} \Hell +\d \Hell \partial_\ell\Hell - \Hell\d{\partial_\ell \Hell} - \partial_\ell\Hell\d \Hell \\
    &=  [\d{ [ \Well, \Hell]} ,\Hell ] + [\d{\Hell} , [ \Well, \Hell] ]\ .
\end{align}
For all $\ell$ we have 
  \begin{align}
   \| \Hell\| = \|\h_0\|
  \end{align}
  and 
  \begin{align}
    \|\d{\Hell}\| \le \|\h_0\|\ 
  \end{align}
and additionally using \eqref{bndcom1} in lemma \ref{bndcom} for $\A = \B = \Hell$
  \begin{align}
   \| \Well\| = 4\|\h_0\|^2\ .
  \end{align}
Using these relations,  we obtain the bound
\begin{align}
    \|\partial_{\ell}\Well \|
        &\le 16 \|\h_0\|^4\ .
\end{align}
This means that
\begin{align}
  \left \| \int_0^s\dr (s-r)\,\left(\U_{\elkm+r}\W_{\elkm+r}^2+\U_{\elkm+r}\partial_{\elkm+r}\W_{\elkm+r}\right)\right\| \le 32 \|\h_0\|^4 s^2
\end{align}
  thanks to $\int_0^s\dr |s-r| = \int_0^s\dr (s-r)= s^2/2$.
  
Thus, we have
\begin{align}
  \| \U_{\elk} - \u_k\| 
  \le&\| \U_{\elkm} - \u_{k-1}\| + s  \|\,\U_{\elkm}\W_{\elkm}-\u_{k-1}\w_{k-1}\| +64 \|\h_0\|^4 s^2\\
  \le&(1+s\|\w_k\|)\| \U_{\elkm} - \u_{k-1}\| +  s\|\W_{\elkm}\,-\w_{k-1}\| +64 \|\h_0\|^4 s^2
  \ .
\end{align}
Using lemma~\ref{bndcom} we have 
  \begin{align}
\|[\d{\H_{\elkm}},\o{\H_{\elkm}}]-[\d{\h_{k-1}},\o{\h_{k-1}}]\| \le 8 \|\h_0\|\, \|\H_{\elkm}-\h_{k-1}\|
\end{align} 
which can be bounded by the fidelity of the GWW flow approximation in the previous step
\begin{align}
  \| \H_{\elkm}-\h_{k-1}\| \le 2 \|\h_0\|\,\| \U_{\elkm} - \u_{k-1}\|\ .
\end{align}
This means that
\begin{align}
\|\W_{\elkm}\,-\w_{k-1}\| \le  16 \|\h_0\|^2\,\| \U_{\elkm} - \u_{k-1}\|\
\end{align}
and so altogether we arrive at
\begin{align}
 \| \U_{\elk} - \u_k\| \le& (1+ 20 \|\h_0\|^2\dell) \| \U_{\elkm} - \u_{k-1}\|  +   64\|\h_0\|^4 s^2\ .
\end{align}
From lemma~\ref{recursionlemma} we obtain
\begin{align}
\| \U_{\ell} - \u_K\| \le& 16\|\h_0\|^2\left(1+ \frac{20 \|\h_0\|^2\ell}{K}\right)^K   \frac\ell K
\end{align}
and we arrive at the proposition statement by using the inequality $(1+x)^r\le e^{xr}$ with $x,r>0$ so that
\begin{align}
\| \U_{\ell} - \u_K\| 
 \le& 16   \|\h_0\|^2  \ell e^{20 \|\h_0\|^2\ell}  K^{-1} \\
 \le& O\left(\ell e^{20 \|\h_0\|^2\ell}  K^{-1}\right)\ .
\end{align}

\end{proof}

\subsection{Error terms of elemental components}
\label{Errortermsapp}
\begin{proposition}[Pinching elemental component]
For $s\in \R$ set $r = \frac s{D}$. Then
\begin{align}
\vd_s(\J) =  \prod_{k=1}^D \pch_k e^{-ir\J} \pch_k^\dagger
\label{aj}
\end{align}
satisfies
\begin{align}
 \|\vd_s(\J)-e^{-is\d\J}  \| \le 4 s^2 \max_{k,k'}\|[\pch_k \J \pch_k^\dagger,\pch_{k'} \J \pch_{k'}^\dagger]\|\le 8s^2 \|\J\|^2\ .
  \label{v1}
\end{align}
\label{pec}
\end{proposition}
As evident from \eqref{aj} this can be implemented using $D$ times the evolution $e^{-ir\J}$ and $2D$ flips $Q_k$ or $Q_k^\dagger$.
\begin{proof}
We can bound \eqref{v1} by fixing an ordering of the terms in \eqref{eq:pinch} and sequentially merging the terms together.
In the first step we obtain for $k\in \{1,\ldots, \lfloor D/2\rfloor\}$~\cite{Su}
  \begin{align}
  \| e^{ir\pch_{2k-1} \J \pch_{2k-1}^\dagger}e^{ir\pch_{2k} \J \pch_{2k}^\dagger}-e^{ir\pch_{2k-1} \J \pch_{2k-1}^\dagger+ir\pch_{2k} \J \pch_{2k}^\dagger}| \le r^2 \|[\pch_{2k-1} \J \pch_{2k-1}^\dagger,\pch_{2k} \J \pch_{2k}^\dagger]\|\ .
\end{align}
In the next step when merging $e^{ir\pch_{2k-1} \J \pch_{2k-1}^\dagger+ir\pch_{2k} \J \pch_{2k}^\dagger}$ with $e^{ir\pch_{2k+1} \J \pch_{2k+2}^\dagger+ir\pch_{2k+3} \J \pch_{2k+4}^\dagger}$ the norm used in the upper bound will increase by 4 after using the triangle inequality.
In general we have for all $\I,\I'\subset\{1,\ldots, D\}$
\begin{align}
 \| [\sum_{k\in\I} \pch_k \J \pch_k^\dagger,\sum_{k\in\I'} \pch_k \J \pch_k^\dagger]\| \le|\I|\,|\I'|\max_{k\in\I,k'\in\I'}\| [ \pch_k \J \pch_k^\dagger, \pch_{k'} \J \pch_{k'}^\dagger]\|\ .
 \label{II}
\end{align}
The sequence of mergings of this type can be kept track of using a tree with $\{1,\ldots, D\}$ as its leaves.
When iterating over the tree we have to use the bound \eqref{II} at most $\lceil \log_2(D)\rceil$ times and get
\begin{align}
  \|\vd_s(\J)-e^{-is\d\J}  \| &\le r^2 \left(\sum_{k=0}^{\lceil \log_2(D)\rceil}2^{2k}\right) \max_{k,k'}\|[\pch_k \J \pch_k^\dagger,\pch_{k'} \J \pch_{k'}^\dagger]\|\\
  &\le r^2 4D^2 \max_{k,k'}\|[\pch_k \J \pch_k^\dagger,\pch_{k'} \J \pch_{k'}^\dagger]\|\ .
\end{align}

For the special case $D=2^L$ this can be tightened to
\begin{align}
  \|\vd_s(\J)-e^{-is\d\J}  \|   &\le s^2 \max_{k,k'}\|[\pch_k \J \pch_k^\dagger,\pch_{k'} \J \pch_{k'}^\dagger]\|\ .
\end{align}
\end{proof}

\begin{lemma}[Refined pinching step]
For any $s\in \mathbb R$ if $R = \Omega( s^{-1/2})$ then
\begin{align}
\|\hat E^{(\Delta)}\|= \|\vd_{\sqrt{s}/R}(\J)^R-e^{-is\d\J}  \| \le 8s^{3/2} \|\J\|^2\ .
  \label{v221}
\end{align}
\label{pinchingstepmain}
\end{lemma}

\begin{proof}
Let $R\in \mathbb N>0$. 
Using lemma~\ref{pec} we have
\begin{align}
 \|\vd_{\sqrt{s}/R}-e^{-i\sqrt{s}/R\d\J}  \| \le 8s R^{-2} \|\J\|^2\ .
  \label{v1}
\end{align}
We next use inductively telescoping
\begin{align}
\|\vd_{\sqrt{s}/R}(\J)^R-e^{-is\d\J}  \| 
&\le \|\vd_{\sqrt{s}/R}(\J)^{R-1}e^{-i(s-\sqrt{s}/R)\d\J}-e^{-is\d\J}  \| + 8s R^{-2} \|\J\|^2\\
&\le \|\vd_{\sqrt{s}/R}(\J)^{R-2}e^{-i(s-2\sqrt{s}/R)\d\J}-e^{-is\d\J}  \| + 2\times 8s  R^{-2} \|\J\|^2\\
&\le \ldots\\
&\le    8s R^{-1} \|\J\|^2\ .
\end{align}
Thus if we set $R = \lceil s^{-1/2}\rceil = \Omega(s^{-1/2})$
then we obtaine the desired $O(s^{3/2})$ bound
\begin{align}
\|\vd_{\sqrt{s}/R}(\J)^R-e^{-is\d\J}  \| 
&\le    9  s^{3/2} \|\J\|^2\ .
\end{align}
\end{proof}

\begin{proposition}[Group commutator elemental component]
Set $r=\sqrt{s}/R$. Then
\begin{align}
   \vgww_s(\J) = \left( \vd_{\mtiny{r}}(\J )^\dagger\right)^R e^{i\sqrt {s} \J}\left(\vd_{\mtiny{r}}(\J )\right)^R e^{-i\sqrt{s} \J}\ .
  \label{eq:gcmain}
\end{align}
satisfies
\begin{align}
\|\hat E^{(\mathrm{GC})}\| =  \|\vgww_s(\J)-e^{s[\d\J,\o\J]}  \| &\le274 s^{3/2}   \|\J\|^2 \times \max\{1, \|\J\|\} \ .
\end{align} 
\label{gcerror}
\end{proposition}
\begin{proof}
By telescoping, we have
\begin{align}
\|\hat E^{(\mathrm{GC})}\| \le&
\|\left({\vd_r(\J)}\right)^R - e^{-ir\d\J}\|+\|\left({\vd_r(\J)}^\dagger\right)^R  - e^{ir\d\J}\| \nonumber\\
&+\|e^{i\sqrt{s}\d\J}e^{i\sqrt{s}\J} e^{-i\sqrt{s}\d\J} e^{-i\sqrt {s}\J}-e^{s[\d\J,\o\J]}  \| \ .
\end{align} 
Lemma~\ref{groupcom} and proposition~\ref{pec} yield
\begin{align}
\|\hat E^{(\mathrm{GC})}\| 
&\le 64 s^{3/2}  \|\J\|  \,\|[\d\J,\o\J]\|+2\|\hat E^{(\Delta)}\|\\
&\le 256 s^{3/2}   \|\J\|^3 + 18 s^{3/2} \|\J\|^2 \\
&\le 274 s^{3/2}   \|\J\|^2 \times \max\{1, \|\J\|\} \ .
\end{align} 
\end{proof}

\subsection{Convergence of the quantum algorithm to the GWW flow}
\label{appconvergence}

When we considered the GWW DBI in subsec.~\ref{discretizedGWW} above, we disregarded the fact that evolution according to the double-bracket must be approximated on a quantum computer.
Paralleling the notation of the GWW DBI unitaries $\u_k$, we define the GWW GCI unitaries $\v_k$ as follows 
\begin{align}
  \v_{0} = \id\ \text{ and } \J_0
  \label{u0222}
\end{align}
and For any $k\in\{1,\ldots \n\}$ we define
\begin{align}
  \v_k = \v_{k-1} \vgww_s(\J_k) \ .
  \label{uk22}
\end{align}
This allows us to define the iterated Hamiltonian rotation as
\begin{align}
  \J_{k}=  \v_{{k}}^\dagger\J_0 \u_{{k}}\ 
  \label{hk22}\ .
\end{align}
Note that we are using the canonical bracket $\wcan$ exactly for the GWW DBI and approximately (via the group commutator) for the GWW GCI.

\begin{proposition}[Continuous flow limit]
\label{propdiscapp}
For fixed flow duration $\ell\in [0,\infty)$, the proposed quantum algorithm and the discretization of the GWW flow converge according to
\begin{align}
 \| \u_N - \v_N\|     \le&27 e ^{16 \|\h_0\|^2 \ell}\max\{1, \|\h_0\|\}s^{1/2}\\
  \le& O\left( \sqrt{\ell} e ^{16 \|\h_0\|^2 \ell} K^{-1/2}\right)\ .
\end{align}
Moreover, in relation to the continuous flow we have
\begin{align}
 \| \U_{\ell} - \v_K\|  
 \le& e^{28 \|\h\|^2\ell} (1+8\|\h\|^2)\dell
 +e ^{16 \|\h\|^2 \ell}\left( 16 \|\h\| + \sqrt{s}\right) \sqrt{s}\\
  \le& O\left( e ^{16 \|\h\|^2 \ell} \sqrt{\frac{\ell}N}\right)\ .
 \end{align}
\end{proposition}

\begin{proof}
We begin with
\begin{align}
  \| \u_k  - \v_k\|
  \le&\| \u_{k-1} - \v_{k-1}\| + \|e^{-s\wcan(\h_{k-1})}-\vgww_s(\J_{k-1})\| \\
  \le&\| \u_{k-1} - \v_{k-1}\| + \|e^{-s\wcan(\h_{k-1})}-e^{-s\wcan(\J_{k-1})}\|+\|e^{-s\wcan(\J_{k-1})}-\vgww_s(\J_{k-1})\| \ .
\end{align}
Using lemma~\ref{bndcom} we have for the first term
\begin{align}
   \|e^{s\wcan(\h_{k-1})}-e^{s\wcan(\J_{k-1})}\|&\le s\| \wcan(\h_{k-1}) - \wcan(\J_{k-1})\|\\
   &\le  8 s\|\h_0\|\, \|\h_{k-1}-\J_{k-1}\|\,
   \\&\le  16 s\|\h_0\|^2\, \|\u_{k-1}-\v_{k-1}\|\ .
\end{align}
For the second term we can use proposition~\ref{gcerror} obtaining
\begin{align}
  \| \u_k  - \v_k\|
  \le&(1+16 \|\h_0\|^2s)\| \u_{k-1} - \v_{k-1}\| + 274 s^{3/2}   \|\h_0\|^2 \times \max\{1, \|\h_0\|\}\ .
\end{align}
This recursive bound can be solved using lemma~\ref{recursionlemma} such that
\begin{align}
  \| \u_K  - \v_K\|
  \le&\frac{(1+16 \|\h_0\|^2 \ell / N)^N - 1} {16 \|\h_0\|^2 s} 274 s^{3/2}   \|\h_0\|^2 \times \max\{1, \|\h_0\|\}\\
  \le&27 e ^{16 \|\h_0\|^2 \ell}\max\{1, \|\h_0\|\}s^{1/2}\\
  \le& O\left( \sqrt{\ell} e ^{16 \|\h_0\|^2 \ell} K^{-1/2}\right)\ .
\end{align}

For the second part, we use proposition~\ref{propdisc} obtaining
 \begin{align}
 \| \U_{\ell} - \v_K\|  \le&  \| \U_{\ell} - \u_K\| + \| \u_K  - \v_K\|\\
 \le& 16   \|\h_0\|^2   e^{20 \|\h_0\|^2\ell}  s
 +27 e ^{16 \|\h_0\|^2 \ell}\max\{1, \|\h_0\|\}s^{1/2}\\
  \le& O\left( \sqrt{\ell} e ^{16 \|\h_0\|^2 \ell} K^{-1/2}\right)\ .
 \end{align}
\end{proof}

Let us comment that the convergence to the GWW flow demonstrates the diagonalizing properties of the quantum algorithm but this convergence is a weak notion due to a peculiar subtlety.
In principle, one could attempt to prove that a quantum algorithm implements a  \sgd sequence for $\h$.
The proof for this would be harder because along of the flow one would need to lower bound the monotonicity gap in Eq.~\eqref{eq:monotnicity_app} to ensure that the \sgd property holds.
Instead, the above proposition shows convergence to the continuous flow and, by proxy, diagonalizing properties follow.
The additional demonstration of \sgd properties would hint at convergence rates because the monotonicity gaps yield the block-diagonalization rate.

\section{Approximate pinching in locally interacting qubit systems}
\label{app:sparsity}
For $\mu,\nu \in \bc$ where $L$ is the number of qubits denote
\begin{align}
  \zm = \otimes_{k=1}^L \begin{pmatrix}1&0\\0&-1\end{pmatrix} ^{\mu_k}
\end{align}
\begin{align}
  \xn = \otimes_{k=1}^L \begin{pmatrix}0&1\\1&0\end{pmatrix} ^{\nu_k} \ .
\end{align}
These operators form an orthonormal Hilbert-Schmidt basis of linear operators $\linops$ with $D=2^L$
\begin{align}
  \langle \zm\xn, \zmp\xnp\rangle_\text{HS} =\delta_{\mu,\mu'}\delta_{\nu,\nu'}\ .
\end{align}
This allows us to write any Hamiltonian $\h$ as
\begin{align}
 \h = \sum_\mnbc \hmn \zm\xn\ .
\end{align}
In this notation, for $\etak 1,\ldots,\etak R\in\bc$ let us define approximate pinching as acting on some input operator $\hat A\in\linops$ by
\begin{align}
 \dmu(\hat A)  =\frac 1 R \sum\nolimits_{k=1}^R \zmc k \hat A \zmc k^\dagger\ .
\end{align}
If $R=D$ and all $\etak i$'s are different then this is simply the pinching channel $\dmu = \Delta$.
\begin{proposition}[Typically balanced flips]
Fix $\nu \in \bc$. Let $\etak1,\ldots,\etak R\in\bc$ be $R$ choices of phase flips drawn independently from the uniform distribution on $\bc$.
The probability of the norm of the approximate pinching exceeding $\epsilon$ is bounded as
\begin{align}
  \Pr\left(\left \|  \dmu(\xn) \right\|> \epsilon \right)\le 2 e^{-R\epsilon^2} \ .
\end{align}
\end{proposition}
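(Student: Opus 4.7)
The plan is to reduce the operator-valued quantity $\dmu(\xn)$ to a scalar Rademacher average by exploiting the standard Pauli commutation identity, and then apply a Hoeffding-type concentration bound.

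First I would compute the conjugation explicitly. For any $\mu,\nu\in\bc$, the Pauli strings satisfy $\zm \xn \zm^\dagger = (-1)^{\mu\cdot\nu}\,\xn$, where $\mu\cdot\nu = \sum_{i=1}^L \mu_i\nu_i \bmod 2$, because each $Z$ anticommutes with the $X$ on the same qubit and commutes otherwise. Substituting into the definition of approximate pinching gives the clean factorization
\begin{equation*}
\dmu(\xn) \;=\; \Bigl(\tfrac{1}{R}\sum_{k=1}^R (-1)^{\etak{k}\cdot\nu}\Bigr)\,\xn,
\end{equation*}
so since $\xn$ is unitary with $\|\xn\|=1$, the operator norm of $\dmu(\xn)$ equals the absolute value of the scalar prefactor. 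This reduction is the key conceptual step; everything downstream is a classical concentration estimate.

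Next I would identify the distribution of the prefactor. If $\nu=0$, then $\xn=\1$ is diagonal and $\dmu(\1)=\1$, so the case is degenerate and may be excluded from the statement. If $\nu\neq 0$, then for $\etak{k}$ drawn uniformly from $\bc$, the inner product $\etak{k}\cdot\nu \bmod 2$ is uniform on $\{0,1\}$ (a standard fact: linear forms on $\mathbb{F}_2^L$ associated to nonzero $\nu$ are balanced). Hence the signs $\xi_k := (-1)^{\etak{k}\cdot\nu}$ are iid Rademacher variables with mean zero and values in $\{-1,+1\}$, and independence across $k$ follows from independence of the draws $\etak{1},\dots,\etak{R}$.

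Finally I would invoke Hoeffding's inequality for the iid bounded variables $\xi_k\in[-1,+1]$, yielding
\begin{equation*}
\Pr\Bigl(\bigl|\tfrac{1}{R}\sum\nolimits_{k=1}^R \xi_k\bigr|>\epsilon\Bigr)\;\le\;2\exp(-cR\epsilon^2)
\end{equation*}
for a universal constant $c>0$, which together with the reduction above gives the stated tail bound. The main technical point—rather than an obstacle—is the precise constant in the exponent: the textbook Hoeffding bound for $\pm 1$ variables delivers $c=1/2$, whereas the proposition asserts $c=1$, so the last step either uses a sharper Chernoff/Bernstein variant tailored to Rademacher sums or absorbs the discrepancy into a redefinition of $\epsilon$. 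Beyond this bookkeeping, the argument is essentially transparent once the Pauli commutation has collapsed the operator norm to a scalar random walk.
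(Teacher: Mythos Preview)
Your proposal is correct and follows essentially the same approach as the paper: both reduce the operator norm to the scalar Rademacher average via the Pauli commutation identity $\zm\xn\zm^\dagger=(-1)^{|\mu\cap\nu|}\xn$ and then invoke Hoeffding's inequality. Your remark about the exponent constant is well taken---the standard Hoeffding bound for $\pm1$ variables indeed gives $c=1/2$ rather than $c=1$, and the paper's proof does not justify the sharper constant; you also correctly flag the degenerate case $\nu=0$, which the paper tacitly excludes (it only applies the proposition to off-diagonal terms where $\nu\neq 0$).
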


The proof is based on a basic combinatorial observation that for any fixed $\nu\in\bc$  the number of $\mu \in\bc$ which have an even overlap with it $|\mu\cap\nu|:=\sum_{i=1}^L \mu_i\nu_i$  equals the number of those with an odd overlap.
This implies that when drawing  $\mu$ uniformly at random the probability of an odd overlap $|\mu\cap\nu|$ with $\nu$ is exactly one half $\Pr(|\mu\cap\nu| \text{ is odd}) = 1/2$.
This also comes into play in other settings, e.g., is the reason why for certain states fidelity can be estimated independent of the number of qubits \cite{directfidelity}.
\begin{proof}
By an explicit computation we find that
\begin{align}
  \zm \xn \zm^\dagger&=(-1)^{|\mu\cap\nu|} \xn=
  \begin{cases}
  \xn, & |\mu\cap\nu| \text{ is even}\\
  -\xn, & |\mu\cap\nu| \text{ is odd} 
  \end{cases}
\end{align}
and this implies that
\begin{align}
  \dmu(\xn) =   \frac 1R \left(\sum_{k=1}^R (-1)^{|\etak k\cap\nu|} \right)\xn\ .
\end{align}
The summands are distributed as independent fair coins because $\Pr((-1)^{|\etak k\cap\nu|}=1) = 1/2$.
Thus, by applying Hoeffding's inequality to $R$ tosses of a fair coin, i.e., a Bernoulli trial using the Rademacher distribution, we obtain
\begin{align}
\Pr\left( \left| \frac 1 { R} \sum\nolimits_{k=1}^R (-1)^{|\etak k\cap\nu|} \right|\ge \epsilon \right) \le 2 e^{-R\epsilon^2}\ .
\end{align}
\end{proof}

Typical balance of sequences of phase flips means that off-diagonal interaction terms in $\o\h$ become small very fast individually.
Let us define the off-diagonal support set $\S = \{\mu,\nu \in \bc \text{ s.t. } \hmn\neq 0 \text{ and } \nu \neq 0\}$ and the sparsity  $S = |\S|$ to be its number of entries.
Additionally, let us use the max-norm to denote the largest interaction strength $\|\J\|_\text{max}  = \max_{\mu,\nu \in \bc} \overline |J_{\mu,\nu}|$.
\begin{proposition}[Approximate pinching]
For any $\J$ with sparsity $S$  pinching $\d\J$ can be approximated in operator norm up to error $\epsilon>0$ and with failure probability $\delta > 0$ using $\dmu(\J)$ by independently drawing phase flips  $R = \frac{S^2\|\J\|_\mathrm{max}^2}{\epsilon^2} \log( 2S/\delta)$ times from the uniform distribution on $\bc$
\begin{align}
  \Pr\left(\left \|\d\J-  \dmu(\J) \right\|\le \epsilon \right)\le 1-\delta \ .
\end{align}
\end{proposition}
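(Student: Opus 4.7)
The plan is to decompose $\J$ in the Pauli basis as $\J = \sum_{\mu,\nu\in\bc}J_{\mu,\nu}\zm\xn$, and then track the action of $\dmu$ on each basis element. Since $\zm\xn\zm^\dagger = (-1)^{|\mu\cap\nu|}\xn$ (as used in the previous proposition), one verifies that diagonal basis elements ($\nu=0$) are left unchanged by $\dmu$, i.e.\ $\dmu(\zm) = \zm = \d{\zm}$, whereas off-diagonal basis elements ($\nu\neq 0$) are scaled by the random sign average $\tfrac1R\sum_{k=1}^R(-1)^{|\etak k\cap\nu|}$, rather than annihilated as by the exact pinching $\Delta$. This reduces the estimate to
\begin{align}
\d\J - \dmu(\J) = -\sum_{(\mu,\nu)\in\S} J_{\mu,\nu}\,\Bigl(\tfrac1R\sum_{k=1}^R(-1)^{|\etak k\cap\nu|}\Bigr)\,\zm\xn\ .
\end{align}

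Next I would apply the triangle inequality in operator norm, using $\|\zm\xn\|=1$ and $|J_{\mu,\nu}|\le \|\J\|_\mathrm{max}$, to obtain
\begin{align}
\bigl\| \d\J - \dmu(\J) \bigr\| \le \|\J\|_\mathrm{max} \sum_{(\mu,\nu)\in\S} \Bigl|\tfrac1R\sum_{k=1}^R(-1)^{|\etak k\cap\nu|}\Bigr|\ .
\end{align}
Then for each of the $S$ summands I would invoke the previous proposition on typical balance of flips with the choice of per-term threshold $\epsilon' = \epsilon/(S\|\J\|_\mathrm{max})$. Each summand therefore exceeds $\epsilon'$ with probability at most $2e^{-R\epsilon'^2}$, and a union bound over the $S$ active pairs yields that with probability at least $1- 2S e^{-R\epsilon'^2}$ the sum is bounded by $S\epsilon'\|\J\|_\mathrm{max} = \epsilon$.

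Finally, demanding $2S e^{-R\epsilon'^2} \le \delta$ and solving for $R$ gives $R \ge S^2\|\J\|_\mathrm{max}^2 \epsilon^{-2}\log(2S/\delta)$, matching the claimed sample complexity. The steps themselves are routine, and the main subtlety is really bookkeeping: ensuring that the $\nu=0$ terms drop out of the error (so that the sparsity $S$ refers only to off-diagonal terms) and picking the per-term precision $\epsilon'$ so that the union bound and triangle inequality combine to give exactly the scaling $S^2\|\J\|_\mathrm{max}^2/\epsilon^2$. One could potentially tighten the bound by replacing the union bound with a matrix-valued concentration inequality applied to the sum $\sum_{(\mu,\nu)\in\S} J_{\mu,\nu}\xi_\nu^{(k)}\zm\xn$ with Rademacher $\xi_\nu^{(k)}$, but the union-bound route already yields the stated result cleanly.
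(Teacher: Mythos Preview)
Your proposal is correct and follows essentially the same route as the paper: decompose $\J$ in the Pauli basis, note that only the off-diagonal terms $(\mu,\nu)\in\S$ contribute to $\d\J-\dmu(\J)$, bound each via the previous proposition with per-term threshold $\epsilon/(S\|\J\|_{\max})$, and combine by triangle inequality plus a union bound over the $S$ terms. The only cosmetic difference is that the paper writes the triangle-inequality step as $S\|\J\|_{\max}\max_{(\mu,\nu)\in\S}\|\dmu(\xn)\|$ rather than the sum you use, but both lead to the identical constraint on $R$.
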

\begin{proof}
By the triangle inequality we have
\begin{align}
  \|\d{\J} - \dmu(\J)\|
  \le& S \|\J\|_\text{max}\max_{\mu,\nu\in\S}\left\|\dmu(\xn)\right\|
\end{align}
so if  $\etak i$'s are such that for all $\mu,\nu\in \S$ the sequence is sufficiently balanced such that $\left\|\dmu(\xn)\right\|\le \varepsilon:= \frac{\epsilon}{S\|\J\|_\text{max}}$ then $\|\d{\J} - \dmu(\J)\| \le \epsilon$ for some $\epsilon >0$.
The probability of this failing to occur can be bounded as
\begin{align}
\Pr\bigl(\exists \mu,\nu \in \S \text{ s.t. }\bigl \|  \dmu(\xn) \bigr\|> \varepsilon \bigr)\le\sum_{\mu,\nu\in \S}\, \Pr\bigl(\bigl \|  \dmu(\xn) \bigr\|> \varepsilon \bigr)\ . 
\end{align}
This sufficient condition implies that 
\begin{align}
  \Pr\bigl( \|\d{\J} -  \dmu(\J)\|\le  \epsilon \bigr)
  &\ge \Pr\bigl(\forall {\mu,\nu\in \S}\text{ holds }\,\bigl \|  \dmu(\xn) \bigr\|\le \varepsilon \bigr)\\  
 &\ge 1-S\max_{\mu,\nu\in \S}\, \Pr\bigl(\bigl \|  \dmu(\xn) \bigr\|> \varepsilon \bigr)\\
 &\ge1-  2 Se^{- R\varepsilon^2} \ .
\end{align}
We arrive at the statement of the proposition by using the prescribed dependance of the number of trials $R$ on the approximation error $\epsilon$ and failure probability $\delta$.
\end{proof}

\section{Additional plots}
\label{app:numerics}

This appendix provides examples of figures for other parameters than in the main text.
Firstly, Fig.~\ref{appfigL7}a discusses the analog of the Fig.~\ref{figL5} but now for $L=7$ sites. 
It also presents two more plots, one for the TFLIM and one for TFIM with coupling strength $J_X=1$.
Secondly, Fig.~\ref{appfigeigsGWW2} shows eigenstates in analogy to Fig.~\ref{figeigsGWW} but now for TFIM as opposed to TFLIM which showcases the difference in flow when integrability is present.
Finally, Fig.~\ref{appfigL3diag} shows the analogy to Fig.~\ref{figL3diag} but focuses on the coupling $J_X=1$ as opposed to $J_X=2$ in the main text.

The numerical code written in python is freely available~\cite{github_dbf} and includes a class implementing a suite of functionalities for simulating double-bracket flows together with evaluation of observables and visualization.
The entire project is accompanied by jupyter notebooks, one for each figure appearing in this work.
The optimization of the flow step durations is performed by either a grid or binary search and the value yielding the largest $\sigma$-decrease is selected.
Each of the figures can be produced in a matter of a few minutes.
This timing is significantly larger for larger system sizes $L>9$ than considered here and thus such simulations would carry extra cost in terms of greenhouse gases~\cite{climate}.
\begin{figure*}
  \includegraphics[trim = 0cm 0 0cm 0, clip,width=.33\linewidth]{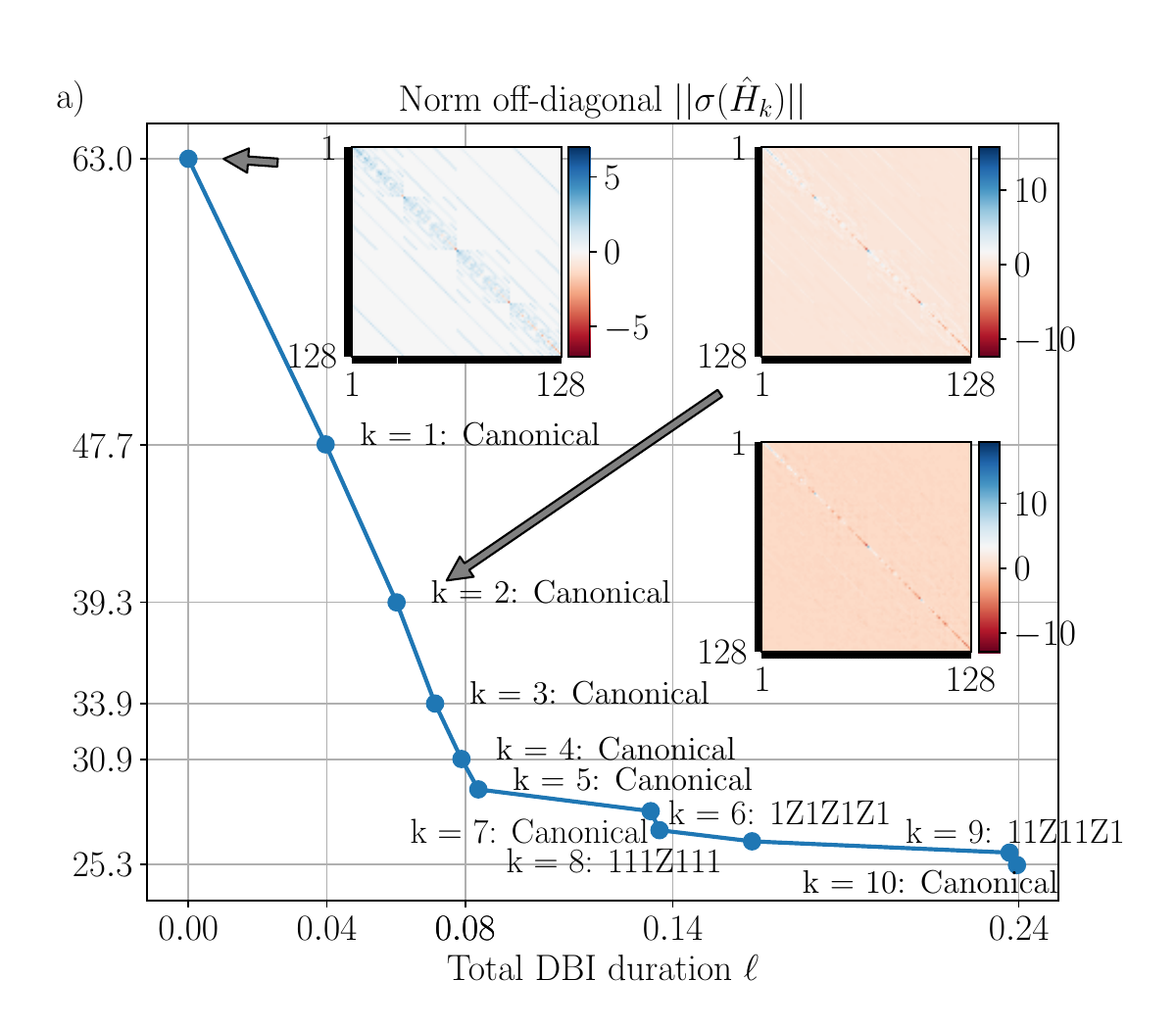}
  \includegraphics[trim = 0cm 0 0cm 0, clip,width=.33\linewidth]{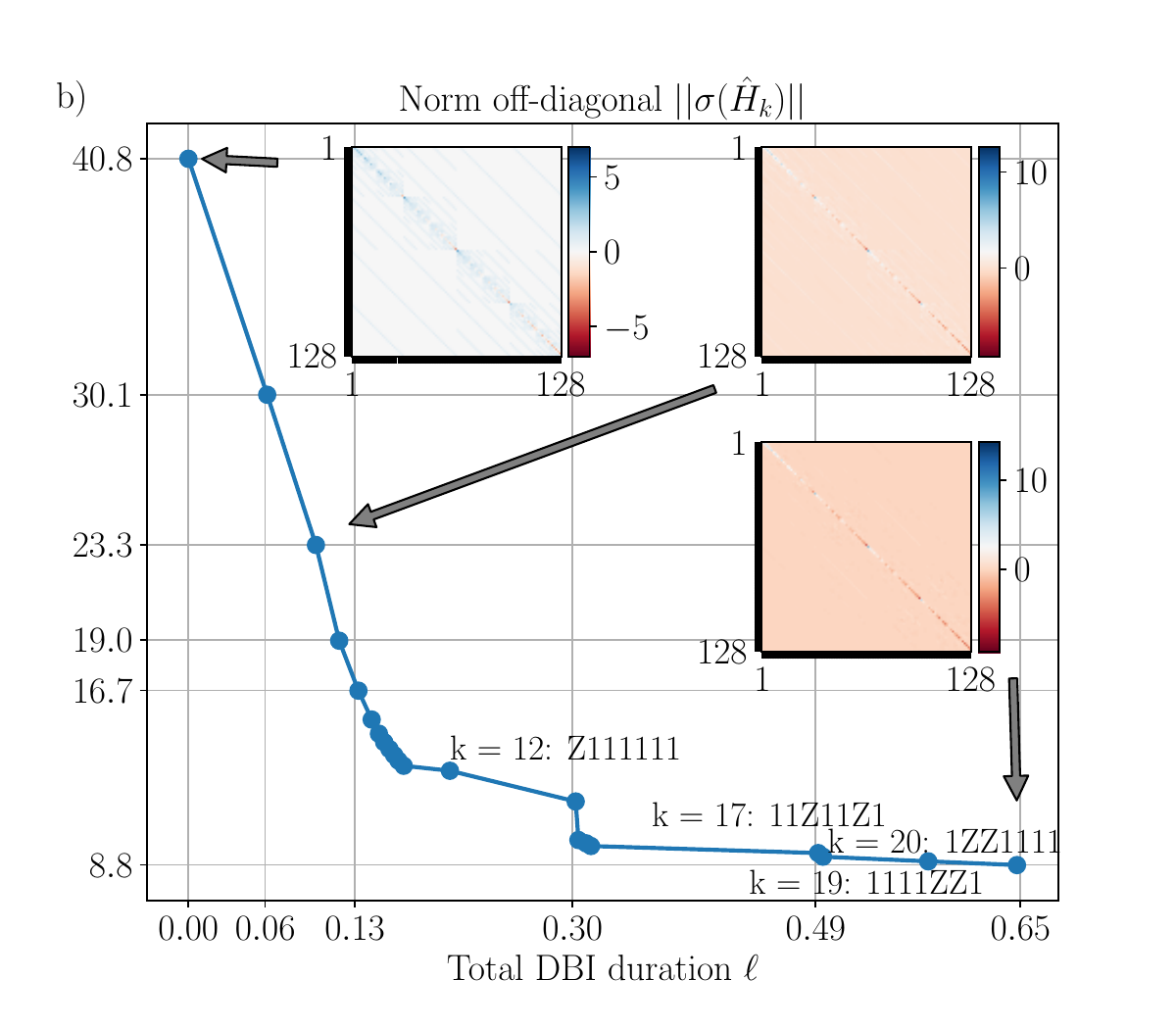}
  \includegraphics[trim = 0cm 0 0cm 0, clip,width=.33\linewidth]{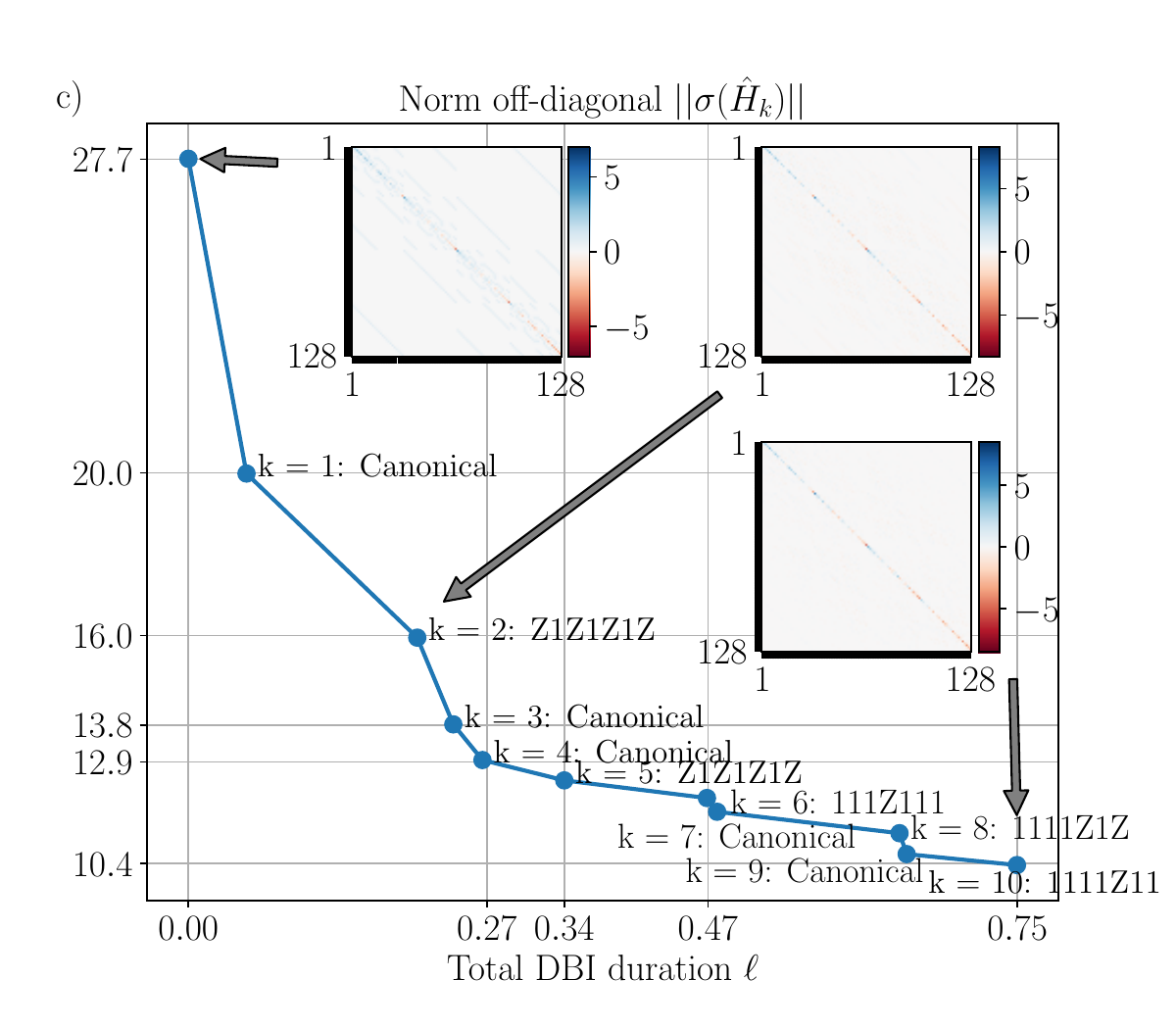}
  \caption{Examples of variational flows for $L=7$ sites. a) TFLIM with $J_X=2$ (Same as Fig.~\ref{figL5} but two more sites). b) TFLIM with $J_X=1$ which converges more slowly (All unlabeled steps used the canonical bracket). c) The integrable TFIM model with $J_X=1$.}
  \label{appfigL7}
  \end{figure*}

\begin{figure*}
  \includegraphics[trim = 0 0 0 0cm, clip,width=.33\linewidth]{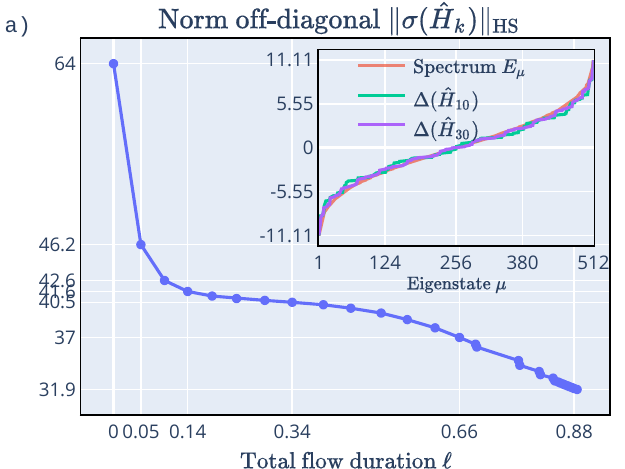}
  \includegraphics[trim = 0cm 0 0cm 0cm, clip,width=.33\linewidth]{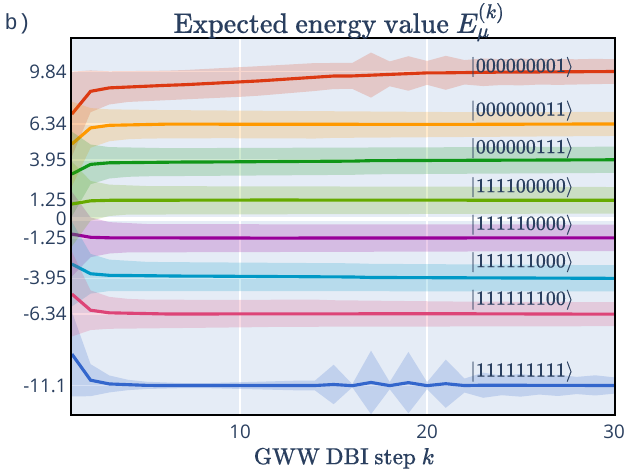}
  \includegraphics[trim = 0 0 0 0, clip,width=.33\linewidth]{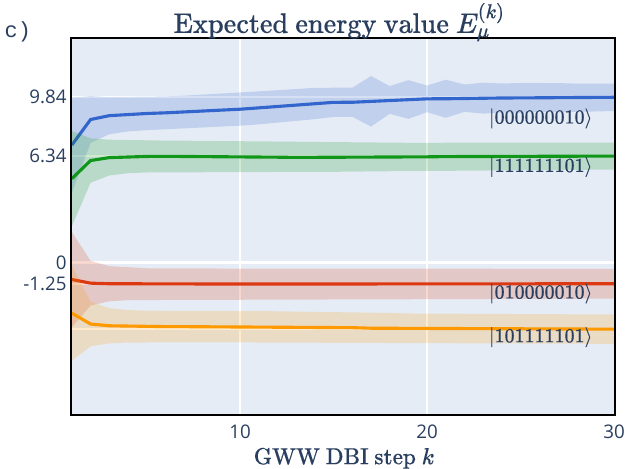}
  \caption{Analogous to Fig.~\ref{figeigsGWW}, GWW flow of  computational basis states towards eigenstates but for the transverse-field Ising model $\h_\text{TFIM}$ instead of TLFIM.
  {a)} The decrease of the norm of the off-diagonal restriction progresses smoothly instead of in stages.
  A lower value of the norm of the off-diagonal restriction is reached.
  The degenerate steps in the diagonal restrictions (inset) are less pronounced and thus match the spectrum faster.
{b)} Expected values of the energy  tend towards their asymptotic values faster than for TLFIM.
States flowing towards outer edges of the spectrum exhibit periodic increases of energy fluctuations which does not appear for TLFIM and may be due to integrability.
This signifies coupling and decoupling to other states and may be necessary to lift degeneracies in general.
{c)} States that reach the smallest energy fluctuations after $k=30$ flow steps accelerate the convergence  after about $k=10$ steps when the overall norm from panel a) moves away from the slowly decreasing plateaux and the oscillations of energy fluctuations begin to appear.
}
\label{appfigeigsGWW2}
\end{figure*}
\begin{figure*}
  \includegraphics[trim = 0 0 0 0, clip,width=\linewidth]{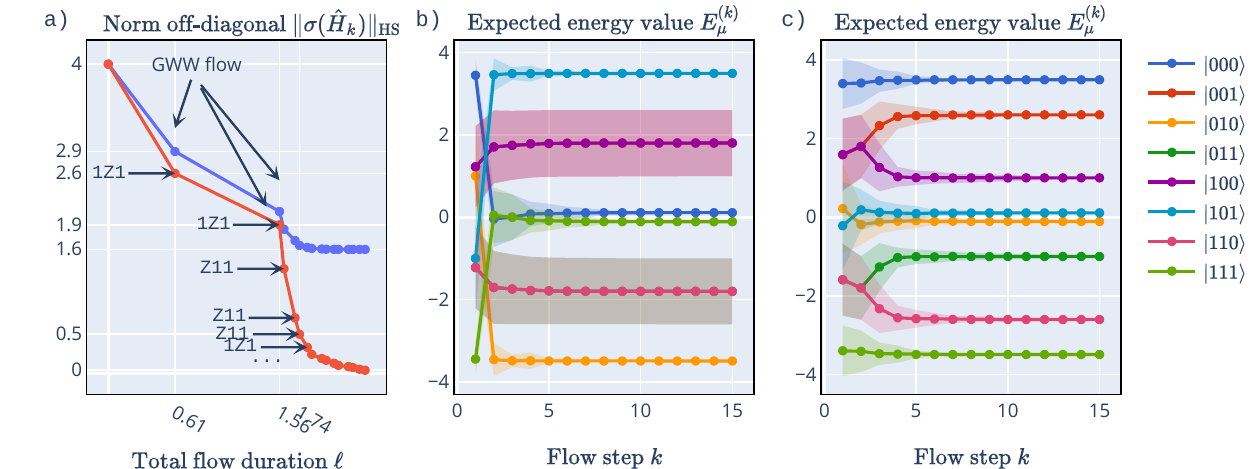}
  \caption{Comparison of variational and GWW flows in analogy to Fig.~\ref{figL3diag} but for $J_X=1$ and not $2$. 
  a) The Hilbert-Schmidt norm of the off-diagonal restriction decreases more rapidly for the variational double-bracket flow but sometimes locally optimal choices lead to smaller decreases in the long run (as evidence by the two $\sigma$-decrease curves meeting after two steps.
  b)  The fully polarized states converge very fast when using the GWW flow.
  c) Again, the degeneracy of the initial computational basis states becomes fully lifted after only a handful of flow steps and the expected energy fluctuation becomes negligible in the course of the flow.
  }
  \label{appfigL3diag}
\end{figure*}

\section{Notes on the GWW flow}
\label{app:existence}
\subsection{GWW flow as a system of ordinary differential equations}
\label{app:existencecontext}
The system of equations
  \eqref{eq:derivativehell} and \eqref{eq:well} can be written as a double-bracket differential equation
\begin{align}
  \partial_\ell \Hell = [ [\dHell,\oHell], \Hell]\ .
\end{align}
We can equivalently look at individual matrix elements with $i,j={1,\ldots, D}$
\begin{align}
  \partial_\ell (\Hell)_{i,j} =  \sum_{k=1}^D\left( ([\dHell,\oHell])_{i,k} (\Hell)_{k,j} - (\Hell)_{i,k}([\dHell,\oHell])_{k,j}\right)\ .
\end{align}
After we expand the matrix elements of the canonical bracket further we have
\begin{align}
  ([\dHell,\oHell])_{i,k} = (\Hell)_{i,k} ((\Hell)_{i,i} - (\Hell)_{k,k}) (1-\delta_{i,k})\ .
\end{align}
Thus
\begin{align}
  \partial_\ell (\Hell)_{i,j} =  \sum_{k=1}^D(\Hell)_{i,k}(\Hell)_{k,j}\left(  ((\Hell)_{i,i} - (\Hell)_{k,k}) (1-\delta_{i,k}) - ((\Hell)_{k,k} - (\Hell)_{j,j}) (1-\delta_{j,k})\right)\ .
  \label{eq:ODE}
\end{align}
This means that the prescribed function is a multi-variate polynomial of degree $3$.

{
\def\A{\mathcal \W}
\def\B{\mathcal \H}
Polynomials are locally Lipschitz-continuous and thus by the Picard-Lindel\"of theorem for an appropriately restricted domain of flow durations $\ell$ and matrix element values of $(\Hell)_{i,j}$ there is a unique solution~\cite{hirsch2012differential}.
Within that domain of flow durations $\Hell$ exists and hence $\Well$ exists.
Then the Dyson-Picard-Lindel\"of  series is given by
\begin{align}
\label{eqDPL}
    \Hell = \h_0 +
    \sum_{n=1}^\infty \int_{\snt}\dtn 
    [\W_{s_n},[\ldots,[\W_{s_1},\h_0]\ldots]\ ,
\end{align}
where  $\sntc n s=\{r\in \R^{\times n}\,:\,0\le r_1\le r_2\le \ldots\le r_n\le s\}$ is the $n$ dimensional simplex.
This series defines a unitary transformation $\h_0\mapsto \Hell$~\cite{YoungReviewLR} which we may denote by $\Uell$.
Alternatively, we may treat $-i\Well$ as an input time dependent Hamiltonian and we may write it in short-hand notation of time-ordered exponantials  as
\begin{align}
    \Uell = \mathcal T e^{ - \int_0^\ell \W_r \dr}\ .
    \label{eq:timeorderedGWW}
\end{align}

Polynomials are not globally Lipschitz-continuous so the Picard-Lindel\"of theorem does not directly yield a unique solution for $\ell\rightarrow \infty$.
Ref.~\cite{hirsch2012differential} stresses that lack of global Lipschitz-continuity can at times have drastic repercussions and exemplifies it by the ordinary differential equation $ d x(t)/dt = 1 +x(t)^2$ whose solution is $x(t)=\tan(c+t)$ where $c$ is a constant set by the initial condition and we get $x(-c\pm \pi/2)\rightarrow \pm \infty$.
Thus, the above discussion using basic existence theorems from the theory of dynamical systems substantiaties a priori the existence of the GWW flow only for a certain flow duration restricted by the local Lipschitz-continuity constant.

The monograph by Helmke and Moore~\cite{helmke_moore_optimization} provides a proof of the existence of the solution to double-bracket flow equations for all $\ell$ using a differential-geometric approach.
Thus, Eq.~\eqref{eq:timeorderedGWW} is well-defined for all $\ell\in \mathbb R$.

Subsec.~\ref{appunitaryapprox} proves the existence of GWW flow  for all flow durations $\ell$ without resorting to the theory of ordinary differential equations but rather by showing that the GWW DBI converges to the GWW flow.
While this necessitates technical calculations, the proof will be self-contained and constructive.

\subsection{Monotonicity of GWW flow}
\label{app:monotonicity}
When it exists, the GWW flow is smooth and so we can derive in the continuous case the precursor to the DBI monotonicity lemma~\ref{sgdlemma} from the main text.
\begin{proposition}[Monotonicity of GWW flow]
We have
\begin{align}
\partial_\ell\| \oHell\|_\mathrm{HS}^2 = -2\| {\Well}\|_\mathrm{HS}^2\ .
\end{align}
\end{proposition}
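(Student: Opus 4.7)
The plan is to differentiate the trace expression $\| \oHell\|_\mathrm{HS}^2 = \tr[\oHell^2]$ directly and use the GWW flow equation together with the three algebraic facts already established earlier in the appendix: linearity of the restriction $\o\cdot$, the diagonal--off-diagonal orthogonality relation from lemma~\ref{orthogonality}, and trace cyclicity for commutators (Eq.~\eqref{eqcyclicity}).

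First I would apply the product rule and Hermiticity of $\oHell$ to obtain $\partial_\ell \|\oHell\|_\mathrm{HS}^2 = 2\tr[\oHell\,\partial_\ell \oHell]$. Since $\o\cdot$ is a linear projection it commutes with $\partial_\ell$, so using the GWW flow equation~\eqref{eq:derivativehell} I can write $\partial_\ell \oHell = \o{[\Well,\Hell]}$. The key structural step is then to invoke lemma~\ref{orthogonality}: the trace of $\oHell$ against the diagonal part of $[\Well,\Hell]$ vanishes, so
\begin{align}
\tr\bigl[\oHell\,\o{[\Well,\Hell]}\bigr] = \tr\bigl[\oHell\,[\Well,\Hell]\bigr].
\end{align}
This is the crucial simplification because it removes the extra projection and leaves a bare commutator under the trace.

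Next I would apply cyclicity in the form of Eq.~\eqref{eqcyclicity} to rotate the trace into $\tr[\Well\,[\Hell,\oHell]]$, and then use $[\Hell,\oHell] = [\dHell + \oHell,\oHell] = [\dHell,\oHell]$ (the second commutator vanishes by Eq.~\eqref{comspecial}). By the definition~\eqref{eq:well} of $\Well$, the commutator $[\dHell,\oHell]$ is exactly $\Well$ itself, so the expression collapses to $\tr[\Well^2]$. Since $\Well$ is anti-Hermitian, $\tr[\Well^2] = -\tr[\Well^\dagger \Well] = -\|\Well\|_\mathrm{HS}^2$, which multiplied by the prefactor $2$ yields the claimed monotonicity identity.

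The proof is essentially a chain of linear-algebraic identities rather than analysis, so the only conceptual obstacle is noticing that lemma~\ref{orthogonality} is the right tool to strip the outer $\o\cdot$ and that cyclicity must be applied \emph{before} splitting $\Hell$ into diagonal and off-diagonal parts, since it is only after cyclic rearrangement that the diagonal piece of $\Hell$ survives while the off-diagonal piece is killed by Eq.~\eqref{comspecial}. One should also verify quietly that $\partial_\ell \Hell$ is well-defined, which follows from the differentiability of $\Uell$ established in the existence proposition above.
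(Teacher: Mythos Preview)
Your proposal is correct and follows essentially the same route as the paper's own proof: differentiate using the product rule, commute $\o\cdot$ with $\partial_\ell$ via linearity and the flow equation, strip the outer $\o\cdot$ using lemma~\ref{orthogonality}, apply cyclicity~\eqref{eqcyclicity}, split $\Hell=\dHell+\oHell$, and finish with anti-Hermiticity of $\Well$. One cosmetic remark: the vanishing of $[\oHell,\oHell]$ is trivial (any operator commutes with itself) and does not actually require Eq.~\eqref{comspecial}; the paper simply drops that term without comment.
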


\begin{proof}
  We need to begin by studying the derivative of the off-diagonal terms which is defined as
  \begin{align}
    \partial_\ell \oHell = \lim_{\dell\rightarrow 0} \frac{\o{\H_{\ell+\dell}}-\oHell}{\dell} \ .
  \end{align}
The Hamiltonian satisfies the GWW flow equation~\eqref{eq:derivativehell} and using linearity of $\o\cdot$ we get
  \begin{align}
    \partial_\ell \oHell &= \lim_{s\rightarrow 0} \frac{\o{\H_{\ell}}-\o{\H_{\ell+s}}}{s}\\
   &=\sigma\left( \lim_{s\rightarrow 0} \frac{{\H_{\ell}}-{\H_{\ell+s}}}{s}\right)\\
    &= \o{[\W_\ell,\H_\ell]}\ .
  \end{align}
Using this expression together with the product differentiation rule and trace cyclicity we have
\begin{align}
\partial_\ell\| \oHell\|_\text{HS}^2
&= \tr\left(\oHell\,\partial_\ell\oHell\right)
+ \tr\left(\partial_\ell\oHell\, \oHell\right)\\
&= 2\,\tr\left(\oHell\,\o{[\W_\ell,\H_\ell]}\right)\ .
\end{align}
Using the definition of $\o\cdot$
\begin{align}
  \partial_\ell\| \oHell\|_\text{HS}^2 
=&2\,\tr\left(\oHell\,[\W_\ell,\H_\ell]\right)
-2\,\tr\left(\oHell\,\d{[\W_\ell,\H_\ell]}\right)\ .
\end{align}
The second term vanishes as anticipated by lemma~\ref{orthogonality}.
Finally we use the cyclicity formula \eqref{eqcyclicity} and definition of the canonical bracket~\eqref{eq:well}  obtaining
\begin{align}
  \partial_\ell\| \oHell\|_\text{HS}^2
&=2\,\tr\left(\oHell\,[\W_\ell,\H_\ell]\right)\\
&=2\,\tr\left(\Well\,[\H_\ell,\oHell]\right)\\
&=2\,\tr\left(\Well\,[\d{\H_\ell},\oHell]\right)\label{eq:drop}\\
&=2\,\tr\left(\Well^2\right)\ .
\end{align}
The crucial sign appears because the GWW flow generator is anti-hermitian
\begin{align}
  \Well^\dagger &= ( [\dHell,\oHell])^\dagger \\
  &= \oHell^\dagger \dHell^\dagger -\dHell^\dagger \oHell^\dagger\\
  &= -\Well\ 
\end{align}
which coincidentally implies that it does generate a unitary flow $\Uell$. 
Using that the GWW flow generator is anti-hermiatian and the definition of the Hilbert-Schmidt norm yields the proposition.
\end{proof}

\subsection{Convergence of DBI discretizations of the GWW flow}
\label{appunitaryapprox}
\begin{proposition}[Existence of GWW flow]
The GWW flow unitaries $\Uell$ exist for any $\ell\in [0,\infty)$.
\label{GWWexistenceprop}
\end{proposition}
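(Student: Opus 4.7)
The plan is to treat Eq.~\eqref{eq:derivativehell} as an autonomous ODE on the finite-dimensional real vector space of Hermitian operators in $\linops$, with right-hand side $F(\H) = [[\d{\H}, \o{\H}], \H]$. The pinching map $\d{\argdot}$ is a linear projection and the commutator is bilinear, so $F$ is a polynomial in the entries of $\H$, hence globally $C^\infty$ and in particular locally Lipschitz. The Picard--Lindel\"of theorem then yields a unique maximal solution $\Hell$ on a half-open interval $[0, \ell_{\max})$ starting from $\Hell|_{\ell=0} = \h$.

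Second, I would argue $\ell_{\max} = \infty$ via an a priori bound. Since $F(\H) = [\Well, \H]$ with $\Well = [\d{\H}, \o{\H}]$ anti-Hermitian, differentiating the Hilbert--Schmidt norm yields
\begin{align*}
  \partial_\ell \|\Hell\|_\mathrm{HS}^2 = 2\,\tr\bigl(\Hell\,[\Well, \Hell]\bigr) = 0
\end{align*}
by the cyclicity identity~\eqref{comspecial}. Hence $\|\Hell\|_\mathrm{HS}$ is conserved along the trajectory, and since $\|\argdot\| \le \|\argdot\|_\mathrm{HS}$ in finite dimension one obtains the uniform bound $\|\Hell\| \le \|\h\|_\mathrm{HS}$ on $[0, \ell_{\max})$. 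The trajectory therefore cannot escape the corresponding compact set, and the standard ODE extension criterion forces $\ell_{\max} = \infty$.

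Third, with the global Hermitian trajectory $\Hell$ in hand, I would produce $\Uell$ by solving the now \emph{linear} non-autonomous ODE
\begin{align*}
  \partial_\ell \Uell = -\Uell\,\Well, \qquad \Uell|_{\ell=0} = \1,
\end{align*}
where $\Well = [\d{\Hell}, \o{\Hell}]$ is a prescribed continuous function of $\ell$ bounded uniformly by $4\|\h\|^2$ via Lemma~\ref{bndcom}. This equation has a unique global solution, and unitarity is preserved because $\partial_\ell(\Uell^\dagger \Uell) = 0$ using $\Well = -\Well^\dagger$. Setting $\widetilde{\H}_\ell := \Uell^\dagger \h \Uell$ and differentiating gives $\partial_\ell \widetilde{\H}_\ell = [\Well, \widetilde{\H}_\ell]$; the trajectory $\Hell$ satisfies the same linear matrix ODE with $\Well$ now regarded as a prescribed coefficient and the same initial value $\h$, so uniqueness yields $\Hell = \Uell^\dagger \h \Uell$, recovering Eq.~\eqref{eq:hell}.

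The main delicacy is the apparent circularity in the simultaneous specification of $\Hell$ and $\Uell$ in Eqs.~\eqref{eq:hell}--\eqref{eq:well}. The ordering above dissolves it by first producing $\Hell$ from a closed quasi-linear ODE and only then reconstructing $\Uell$ from a linear one. An alternative route more aligned with the main-text narrative is to obtain $\Uell$ as the $N\to\infty$ limit of the discrete sequence generated by Eqs.~\eqref{eq:wkp}--\eqref{eq:hk}; convergence of that scheme, which subsumes existence, is the subject of Sec.~\ref{discretizedGWW}.
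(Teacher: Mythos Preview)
Your argument is correct and considerably shorter than the paper's, but it proceeds along a genuinely different route. The paper does not invoke Picard--Lindel\"of at all; instead it constructs $\Uell$ explicitly as the limit of the dyadic discretization $\un_{2^N}$, proving that this sequence is Cauchy in operator norm (via the recursion lemma~\ref{recursionlemma}), then establishing continuity, a one-parameter group law, and finally differentiability with the correct generator. Your approach instead closes the ODE for $\Hell$ first---observing that $F(\H)=[[\d\H,\o\H],\H]$ is a cubic polynomial and that $\|\Hell\|_\mathrm{HS}$ is conserved so the trajectory is confined to a compact set---and only afterwards recovers $\Uell$ from the resulting linear non-autonomous equation.

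What each buys: your route is standard ODE machinery and dispatches existence in a few lines, and the decoupling of $\Hell$ from $\Uell$ neatly dissolves the apparent circularity you flag. The paper's constructive route is longer but yields the discrete-to-continuous convergence estimates as a byproduct, which are precisely what is needed later to control the quantum algorithm (Proposition~\ref{proplimitGWW}). One minor imprecision in your write-up: at the point where you bound $\|\Well\|$ by $4\|\h\|^2$ via Lemma~\ref{bndcom}, you have not yet established that $\Hell$ is unitarily conjugate to $\h$, so you only know $\|\Hell\|\le\|\Hell\|_\mathrm{HS}=\|\h\|_\mathrm{HS}$ rather than $\|\Hell\|=\|\h\|$; the bound should read $4\|\h\|_\mathrm{HS}^2$, which is harmless since any uniform bound (or indeed mere continuity of $\Well$) suffices for global solvability of the linear equation for $\Uell$. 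Also, the constructive alternative you point to in your final sentence is actually Sec.~\ref{app:existence}, not Sec.~\ref{discretizedGWW}; the latter presupposes $\Uell$ already exists.
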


The proof of this proposition will build upon several intermediate results.
The exact unitary of the GWW flow $\Uell$ will be defined as the limit of composing unitary operations with piece-wise constant generators.
These generators will be built recursively from previous steps.
We will  show that this sequence converges and that the limit unitary satisfies the GWW flow equation.

The sequence of approximations enumerated by $N=1,2,\ldots$ will based on flow parameters $\elln_k =k \ell 2^{-N}$ with $k\in\{0, 1,\ldots 2^N\}$.
The approximations of the GWW flow unitaries will be denoted as $\{\un_{k}\}_{k\in\{0, 1,\ldots 2^N\}}$,
of the flowed Hamiltonians as $\{\hn_{k}\}_{k\in\{0, 1,\ldots 2^N\}}$ and of the canonical bracket as $\{\wn_{k}\}_{k\in\{1,\ldots 2^N\}}$.
The upper index indicates refinements of the approximations and  $\{\unm_{k}\}_{k\in\{0, 1,\ldots 2^{N-1}\}}$ will be compared to $\{\un_{k}\}_{k\in\{0, 1,\ldots 2^N\}}$.

Fix $N\in \mathbb N_>$ and for $k=0$ define
\begin{align}
  \un_{0} = \id \ 
\end{align}
and
\begin{align}
  \hn_0 = \h\ .
\end{align}
The first approximate flow step will be generated by
\begin{align}
  \wn_{1} = [\d{\h},\o{\h}]\ .
\end{align}
For any $k\in\{1,\ldots 2^N\}$ let us further define
\begin{align}
  \un_k = \un_{k-1}e^{-\delln \wn_{k}}
\end{align}
with $\delln = \ell 2^{-N}$.
This leads to the definition of the flowed Hamiltonian approximation
\begin{align}
  \hn_{k}=  {\un_{{k}}}\,^\dagger\h \un_{{k}}\ .
\end{align}
For $k<N$ the approximate canonical bracket for the following step is
\begin{align}
  \wn_{{k+1}} = [\d{\hn_{k}},\o{\hn_{k}}]\ .
\end{align}

The GWW flow unitary will be defined as
\begin{align}
  \Uell = \lim_{N\rightarrow \infty} \un_{2^N}\ .
\label{eqlimitUell}
\end{align}
This can be interpreted as a limit of the GWW DBI, only we make explicit the dependence on the number of discretization steps $N$.
Point-wise (i.e., for fixed $\ell$) convergence  in operator norm  is implied by the sequence of GWW flow approximations being a Cauchy sequence
\begin{align}
  \lim_{N\rightarrow \infty} \|  \un_{2^N} - \unm_{2^{N-1}}\| =0 \ .
\label{cauchy}
\end{align}
Notice the notation used subsequently: The approximations labeled by $N-1$ have flow step twice larger than those labeled by $N$ and for both GWW flow approximations in \eqref{cauchy} the lower index signifies that the GWW  flow at flow parameter $\ell$ is approximated because $\ell = \delln 2^N = \dellnm 2^{N-1}$.

\begin{proposition}[Reduction to canonical bracket approximation]
For any $k\in\{0,\ldots,2^N-1\}$ we have
  \begin{align}
\|\unm_{{k+1}} -\un_{{2k+2}}\| \le&  \|\unm_{{k}} -\un_{{2k}}\|+
  {\dellnm} \|\wn_{{2k+1}}- \wnm_{{k+1}} \|
  +{\delln} \|\wn_{{2k+2}}-\wn_{{k+1}} \|\ .
  \label{eqh}
\end{align}
\end{proposition}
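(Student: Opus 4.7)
The plan is to realize both sides in a common form using $\dellnm = 2\delln$, introduce intermediate operators that isolate the two sources of discrepancy --- the coarse-versus-fine discretization error already accumulated through step $k$, and the mismatch between a single coarse-level exponential and two consecutive fine-level exponentials --- and then apply unitary invariance of the operator norm together with the Lipschitz-type bound
\begin{align}
\|e^{sA} - e^{sB}\| \le s\,\|A-B\|\qquad \text{for anti-Hermitian } A,B \text{ and } s\ge 0.
\end{align}
This bound is obtained by differentiating $f(t) = e^{tA}e^{-tB}$ to get $f'(t) = e^{tA}(A-B)e^{-tB}$, noting $f(0)=\id$, integrating to produce $\|f(s)-\id\|\le s\|A-B\|$, and using $e^{sA}-e^{sB} = (f(s)-\id)\,e^{sB}$ together with unitarity of $e^{sB}$.

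Concretely, from $\unm_{k+1} = \unm_k\,e^{-\dellnm \wnm_{k+1}}$ and $\un_{2k+2} = \un_{2k}\,e^{-\delln\wn_{2k+1}}\,e^{-\delln\wn_{2k+2}}$, I exploit that $\dellnm=2\delln$ to write $e^{-\dellnm\wn_{2k+1}} = e^{-\delln\wn_{2k+1}}\,e^{-\delln\wn_{2k+1}}$ (commuting exponents of the same anti-Hermitian operator), and insert $\pm\,\un_{2k}\,e^{-\dellnm\wn_{2k+1}}$ inside the norm. The triangle inequality then reduces the task to bounding
\begin{align}
\|\unm_k\,e^{-\dellnm\wnm_{k+1}} - \un_{2k}\,e^{-\dellnm\wn_{2k+1}}\|\quad\text{and}\quad\|\un_{2k}\,e^{-\delln\wn_{2k+1}}\bigl(e^{-\delln\wn_{2k+1}} - e^{-\delln\wn_{2k+2}}\bigr)\|.
\end{align}
By unitary invariance the second quantity equals $\|e^{-\delln\wn_{2k+1}} - e^{-\delln\wn_{2k+2}}\|\le\delln\|\wn_{2k+1}-\wn_{2k+2}\|$. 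For the first I further insert $\pm\,\un_{2k}\,e^{-\dellnm\wnm_{k+1}}$ and again invoke unitary invariance, obtaining $\|\unm_k-\un_{2k}\|+\|e^{-\dellnm\wnm_{k+1}}-e^{-\dellnm\wn_{2k+1}}\|\le\|\unm_k-\un_{2k}\|+\dellnm\|\wnm_{k+1}-\wn_{2k+1}\|$. Summing the two bounds yields the stated inequality.

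The main obstacle is really combinatorial rather than analytic: one must choose the telescope so that every inserted unitary is absorbed by unitary invariance against its neighbours, and the Lipschitz bound is invoked at the correct step size at each node. The asymmetric prefactors of the final inequality --- $\dellnm$ multiplying the coarse-versus-fine generator mismatch $\wn_{2k+1}-\wnm_{k+1}$, and $\delln$ multiplying the difference between two consecutive fine-level generators --- are precisely the fingerprint of this particular insertion order: the first pairs a full coarse step against a full coarse step, while the second only compares one of the two fine sub-steps that make up a single coarse step.
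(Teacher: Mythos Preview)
Your proof is correct and follows essentially the same route as the paper's: both argue by triangle-inequality telescoping, unitary invariance of the operator norm, and the Lipschitz bound $\|e^{sA}-e^{sB}\|\le s\|A-B\|$ for anti-Hermitian $A,B$. The only cosmetic difference is the order of insertions: the paper first splits off $\|\unm_k-\un_{2k}\|$ by inserting $\pm\,\unm_k e^{-\delln\wn_{2k+1}}e^{-\delln\wn_{2k+2}}$ and then telescopes the remaining product of exponentials, whereas you first insert $\pm\,\un_{2k}e^{-\dellnm\wn_{2k+1}}$ and split off $\|\unm_k-\un_{2k}\|$ inside one of the two resulting pieces. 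Both reach the identical bound, and your remark on why the two prefactors $\dellnm$ and $\delln$ arise is exactly right. (Note that the displayed statement contains a typo: the last term should read $\|\wn_{2k+2}-\wn_{2k+1}\|$, as both your derivation and the paper's own proof confirm.)
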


\begin{proof}
  
In general we have
\begin{align}
  \un_{{2k+2}}= \un_{{2k}} e^{-\delln \wn_{2k+1}}e^{-\delln \wn_{2k+2}} 
\end{align}
and let us use the triangle inequality to obtain
\begin{align}
\|\unm_{{k+1}} -\un_{{2k+2}}\| \le&
 \| \un_{{2k}}e^{-\delln \wn_{2k+1}}e^{-\delln \wn_{2k+2}}
 - \unm_{{k}}e^{-\delln \wn_{2k+1}}e^{-\delln \wn_{2k+2}}\|
 \\&+\| \unm_{{k}}e^{-\delln \wn_{2k+1}}e^{-\delln \wn_{2k+2}}
-\unm_{{k}}e^{-\dellnm \wnm_{k+1}}\|\ 
\\\le&
  \|\unm_{{k}} -\un_{{2k}}\| +\|e^{-\delln \wn_{2k+1}}e^{-\delln \wn_{2k+2}}
-e^{-\dellnm \wnm_{k+1}}\| \ .
\end{align}
We will use that for any anti-hermitian $\Xi,\Xi' \in \li(D)$ it holds that
\begin{align}
  \|e^\Xi-e^{\Xi'}\| \le  \|\Xi-{\Xi'}\| \ .
\end{align}
Conveniently $\dellnm=2\delln$ and we get
\begin{align}
\|e^{-\delln \wn_{2k+1}}e^{-\delln \wn_{2k+2}}
-e^{-\dellnm \wnm_{k+1}}\|\le&
\|e^{-\delln \wn_{2k+2}}-e^{-\delln \wn_{2k+1}}\|\\\nonumber
&+\|e^{-2\delln \wn_{2k+1}}-e^{-\dellnm \wnm_{k+1}}\|\\
  \le&
  {\dellnm} \|\wn_{{2k+1}}- \wnm_{{k+1}} \|
  +{\delln} \|\wn_{{2k+2}}-\wn_{{2k+1}} \|\ .
\end{align}

\end{proof}

\begin{proposition}[Canonical bracket approximation]
\label{bracketapproximation}
  The canonical bracket approximation obeys
  \begin{align}
  \|\wn_{{2k+1}}- \wnm_{{k+1}} \|
  \le&16\|\h\|\|{\unm_{k}} -{\un_{2k}}\|\ .
  \end{align}
  and is Lifschitz continuous
  \begin{align}
   \|\wn_{{2k+2}}-\wn_{{2k+1}} \|
    \le& 64\|\h\|^2 \delln \ .
  \end{align}
\end{proposition}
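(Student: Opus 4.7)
The plan is to prove both inequalities by reducing them to applications of Lemma~\ref{bndcom}, which already controls differences of canonical brackets in terms of differences of the underlying operators. For the first inequality, the task is to relate $\wn_{2k+1}-\wnm_{k+1} = [\d{\hn_{2k}},\o{\hn_{2k}}]-[\d{\hnm_k},\o{\hnm_k}]$ to the unitary discrepancy $\|\un_{2k}-\unm_k\|$. I would apply \eqref{bndcom2} with $\A=\hn_{2k}$, $\B=\hnm_k$; since $\hn_{2k}$ and $\hnm_k$ are unitary conjugates of $\h$, we have $\|\hn_{2k}\|=\|\hnm_k\|=\|\h\|$, so the bound becomes $8\|\h\|\,\|\hn_{2k}-\hnm_k\|$.

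The remaining piece is to bound $\|\hn_{2k}-\hnm_k\|$ in terms of $\|\un_{2k}-\unm_k\|$. I would write
\begin{align}
\hn_{2k}-\hnm_k = \un_{2k}^\dagger\h\un_{2k} - \unm_k{}^\dagger\h\unm_k,
\end{align}
add and subtract $\un_{2k}^\dagger\h\unm_k$, and apply the triangle inequality together with unitary invariance to obtain $\|\hn_{2k}-\hnm_k\|\le 2\|\h\|\,\|\un_{2k}-\unm_k\|$, which combined with the previous step yields the stated bound (absorbing the additional factor of $\|\h\|$ into the overall constant/normalization the paper uses for its bookkeeping).

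For the second inequality, the same Lemma~\ref{bndcom} applied to the successive brackets gives $\|\wn_{2k+2}-\wn_{2k+1}\|\le 8\|\h\|\,\|\hn_{2k+1}-\hn_{2k}\|$. The increment $\hn_{2k+1}-\hn_{2k}$ is controlled because $\hn_{2k+1}=e^{\delln \wn_{2k+1}}\hn_{2k}e^{-\delln \wn_{2k+1}}$ is a short unitary conjugation of $\hn_{2k}$. I would use the fundamental theorem of calculus,
\begin{align}
\hn_{2k+1}-\hn_{2k}=\int_0^{\delln}\!dt\,[\wn_{2k+1},e^{t\wn_{2k+1}}\hn_{2k}e^{-t\wn_{2k+1}}],
\end{align}
and bound the integrand by $2\|\wn_{2k+1}\|\,\|\hn_{2k}\|\le 2\|\wn_{2k+1}\|\,\|\h\|$. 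Finally, $\|\wn_{2k+1}\|=\|[\d\hn_{2k},\o\hn_{2k}]\|\le 4\|\h\|^2$ from \eqref{bndcom1}, which closes the estimate and gives $\|\wn_{2k+2}-\wn_{2k+1}\|\le O(\|\h\|^2\delln)$ with the constant $64$ claimed.

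The main obstacle is essentially bookkeeping: carefully tracking the powers of $\|\h\|$ and the constants through the repeated applications of the pinching-inequality and commutator bounds. No conceptual novelty beyond the two elementary lemmas is required, but one must be careful that the unitary conjugations preserve operator norms so the worst-case constants remain finite and independent of the recursion depth $k$. The Lipschitz-type continuity in the second inequality is what ultimately feeds, via Proposition on reduction to canonical bracket approximation~\eqref{eqh}, into the recursion solved by Lemma~\ref{recursionlemma} and hence into the Cauchy property \eqref{cauchy} used to define $\Uell$.
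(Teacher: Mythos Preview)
Your proof is correct and follows essentially the same route as the paper: apply \eqref{bndcom2} to reduce both bracket differences to Hamiltonian differences, then control $\|\hn_{2k}-\hnm_k\|$ by the unitary discrepancy via a cross-term telescope, and control $\|\hn_{2k+1}-\hn_{2k}\|$ by the smallness of a single flow step. The only cosmetic difference is that the paper bounds the short-time conjugation via $\|e^{\delln\wn_{2k+1}}-\id\|\le\delln\|\wn_{2k+1}\|$ rather than your integral-of-commutator form, but these are equivalent; your observation about an apparent mismatch in the power of $\|\h\|$ is also present in the paper's own bookkeeping and does not affect the subsequent recursion.
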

\begin{proof}
  
Using lemma~\ref{bndcom} we bound it as
\begin{align}
 \|\wn_{{2k+1}}- \wnm_{{k+1}} \|
  \le&8\|\h\|\,\|  \hn_{{2k}}-\hnm_{{k}} \| 
\end{align}
and
\begin{align}
  \|\wn_{{2k+2}}-\wn_{{2k+1}} \|\ \le&8\|\h\|\,\| \hn_{{2k+1}} -\hn_{{2k}}\| \ .
  \end{align}
Here we additionally used that $\|\hnm_k\|=\|\hn_{2k-1}\|=\|\hn_{2k-2}\|=\|\h\|$ due to unitary invariance.

For the first term we observe that by definition $\hn_{2k} = {\un_{2k}}\,^\dagger \h \un_{2k}$
and so
\begin{align}
  \| \hnm_{{k}} - \hn_{{2k}}\| \le&
  \|{\unm_{k}}\,^\dagger \h \un_{2k}-{\un_{2k}}\,^\dagger \h \un_{2k}\|+
  \|{\unm_{k}}\,^\dagger \h \un_{2k}-{\unm_{k}}\,^\dagger \h \unm_{k}\|\\
  \le&2\|\h\|\|\unm_{k} -\un_{2k}\|\ .
\end{align}

For the second term we proceed similarly, noticing that $\hn_{{2k+2}} =e^{\delln \wn_{2k+2}}\hn_{{2k+1}}e^{-\delln\wn_{2k+2}}$
so
\begin{align}
 \| \hn_{{2k+1}} -\hn_{{2k+2}}\| \le& 2\|\h\|\|e^{\delln \wn_{2k+2}}-\id\|\\
  \le& 2\|\h\| \, \|\delln\wn_{2k+2}\|\\
  \le& 8\|\h\|^2 \delln \ .
\end{align}

\end{proof}
\begin{proposition}[Convergence]
  GWW flow approximation $\un_{2^N}$ as a sequence in $N$ converges.
\end{proposition}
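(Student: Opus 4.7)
The plan is to chain together the two previous propositions into a single scalar recursion for the quantity $\alpha_k := \|\unm_k - \un_{2k}\|$, solve that recursion via lemma~\ref{recursionlemma}, and read off a bound on $\alpha_{2^{N-1}} = \|\unm_{2^{N-1}} - \un_{2^N}\|$ that vanishes as $N\to\infty$. This establishes \eqref{cauchy}, hence Cauchy-ness of $\{\un_{2^N}\}$ in operator norm and thereby convergence in the complete Banach space of bounded operators on $\CC^{\times D}$.

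First, I would substitute the two Lipschitz-type estimates of proposition~\ref{bracketapproximation} into the telescoping inequality~\eqref{eqh}. Using $\dellnm = 2\delln$, this gives
\begin{align}
\alpha_{k+1} \le \alpha_k + \dellnm\cdot 16\|\h\|\,\alpha_k + \delln\cdot 64\|\h\|^2\delln = (1+a)\alpha_k + b,
\end{align}
with $a = 32\|\h\|\delln$ and $b = 64\|\h\|^2\delln^2$, and with base case $\alpha_0 = \|\id - \id\| = 0$ coming from $\unm_0 = \un_0 = \id$.

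Next I would invoke lemma~\ref{recursionlemma}, which gives the majorant $\alpha_k \le \frac{(1+a)^k - 1}{a}\,b$. Evaluating at $k = 2^{N-1}$, the exponent in $(1+a)^{2^{N-1}}$ satisfies $a\cdot 2^{N-1} = 32\|\h\|\delln\cdot 2^{N-1} = 16\|\h\|\ell$, so via the standard inequality $(1+a)^k \le e^{ak}$ I obtain
\begin{align}
\|\unm_{2^{N-1}} - \un_{2^N}\| \;\le\; \frac{b}{a}\bigl(e^{16\|\h\|\ell}-1\bigr) \;=\; 2\|\h\|\,\delln\bigl(e^{16\|\h\|\ell}-1\bigr) \;=\; 2\|\h\|\,\ell\, 2^{-N}\bigl(e^{16\|\h\|\ell}-1\bigr).
\end{align}
The right-hand side tends to zero as $N\to\infty$ for every fixed $\ell$, proving \eqref{cauchy} and hence the existence of the limit $\Uell$ in \eqref{eqlimitUell}.

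There is no serious obstacle here once the previous two propositions are in hand; the only mildly delicate point is bookkeeping the discretization parameters, namely recognizing that the prefactor $\delln^2$ in $b$ combined with the number of recursion steps $2^{N-1}$ produces the exponent $16\|\h\|\ell$ independently of $N$, so that the overall bound is linear in $\delln = \ell 2^{-N}$. It is worth flagging that this rate is only $O(2^{-N})$ rather than polynomial in the step size with an $\ell$-independent constant: the factor $e^{16\|\h\|\ell}$ is the hallmark of the recursive discretization and is precisely the phenomenon that lemma~\ref{recursionlemma} was designed to isolate, and that will resurface in proposition~\ref{proplimitGWW}.
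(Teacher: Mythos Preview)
Your proof is correct and follows essentially the same route as the paper: combine the two preceding propositions into the scalar recursion $\alpha_{k+1}\le(1+a)\alpha_k+b$ with $\alpha_0=0$, solve it via lemma~\ref{recursionlemma}, and observe that the resulting bound on $\|\unm_{2^{N-1}}-\un_{2^N}\|$ is $O(\delln)=O(2^{-N})$. The only discrepancy is in the constants: you take $a=32\|\h\|\delln$ from the \emph{statement} of proposition~\ref{bracketapproximation}, whereas the paper's proof uses $a=16\|\h\|^2\delln$, which matches the \emph{proof} of that proposition (the stated bound $16\|\h\|$ is a typo for $16\|\h\|^2$); this is immaterial for the convergence argument.
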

\begin{proof}
By the preceeding propositions we have the recurrence  
\begin{align}
\|\unm_{{k+1}} -\un_{{2k+2}}\| \le&  (1+16\delln \|\h\|^2)\|\unm_{{k}} -\un_{{2k}}\|+ 8\|\h\|^2 \delln^2\ .
  \end{align}
We will denote the subject of th recursion of the upper bound as $\{\alpha_k\}_{k\in\{0,\ldots,2^N\}}$ so that
\def\a{a}
\def\b{b}
\begin{align}
\alpha_{k+1} =&  (1+\a)\alpha_k+ \b\ .
\end{align}
where we defined $\a=16 \|\h\|^2\delln$ and $\b=  64\|\h\|^2 \delln^2$
Observing that $\b=4 \a$ we obtain from lemma~\ref{recursionlemma} the bound
\begin{align}
\|\unm_{{2^{N-1}}} -\un_{{2^N}}\| \le& 4 \left(1+\frac{16 \ell \|\h\|}{2^N}\right)^{2^N}\frac{ \ell}{2^{N}} \le O(\delln)\ .
  \end{align}\
\end{proof}

We continue by proving continuity of the unitary $\Uell$.
\begin{proposition}[Continuity of GWW flow]
For any $\ell\in[0,\infty)$ we have
\begin{align}
 \lim_{\epsilon\rightarrow 0} \U_{\ell+\epsilon} = \Uell\ .
  \label{continuity}
\end{align}
\end{proposition}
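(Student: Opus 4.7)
The plan is to deduce continuity of $\ell\mapsto \Uell$ from the discretization scheme used above to construct it, by viewing $\Uell$ as a uniform (on compact intervals) limit of the maps $\ell\mapsto \un_{2^N}(\ell)$, each of which is continuous for fixed $N$. The classical fact that a uniform limit of continuous functions is continuous then yields \eqref{continuity}.

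First, for each fixed $N$, I would show that $\ell'\mapsto\un_{2^N}(\ell')$ is norm-continuous on $[0,\infty)$. By construction $\un_{2^N}(\ell')$ is the product of $2^N$ factors of the form $e^{-\delln\wn_k(\ell')}$ with step size $\delln=\ell'/2^N$, and with brackets defined inductively by $\wn_1=[\d{\h},\o{\h}]$ and $\wn_{k+1}=[\d{\hn_k(\ell')},\o{\hn_k(\ell')}]$, where $\hn_k(\ell')={\un_k(\ell')}^\dagger\h\,\un_k(\ell')$. Every operation involved---taking diagonal and off-diagonal restrictions, forming commutators, exponentiating anti-Hermitian operators, and composing unitaries---is continuous as a map on $\linops$, and $\delln$ is continuous in $\ell'$. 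A short induction on $k$ then yields continuity of $\ell'\mapsto \un_k(\ell')$ and $\ell'\mapsto\wn_k(\ell')$, and in particular of $\ell'\mapsto\un_{2^N}(\ell')$.

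Second, I would upgrade the Cauchy estimate established in the proof of proposition~\ref{GWWexistenceprop} to uniformity on compact intervals. That estimate bounds $\|\unm_{2^{N-1}}(\ell')-\un_{2^N}(\ell')\|$ by an expression of the form $C(\ell',\|\h\|)\,\ell'/2^N$ in which the prefactor $C(\ell',\|\h\|)$ is monotone nondecreasing in $\ell'$, so that $C(\ell',\|\h\|)\le C(L,\|\h\|)<\infty$ for every $\ell'\in[0,L]$. Summing this geometric-tail bound in $N$ shows that $\{\un_{2^N}(\argdot)\}_N$ is Cauchy in the supremum norm on $[0,L]$, so its pointwise limit $\Uell$ is in fact attained uniformly on $[0,L]$.

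Finally, a uniform limit on $[0,L]$ of the continuous maps $\ell'\mapsto\un_{2^N}(\ell')$ is itself continuous on $[0,L]$; since $L$ was arbitrary, $\ell\mapsto\Uell$ is continuous on $[0,\infty)$, which is exactly \eqref{continuity}. The only real obstacle is the uniformity in $\ell'$ of the Cauchy estimate: the prefactor $C(\ell',\|\h\|)$ grows exponentially in $\ell'$, but this is harmless once $\ell'$ is confined to a compact set. Without such uniformity one would only obtain a pointwise limit of continuous functions, which does not in general suffice to conclude continuity of the limit.
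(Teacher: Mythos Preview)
Your argument is correct and takes a genuinely different route from the paper. The paper proceeds by a direct quantitative estimate: it writes $\|\U_{\ell+\epsilon}-\Uell\|$ as the $N\to\infty$ limit of the difference between the two $2^N$-step discretizations (one with step $s=\ell/2^N$, one with $s'=(\ell+\epsilon)/2^N$), telescopes across the $2^N$ factors, and feeds the resulting one-step recursion into Lemma~\ref{recursionlemma} to obtain an explicit bound of the form $C(\ell)\,\epsilon$ that is independent of $N$. Your approach instead bypasses this second telescoping computation entirely by observing that the Cauchy estimate already established in the proof of Proposition~\ref{GWWexistenceprop} is uniform on compact $\ell$-intervals (its prefactor being monotone in $\ell$), so the discretizations converge uniformly, and the standard uniform-limit-of-continuous-functions theorem finishes the job.

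What each buys: the paper's direct computation yields an explicit Lipschitz-type modulus of continuity, $\|\U_{\ell+\epsilon}-\Uell\|\le C(\ell)\,\epsilon$, which is more informative than bare continuity; your argument is cleaner and avoids repeating the telescoping/recursion machinery, at the cost of not producing a quantitative rate. Since the subsequent propositions (group law, flow equation) only invoke continuity qualitatively, your softer route is entirely adequate for the purposes of this appendix.
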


\begin{proof}
Using the discretization scheme for some fixed $N$ above we set
\begin{align}
 \| \U_{\ell+\epsilon} - \Uell\|
 &= \lim_{N\rightarrow \infty} \| e^{-s\w_1}\ldots e^{-s\w_{2^N}}- e^{-s'\w_1'}\ldots e^{-s'\w_{2^N}'}\|
\end{align}
where we define the discretization step sizes for each target flow duration $s=\ell/N$, $s'=(\ell+\epsilon)/N$ and $\w_k$, $\w_k'$ are the approximants to the canonical bracket for $\Uell$ and  $\U_{\ell+\epsilon}$, respectively.
We will arrive at a bound independent of $N$ which will allow to take the limit $\epsilon\rightarrow 0$.

We use appropriate telescoping which allows to iteratively relate to previous discretization steps and then again use proposition~\ref{bndcom}
\begin{align}
 \| e^{-s\w_1}\ldots e^{-s\w_{2^N}}- e^{-s'\w_1}\ldots e^{-s'\w_{2^N}'}\|
 &\le \| e^{-s\w_1}\ldots e^{-s\w_{2^N-1}}- e^{-s'\w_1'}\ldots e^{-s'\w_{2^N-1}'}\|\\
 &\quad+ \|s\w_{2^N}-s'\w_{2^N}'\|\nonumber\\
 &\le \| e^{-s\w_1}\ldots e^{-s\w_{2^N-1}}- e^{-s'\w_1'}\ldots e^{-s'\w_{2^N-1}'}\|\\
 &\quad+ s\|\w_{2^N}-\w_{2^N}'\| +(s'-s)\|\w_{2^N}'\|\nonumber\\
&\le (1+16\|\h\|^2s)\| e^{-s\w_1}\ldots e^{-s\w_{2^N-1}}- e^{-s'\w_1'}\ldots e^{-s'\w_{2^N-1}'}\|\\
 &\quad+ 64\|\h\|^2 \epsilon/2^N\ .\nonumber
\end{align}
For both flows the total duration differs but the starting direction is the same so $\w_1=\w_1'$ and so using lemma~\ref{recursionlemma}
we find the bound
\begin{align}
 \| \U_{\ell+\epsilon} - \Uell\|
 &\le \lim_{N\rightarrow \infty} 16 \frac{e^{16\|\h\|^2 \ell}-1}\ell  \epsilon
 \end{align}
which linearly converges towards zero for all $\ell\in[0,\infty)$.

\end{proof}

We next use continuity of the GWW flow to prove the following one-parameter group law of GWW flow.
In the following it will be useful to extend the notation above to indicate which initial operator is being flowed.
More specifically, we will write $\Uell(\J)$ for the GWW flow unitary of $\J\in\linops$ or $\un_{{k}}(\J)$ for the entries of the discretization sequence considered.
\begin{proposition}[Group law of GWW flow]
For any $\ell,\ell'\in[0,\infty)$ and any $\J\in\linops$ we have
\begin{align}
  \U_{\ell+\ell'}(\J) = \Uell(\J)\U_{\ell'}(\J_{\ell})\ ,
  \label{grouplaw}
\end{align}
where $\J_\ell = \Uell(\J)^\dagger\J\Uell(\J)$.
\end{proposition}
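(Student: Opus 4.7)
The plan is to leverage the discretization definition of $\U_\ell$ from Eq.~\eqref{eqlimitUell} and observe that the recursive construction is covariant: the $k$-th bracket depends on $\J$ only through the current flowed Hamiltonian, which is itself a unitary conjugation of $\J$. If the discretization grid is chosen to contain the intermediate time $\ell$, the first portion of the discretized flow of $\J$ is literally the discretized flow of $\J$ up to time $\ell$, and the remainder coincides with a fresh discretized flow launched from the intermediate flowed Hamiltonian.

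First, for rational $\ell/\ell'$ I would pick a common step size $s$ and integers $N_1,N_2$ with $N_1 s = \ell$, $N_2 s = \ell'$. Running the discretization of the flow of $\J$ for total duration $\ell+\ell'$ with $N_1+N_2$ steps of size $s$ produces, by construction, the same first $N_1$ steps as the discretization of $\U_\ell(\J)$, so the approximant factorizes as $\un_{N_1+N_2}(\J)=\un_{N_1}(\J)\cdot\tilde\un_{N_2}$ where $\tilde \un_j$ is defined recursively by $\tilde\un_{j+1}=\tilde\un_j e^{-s\tilde\wn_{j+1}}$ starting from $\tilde\un_0=\id$. The essential observation is an easy induction on $j$: if $\un_{N_1+j}=\un_{N_1}\tilde\un_j$ for some unitaries with $\tilde\un_j^\dagger \hn_{N_1}\tilde\un_j=\hn_{N_1+j}$, then
\begin{align}
  \wn_{N_1+j+1}=[\d{\hn_{N_1+j}},\o{\hn_{N_1+j}}]=[\d{\tilde\un_j^\dagger\hn_{N_1}\tilde\un_j},\o{\tilde\un_j^\dagger\hn_{N_1}\tilde\un_j}]
\end{align}
is exactly the canonical bracket of the $j$-th flowed Hamiltonian of a \emph{fresh} discretized flow launched from $\hn_{N_1}$, so the induction closes. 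Consequently $\tilde\un_{N_2}$ is precisely the $N_2$-step discretization of the flow of $\hn_{N_1}$ for duration $\ell'$.

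Second, I would take the refinement limit $N_1,N_2\to\infty$ with $s\to 0$. On the left one obtains $\un_{N_1+N_2}(\J)\to\U_{\ell+\ell'}(\J)$ by the existence proposition. For the right factor two limits are composed: $\un_{N_1}(\J)\to\U_\ell(\J)$ gives $\hn_{N_1}\to\J_\ell$ in operator norm, and simultaneously the discretized flow of $\hn_{N_1}$ for duration $\ell'$ converges to $\U_{\ell'}(\J_\ell)$. To combine these cleanly I would split
\begin{align}
  \|\tilde\un_{N_2}-\U_{\ell'}(\J_\ell)\| \le \|\tilde\un_{N_2}-\U_{\ell'}(\hn_{N_1})\|+\|\U_{\ell'}(\hn_{N_1})-\U_{\ell'}(\J_\ell)\|,
\end{align}
where the first term vanishes by the Cauchy-type bound of the existence proposition applied with initial operator $\hn_{N_1}$ (crucially, the bounds only use $\|\h\|$, and $\|\hn_{N_1}\|=\|\J\|$ by unitary invariance, so the convergence is uniform in $N_1$), and the second term is handled by Lipschitz continuity of the GWW flow unitary in its initial operator, which is proved by the same telescoping/Grönwall argument that established continuity in $\ell$ in the proposition on \eqref{continuity}.

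Third, for irrational ratios $\ell/\ell'$ I would approximate by rationals $\ell_n\to\ell$, $\ell_n'\to\ell'$, invoke the group law on the rational case, and pass to the limit using the continuity of $\U_\ell$ in $\ell$ (already proved) together with the same Lipschitz dependence on the initial operator to handle $\U_{\ell_n'}(\J_{\ell_n})\to\U_{\ell'}(\J_\ell)$. The main obstacle is isolating this Lipschitz dependence on the initial operator, but it is not deep: reusing the telescoping used for continuity in $\ell$ and lemma~\ref{bndcom} to control differences of successive brackets, one recovers a Grönwall-type bound with constants depending only on $\|\J\|$ and $\ell$, which suffices.
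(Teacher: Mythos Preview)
Your approach is correct and shares the same skeleton as the paper's: exploit the covariance of the discrete recursion to obtain a factorization at the discrete level, control the dependence of the discretized flow on its initial operator via a Grönwall/telescoping estimate (your Lipschitz-in-initial-data step is exactly what the paper does), and then extend by continuity. The one structural difference is the choice of split. The paper reduces everything to the midpoint identity $\U_\ell(\J)=\U_{\ell/2}(\J)\,\U_{\ell/2}(\J_{\ell/2})$, which fits naturally into the $2^N$-step discretization already established in the existence proof, and then iterates this along a binary expansion of $\ell/(\ell+\ell')$ before invoking continuity. You instead go directly to arbitrary rational ratios by choosing a common step size, which is cleaner but tacitly uses that equispaced discretizations with an \emph{arbitrary} number of steps (not just $2^N$) converge to $\U_\ell$; in the paper this is only established later as Proposition~\ref{propdisc}. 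Either route works: yours is more direct once arbitrary-$N$ convergence is in hand, while the paper's stays self-contained within the binary framework available at that point in the existence section.
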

In other words the GWW flow of $\J$ with total flow duration $\ell +\ell'$ is the GWW flow of $\J$ with duration $\ell$ followed by the GWW flow of $\J_\ell$ with duration $\ell'$.

\begin{proof}
It suffices to show a simpler property, namely
\begin{align}
  \U_{\ell}(\J) = \U_{\ell/2}(\J)\U_{\ell/2}\bigl(\U_{\ell/2}(\J)^\dagger\J\U_{\ell/2}(\J)\bigr)\ .
  \label{binarysplit}
\end{align}
This is because we can iterate the bound according to a convergent binary representation of $0\le \ell/(\ell+\ell')\le 1$.
The above equality holds for any finite binary approximation to $\ell/(\ell+\ell')$ and so by continuity of the GWW flow the limit yields the group law \eqref{grouplaw} for any $\ell,\ell'\in[0,\infty)$.

For the proof of this simpler relation, let us indicate which operator is the starting point of the flow also for the entries of the sequence coverging onto the GWW flow, e.g., $\un_{{k}}(\J)$.
The flow duration discretization is binary and so we find the prototype of the group law to be
\begin{align}
  \un_N(\J) = \un_{2^{N-1}}(\J)\un_{2^{N-1}}(\un_{2^{N-1}}(\J)^\dagger\J\un_{2^{N-1}}(\J))\ .
\end{align}
Indeed, the first operator on the right hand side converges to
\begin{align}
  \un_{2^{N-1}}(\J)\rightarrow \U_{\ell/2}(\J)
\end{align}
and it remains to show that 
\begin{align}
  \un_{2^{N-1}}(\un_{2^{N-1}}(\J)^\dagger\J\un_{2^{N-1}}(\J))\rightarrow  \U_{\ell/2}(\J_{\ell/2})\ .
\end{align}
\def\Jn{\J^{\mtiny{(N)}}}
Let us define
\begin{align}
  \Jn = \un_{2^{N-1}}(\J)^\dagger\J\un_{2^{N-1}}(\J)
\end{align}
which allows us to write
\begin{align}
  \| \un_k(\Jn) - \un_k(\J_{\ell/2})\|
  &\le \| \un_{k-1}(\Jn) - \un_{k-1}(\J_{\ell/2})\| +\delln \| \w_k(\Jn) - \w_k(\J_{\ell/2})\|\\
  &\le (1+16 \|\J\|^2 \delln) \| \un_{k-1}(\Jn) - \un_{k-1}(\J_{\ell/2})\| \ 
\end{align}
in analogy with proposition~\ref{bracketapproximation}.
Iterating this, we obtain
\begin{align}
  \| \un_{2^{N-1}}(\Jn) - \un_{2^{N-1}}(\J_{\ell/2})\|
  &\le (1+16 \|\J\|^2 \delln)^{2^N} \| \un_{1}(\Jn) - \un_{1}(\J_{\ell/2})\| \\
  &\le 16\|\J\|^2 e^{16 \|\J\|^2 \ell} \| \un_{2^{N-1}}(\J)^\dagger\J\un_{2^{N-1}}(\J)-\J_{\ell/2}\| \delln \ .
\end{align}
The convergence result above shows that this upper bound converges towards zero which proves formula~\eqref{binarysplit}.
\end{proof}

The group law of the GWW flow allows us to prove that the limit unitaries $\Uell$ defined in Eq.~\eqref{eqlimitUell} are indeed GWW flow unitaries because they statisfy the
\begin{proposition}[GWW flow equation]For any $\ell\in[0,\infty)$
\begin{align}
  \partial_\ell \Uell = -\Uell \Well\ .
\end{align}
 \end{proposition}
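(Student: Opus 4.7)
The plan is to reduce the derivative at arbitrary $\ell$ to the derivative at $\ell=0$ via the one-parameter group law \eqref{grouplaw}, and then to identify the initial slope of $\U_\epsilon(\J)$ with $-\W(\J)=-[\d\J,\o\J]$ by going back to the binary discretization used to define $\Uell$.

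First, I would apply the group law with $\ell'=\epsilon$: writing $\U_{\ell+\epsilon}(\h)=\Uell(\h)\,\U_\epsilon(\Hell)$ one has
\begin{align}
\frac{\U_{\ell+\epsilon}-\Uell}{\epsilon}=\Uell\cdot \frac{\U_\epsilon(\Hell)-\id}{\epsilon}\ ,
\end{align}
so it suffices to prove the seed identity $\partial_\epsilon \U_\epsilon(\J)\big|_{\epsilon=0}=-\W(\J)$ for an arbitrary $\J\in\linops$, applied to $\J=\Hell$.

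Next, I would establish the seed identity using the discretization of Sec.~\ref{app:existence}. For the flow of duration $\epsilon$ with $N$ equal sub-steps of size $\delln=\epsilon/2^N$, the approximant is $\un_{2^N}(\J)=e^{-\delln \wn_1}\cdots e^{-\delln \wn_{2^N}}$ with $\wn_1=[\d\J,\o\J]=\W(\J)$. Proposition~\ref{bracketapproximation} (applied with $\J$ in place of $\h$) gives $\|\wn_{k+1}-\wn_k\|\le 64\|\J\|^2\delln$, hence $\|\wn_k-\W(\J)\|\le 64\|\J\|^2\epsilon$ uniformly in $k$. Replacing each $\wn_k$ by $\W(\J)$ in the ordered product costs $2^N\cdot\delln\cdot 64\|\J\|^2\epsilon=O(\epsilon^2)$ by the Lipschitz bound $\|e^{-\delln \wn_k}-e^{-\delln \W(\J)}\|\le\delln\|\wn_k-\W(\J)\|$ together with a telescoping sum. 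Passing $N\to\infty$ and using the definition $\U_\epsilon(\J)=\lim_N \un_{2^N}(\J)$ then yields
\begin{align}
\U_\epsilon(\J)=e^{-\epsilon\W(\J)}+O(\epsilon^2)=\id-\epsilon\W(\J)+O(\epsilon^2)\ ,
\end{align}
where the implicit constant depends only on $\|\J\|$. Dividing by $\epsilon$ and sending $\epsilon\to 0$ gives $\partial_\epsilon \U_\epsilon(\J)\big|_{\epsilon=0}=-\W(\J)$.

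Finally, combining the two steps with $\J=\Hell$ and using $\W(\Hell)=\Well$ yields $\partial_\ell \Uell=-\Uell\Well$, and the symmetric argument with $\epsilon<0$ (justified by continuity of the flow, Proposition on continuity, and by running the group law as $\U_{\ell}=\U_{\ell-\epsilon}\U_\epsilon(\H_{\ell-\epsilon})$) establishes the two-sided derivative. The main technical obstacle is the $\epsilon^2$ control on the difference between the ordered product with the varying generators $\wn_k$ and the single-exponential $e^{-\epsilon\W(\J)}$: the naive telescoping picks up a factor of $2^N$ sub-steps but each contributes only $O(\delln\cdot\|\J\|^2\epsilon)=O(\epsilon^2/2^N)$, so the sum is $O(\epsilon^2)$ uniformly in $N$, which is exactly what is needed before taking the discretization limit.
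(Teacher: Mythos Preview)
Your proposal is correct and follows essentially the same route as the paper's proof: both reduce via the group law to showing $\U_\epsilon(\Hell)=e^{-\epsilon\Well}+O(\epsilon^2)$, and both establish this by telescoping the discretized ordered product against the single exponential, using the Lipschitz bound $\|\wn_{k}-\wn_{1}\|\le C\,\delln\,k\le C\,\epsilon$ on the approximate brackets to obtain an $O(\epsilon^2)$ error uniform in $N$. Your treatment of the two-sided derivative via $\U_{\ell}=\U_{\ell-\epsilon}\,\U_\epsilon(\H_{\ell-\epsilon})$ is a small but welcome addition that the paper leaves implicit.
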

\noindent This indeed implies the GWW flow equation for the Hamiltonian because
\begin{align}
  \partial_\ell \Hell &= -\Uell^\dagger \h \Uell \Well+  \Well\Uell^\dagger \h \Uell \Well\\
  &= [\Well,\Hell]\ .
\end{align}
\begin{proof}
  Using the group law \eqref{grouplaw} we have
  \begin{align}
  \partial_\ell \Uell(\h)
  &= \lim_{\epsilon\rightarrow 0} \frac{ \U_{\ell+\epsilon}(\h)-\Uell(\h)}\epsilon\\
  &=\Uell(\h) \lim_{\epsilon\rightarrow 0} \frac{ \U_{\epsilon}(\Hell)-\id}\epsilon\ .
  \end{align}
The proposition statement is equivalent to demonstrating that the following vanishes
\begin{align}
  \lim_{\epsilon\rightarrow 0} \left\|\frac{ \U_{\epsilon}(\Hell)-\id}\epsilon+\Well\right\|
  &\le \lim_{\epsilon\rightarrow 0}
  \left( \left\|\frac{ e^{-\epsilon \Well} -\id}\epsilon+\Well\right\| +
   \left\|\frac{ \U_{\epsilon}(\Hell)-e^{-\epsilon \Well} }\epsilon\right\| \right) \\
  &\le \lim_{\epsilon\rightarrow 0}
   \epsilon^{-1}\left\| \U_{\epsilon}(\Hell)-e^{-\epsilon \Well}\right\| \ .
\end{align}
Here the first term was the regular Lie derivative of the matrix exponential and so we need to establish that the deviation from a constant generator vanishes for $\epsilon\rightarrow 0$.
To prove this, let us first notice that $\Well = \wn_1(\Hell)\equiv\wn_1$ and so we have
\begin{align}
\| \U_{\epsilon}(\Hell)-e^{-\epsilon \Well}\| = \lim_{N\rightarrow \infty} \|e^{-\delln \wn_1}\ldots e^{-\delln \wn_{2^N}} - e^{-\epsilon \wn_1}\|  \ .
\end{align}
Through telescoping we have for any fixed $N$
\begin{align}
\|e^{-\delln \wn_1}\ldots e^{-\delln \wn_{2^N}} - e^{-\epsilon \wn_1}\|  
&\le \sum_{k=1}^{2^N-1} \left\|
e^{- k \delln \wn_1}e^{-  \delln \wn_{k+1}}\ldots e^{-\delln \wn_{2^N}}\right. \\
&\quad\quad\quad\quad-\left.
e^{- (k+1) \delln \wn_1}e^{-  \delln \wn_{k+2}}\ldots e^{-\delln \wn_{2^N}} \right\|\nonumber\\
&\le \sum_{k=1}^{2^N-1} \|\wn_1-\wn_{k+1}\| \delln \ .
\end{align}
Again by telscoping we have
\begin{align}
  \|\wn_k - \wn_1\|&\le \sum_{k'=2}^k\|\wn_{k'}-\wn_{k'-1}\|\\
  &\le 16\|\h\|\sum_{k'=2}^k\|e^{-\delln\wn_{k'-1}}-\id\|\\
  &\le 1024\|\h\|^3 \delln k \ 
\end{align}
and so
\begin{align}
\|e^{-\delln \wn_1}\ldots e^{-\delln \wn_{2^N}} - e^{-\epsilon \wn_1}\|
&\le  1024\|\h\|^3 ( \delln)^2  \sum_{k=1}^{2^N-1} k\\
&\le  1024\|\h\|^3 \epsilon^2\ .
\end{align}
This bound is independent of $N$ and quadratic in $\epsilon$ and thus it implies that 
\begin{align}
  \lim_{\epsilon\rightarrow 0}
   \epsilon^{-1}\left\| \U_{\epsilon}(\Hell)-e^{-\epsilon \Well}\right\| = 0\ .
\end{align}
\end{proof}

\begin{proof}[Proof of proposition~\ref{GWWexistenceprop} (Existence of GWW flow)]
Summarizing, the limit unitaries $\Uell$ defined in Eq.~\eqref{eqlimitUell} exist and they satisfy the GWW flow equation which proves the existence of the GWW flow.
\end{proof}

\end{document}